%% file: paper_draft.tex
\documentclass{article}
\usepackage{amsopn}
\usepackage[english]{babel}
\usepackage[margin=1.in]{geometry}
\usepackage{amsthm}
\input{preambule.tex}
\input{commands.tex}
\definecolor{matchaColor}{HTML}{74A12E} % arbitrary hex code, no #
\newcommand{\inprod}[2]{\left\langle #1,\,#2\right\rangle}

\newcommand{\matcha}{\includegraphics[height=1em]{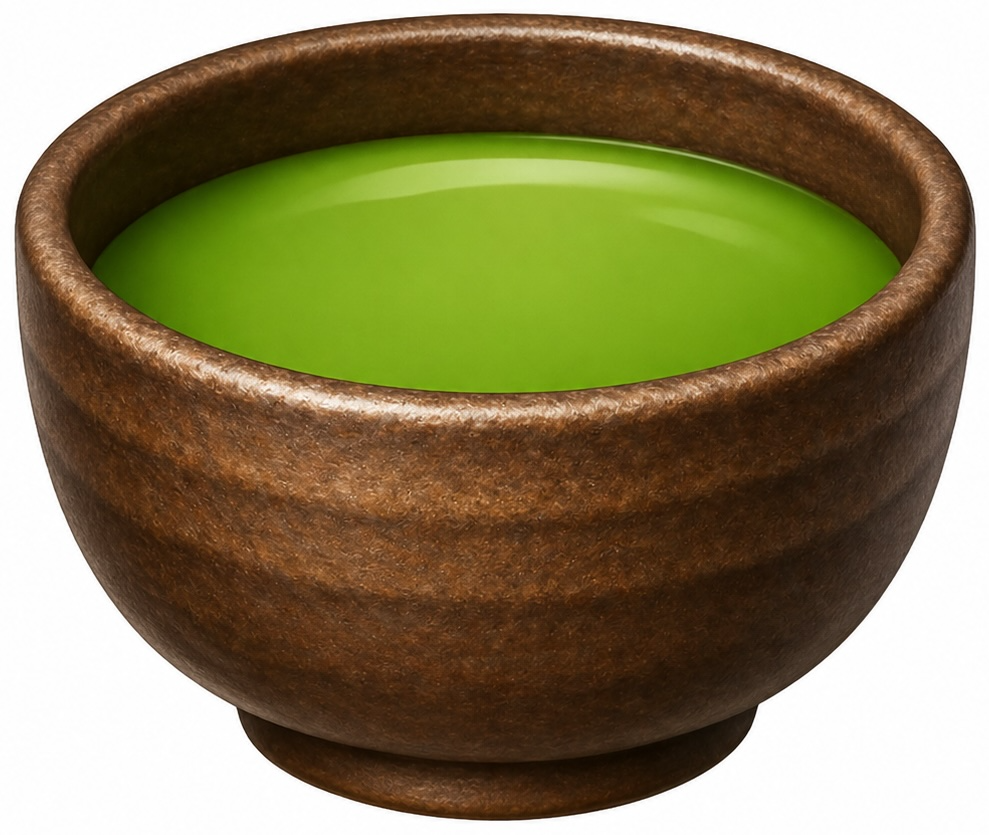}}
\title{\matcha{} Fast Volume Alignment by \\Frequency-Marched Newton Method}
\date{}
\author{Fabian Kruse\thanks{Department of Mathematics and Computer Science, University of Basel, Basel, 4051, Switzerland}
\and Valentin Debarnot\thanks{INSA‐Lyon, Universit\'e Claude Bernard Lyon 1, CNRS, Inserm, CREATIS UMR 5220, U1294, Lyon, France}
\and Vinith Kishore\footnotemark[1]
\and Ivan Dokmanić\footnotemark[1]}

\def\method{Matcha}

\begin{document}

\maketitle

\begin{abstract}
We introduce \emph{Matcha}, a fast method for rotational pose estimation in three-dimensional alignment, and combine it with FFT-based translation updates for full pose estimation. Classical matched filtering evaluates cross-correlation over a large discretized transformation space; we instead treat rotational alignment as a continuous optimization problem on \(\SO(3)\). Matcha starts from a bandlimited Wigner--\(D\) expansion of the rotational correlation, which enables rapid objective evaluation together with analytic gradients and Hessians.
A low-bandwidth SOFFT search provides robust candidate rotations, which are then refined by frequency marching: the angular bandwidth is progressively increased, and candidates are updated by Newton steps at each level. This confines exhaustive search to a single low-frequency stage while allowing the final accuracy to be determined by continuous refinement rather than by the grid spacing.
We prove a deterministic conditional guarantee showing that, under reasonable assumptions, Matcha returns a near-optimal solution for the final bandlimited objective. On synthetic rotation-estimation benchmarks, Matcha attains sub-degree accuracy while substantially reducing runtime relative to exhaustive \(\SO(3)\) search. Integrated into a RELION-5 subtomogram-averaging workflow, it matches the baseline reconstruction quality on the tested dataset, reaching the same Nyquist-limited local resolution while reducing rotational pose-refinement time by more than an order of magnitude.
\end{abstract}

\section{Introduction}
We consider the problem of estimating the pose--rotation and translation--of a particle in a three-dimensional volume, under severe noise and experimental artifacts. Our primary motivation comes from subtomogram averaging (STA) in cryogenic electron tomography (cryo-ET), where highly accurate particle alignment is essential to achieve high resolution. In practice, alignment is performed for every particle and is typically repeated across several refinement iterations, which makes even a modest per-particle cost compound and motivates the search for methods that are at once accurate and computationally efficient.

A classic approach to this problem is \emph{matched filtering}, which finds the transformation that maximizes the correlation between a reference template and the observed data.
Matched filtering is optimal for detecting a deterministic signal corrupted by additive white Gaussian noise, in the sense that it maximizes the output signal-to-noise ratio \cite{turin_matched_1960,kay_detection_1998,debarnot2023blind}.

Letting $g\in\Gc$ denote the unknown transformation to recover, the matched filtering objective is the correlation
\begin{equation}\label{eq:cc}
    \CC(g) \eqdef \left\langle f ,g \circ h\right\rangle = \int_{\mathbb{R}^d} f(x)\,\overline{h(g^{-1}x)}\,dx,
\end{equation}
where $f,h:\mathbb{R}^d\to\mathbb{R}$ denote the noisy measurement and the reference template; we consider $d=3$.
Pose estimation then solves
\begin{equation}\label{eq:maxcc}
    g^\star \in \argmax_{g\in\Gc} \CC(g).
\end{equation}

A straightforward strategy is to evaluate $\CC$ over a suitable discretization of $\Gc$. This may be effective for a coarse search over low-dimensional pose spaces, but already for 3D rotations, a grid with angular spacing \(\Delta\) requires on the order of \(1/\Delta^{3}\) samples; including translations increases the exponent to 6. Correlation can sometimes be evaluated more efficiently on a structured
grid using fast Fourier transforms on translations and rotations via Wigner/\(\SO(3)\) FFTs~\cite{kostelec_ffts_2008,price_differentiable_2024}.
For a pure grid-search strategy, however, the returned pose is limited by the
grid spacing.
An alternative is to treat pose estimation as a \emph{continuous optimization problem} over $\Gc$.
When $\Gc$ has a differentiable structure--such as $\SO(3)$ or $\mathbb{R}^3$--the correlation function admits well-defined gradients and Hessians, enabling efficient gradient-based and second-order optimization methods.
Direct second-order optimization of the full high-resolution rotational correlation is less common in large-scale STA workflows: the correlation landscape is highly non-convex with many spurious local maxima, and it becomes increasingly ill-conditioned at high resolution, in the sense that the local curvature near prominent maxima becomes highly anisotropic (i.e., the Hessian has a large condition number), making second-order steps sensitive to noise and numerical error. Success thus depends on effective initialization.

To address this challenge, we exploit the fact that smoother input volumes induce smoother correlation landscapes. Bandlimiting the inputs to angular degree $L$ produces a bandlimited correlation objective $\CC_L$. Lower $L$ yields a smoother landscape and, in typical imaging problems, fewer prominent local maxima and lower curvature, so the basin of attraction of the global maximizer can be identified via a coarse search at moderate cost. 
The theoretical and algorithmic core of the paper concerns rotational alignment. In the full subtomogram-averaging pipeline, we combine this fast rotational refinement with standard FFT-based translation updates.
We call the resulting procedure \method{}. \method{} first performs a coarse exhaustive search at low bandwidth $L_0$ to identify a set of candidate rotations. The bandwidth is then progressively increased through a schedule $L_0 < L_1 < \cdots < L_J$ (this is sometimes called \emph{frequency marching}). The key point is that at each stage the candidates are refined via Newton steps on $\CC_{L_j}$, initialized from the previous level. Given the favorable regularity of bandlimited functions, the Newton method converges rapidly.
At low and intermediate bandwidths, the correlation landscape is smooth and well-conditioned, and typically only $1$--$2$ Newton steps per bandwidth suffice (Section~\ref{sec:experiments}).
Because $\Gc$ is low-dimensional (three-dimensional for $\SO(3)$), the Hessian is a $3\times 3$ matrix, so solving the Newton system has negligible cost compared with evaluating the Wigner--\(D\) expansion. By confining exhaustive search to a single low-frequency stage and using continuous optimization for refinement, this approach achieves sub-degree angular precision at substantially reduced computational cost compared with exhaustive high-resolution search. We emphasize that the final accuracy is determined by the continuous refinement, not by the resolution of the initial grid, so \method{} can achieve precision far beyond what the coarse SOFFT grid alone would permit.

Figure~\ref{fig:illustration} gives a schematic one-dimensional illustration of the principle behind \method{}. At low frequencies, the correlation landscape is smooth and contains few local maxima, enabling reliable coarse search. As higher frequencies are introduced, the landscape becomes increasingly oscillatory. By tracking dominant maxima across bandwidths and refining them via local optimization, the method aims to recover the optimal pose without exhaustive search at full resolution.

\begin{figure}
	\centering
	    \begin{tikzpicture}[spy using outlines={circle,orange,magnification=4,size=1.5cm, connect spies}]
            \node[rotate=0, line width=0.05mm, draw=white] at (0,0) {\includegraphics[height=6cm]{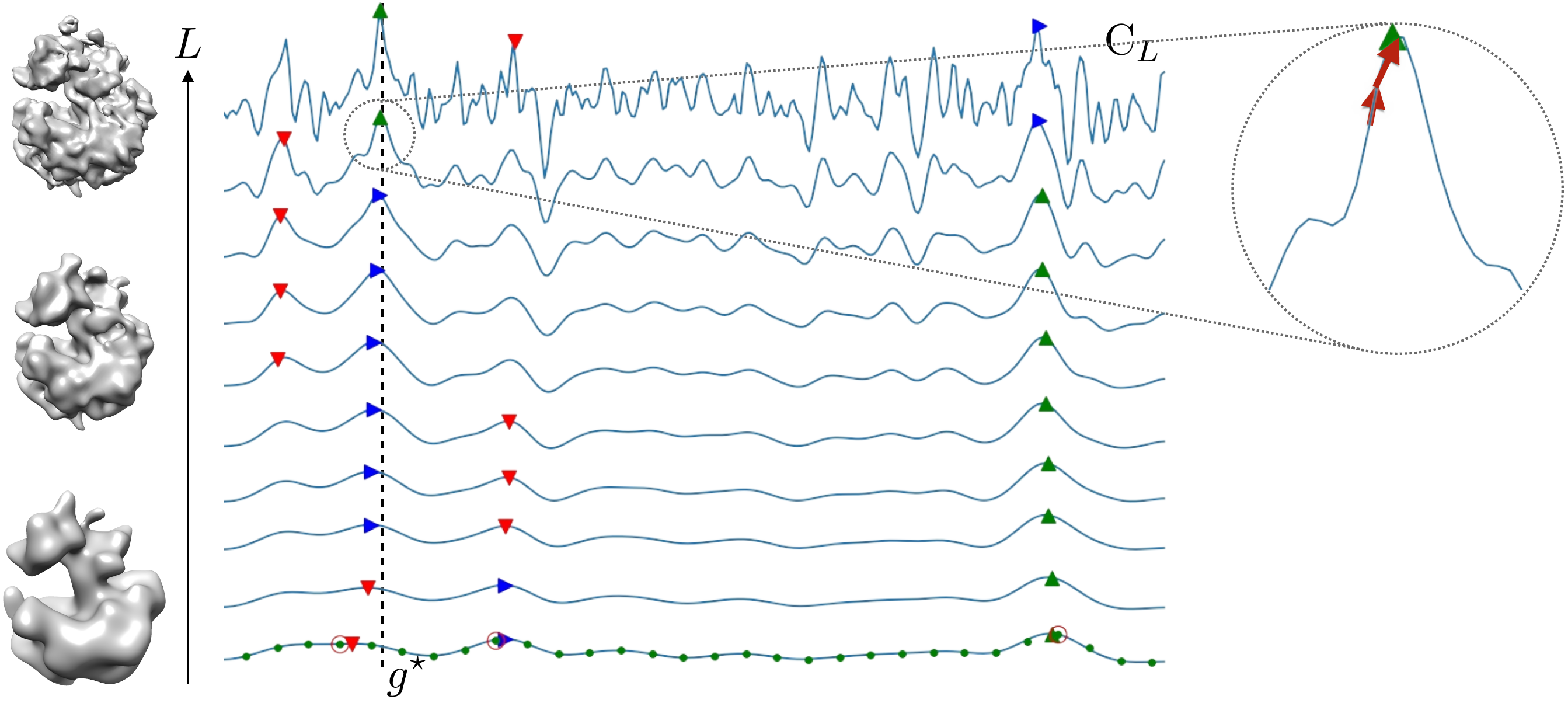}};
    		  \node[rotate=0, line width=0.05mm, draw=white, anchor=west] at (3.5,-1.3) {{\footnotesize{\textcolor{darkgreen}{$\blacktriangle$} First local maximum}}};
            \node[rotate=0, line width=0.05mm, draw=white, anchor=west] at (3.5,-1.8) {{\footnotesize{\textcolor{blue}{$\blacktriangleright$} Second local maximum}}};
            \node[rotate=0, line width=0.05mm, draw=white, anchor=west] at (3.5,-2.3) {{\footnotesize{\textcolor{red}{$\blacktriangledown$} Third local maximum}}};
            \node[rotate=0, line width=0.05mm, draw=white, anchor=west] at (3.5,-2.8) {{\footnotesize{\textcolor{red}{$\circ$} Best local maximum}}};
            \node[rotate=0, line width=0.05mm, draw=white] at (5,-0.4) {{\footnotesize{Newton method}}};
    		 \draw [-stealth](1,-2.9) -- (0.2,-2.65);
            \draw [-stealth](1,-2.9) -- (1.,-2.6);
            \draw [-stealth](1,-2.9) -- (1.55,-2.6);
    		\node[rotate=0, line width=0.05mm, draw=white] at (1,-3.1) {{\small{Exhaustive search}}};
            
            \node[rotate=0, line width=0.05mm, draw=white, anchor=west] at (-5.75,3) {{\footnotesize{Bandwidth}}};
            \node[rotate=0, line width=0.05mm, draw=white, anchor=west] at (1.5,3) {{\footnotesize{Correlation landscape}}};
		\end{tikzpicture} 
    
    \caption{Evolution of a 1D correlation landscape under progressive bandwidth expansion. Each curve corresponds to a different bandwidth, with lower bandwidths producing smoother objectives and typically fewer local maxima. A coarse exhaustive search (evaluated on a discretization depicted in green) at low bandwidth identifies local maxima. As the bandwidth increases, these local maxima are tracked across scales using frequency marching, while local optimization refines the solution at each stage. This strategy avoids exhaustive search on highly oscillatory high-frequency landscapes while achieving high-precision alignment. Left insets show the corresponding low-pass-filtered particle, illustrating how low angular bandwidth suppresses high-frequency detail.  }\label{fig:illustration}
\end{figure}

\subsection{Related work}

Estimating three-dimensional rotations and translations is a central problem in cryo-EM and cryo-ET alignment. 
In practical refinement pipelines, pose estimation is usually performed through a combination of coarse search, local refinement, and progressive resolution increase. 
RELION~\cite{scheres2012relion} has played a central role in this ecosystem, both for single-particle refinement and, more recently, for end-to-end cryo-ET subtomogram averaging in RELION-5~\cite{burt2024image}. 
In RELION-style workflows, subtomogram refinement is embedded in a larger processing pipeline that includes particle extraction, half-set refinement, CTF refinement, polishing or tomography refinement, classification, reconstruction, and post-processing. 
\method accelerates the main computational bottlenecks in this pipeline: rotational pose refinement inside a subtomogram-averaging workflow.

Fast algorithms for pose search often leverage harmonic analysis on transformation spaces. For translations, correlations can be evaluated efficiently by the Euclidean FFT. 
For rotations, FFTs on \(\SO(3)\) based on Wigner--\(D\) expansions enable fast evaluation of correlations over dense rotational grids~\cite{kostelec_ffts_2008}. 
Related spherical and rotational harmonic transform libraries, including S2FFT~\cite{mcewen2015novel,price_differentiable_2024}, provide accelerated transforms on the sphere and on \(\SO(3)\). 
More generally, fast Fourier-transform-based approaches have been used for template matching and alignment over transformation spaces~\cite{kovacs2002fast,kovacs2003fast,estrozi2008fast,bartesaghi_classification_2008}. Still, dense high-resolution searches over rotations and translations remain expensive when repeated for every particle across multiple STA refinement iterations.
Our implementation builds on the ball-harmonic expansion routines of Kileel et al.~\cite{kileel2025fast}, extending their framework with a GPU-integrated PyTorch pipeline and accelerated NUFFT-based projections via cuFINUFFT~\cite{shih2021cufinufft}.

The use of low-resolution information to guide high-resolution refinement is well established in cryo-EM. FREALIGN performs local continuous refinement of pose parameters against resolution-truncated references, exploiting the fact that low-resolution signal is often sufficient to guide pose refinement while the final estimate is not tied to a coarse angular grid~\cite{grigorieff_frealign_2007}. 
cryoSPARC later introduced a branch-and-bound refinement strategy for global pose search, based on the principle that poses scoring poorly at low resolution are unlikely to become optimal at high resolution~\cite{punjani2017cryosparc}. A similar frequency-marching idea has been used in ab initio single-particle reconstruction by Barnett et al.~\cite{barnett2017rapid}.

This intuition is closely related to our use of a low-angular-bandwidth objective to identify candidate basins. 
What we do is different: after a single low-bandwidth exhaustive search, we refine candidates continuously on \(\SO(3)\) using closed-form Newton steps applied to the Wigner--\(D\) expansion, and we provide deterministic conditions under which the tracked basins persist across the bandwidth schedule. In this way we minimize the coupling between target resolution and bandwidth.

There also exist methods specifically designed for subtomogram alignment in cryo-ET. Some approaches accelerate alignment by combining FFT-based searches over rotations and translations~\cite{kovacs2003fast,estrozi2008fast,bartesaghi_classification_2008}, while others incorporate features of the tomographic imaging process, such as the missing wedge~\cite{shatsky2013optimal} or spatially varying density variance~\cite{xu2012high}. 
Chen et al.~\cite{chen_fast_2013} introduce a fast subtomogram alignment method that is robust to the choice of initial reference; in particular, the initial reference can be formed from the data itself and is iteratively updated by averaging the aligned subtomograms. 
These works address the broader subtomogram-alignment problem, including reference generation, missing-information effects, or translation search. 
Continuous pose refinement has also been explored in cryo-EM. 
Zehni et al.~\cite{zehni2020joint} use alternating gradient-based optimization and ADMM for joint map--orientation estimation. 

\section{Method} \label{sec:methods}

We start by introducing the alignment problem and its spectral formulation; we then present our coarse-to-fine optimization strategy. Motivated by subtomogram alignment in cryo-electron tomography, the main text develops the rotational component of the method where $\Gc=\SO(3)$. The full six-dimensional alignment problem is handled in practice by alternating this rotational update with FFT-based translation updates, described in Appendix~\ref{sec:appendix-6d-search}.
The rotation $g\in\Gc$ acts on a function $f:\mathbb{R}^3\to \mathbb{R}$ as
\begin{equation*}
    (g \circ f)(x) = f\left(g^{-1}x\right).
\end{equation*}
We let $\Bc_3 \subset \mathbb{R}^3$ denote the unit ball and as before $f \in L^2(\Bc_3)$ the observed subtomogram and $h \in L^2(\Bc_3)$ the reference template. The $L^2(\Bc_3)$ inner product is $\langle f,h\rangle \eqdef \int_{\Bc_3} f(x)\,\overline{h(x)}\,dx$.

The orientation estimation problem is formulated as
\begin{equation}
\label{eq:maxcc_rot}    
\hat g \in \argmax_{g\in \Gc} \CC(g), 
\qquad 
\CC(g) \eqdef \left\langle f ,g \circ h\right\rangle.
\end{equation}

The analysis in Section~\ref{sec:analysis} is carried out intrinsically on \(\SO(3)\), independently of any choice of coordinates. The explicit Euler-angle chart used in the implementation, together with the resulting closed-form derivatives, is described in Section~\ref{sec:continuous-opt} and Appendices~\ref{app:parametrization}--\ref{app:d-matrix}.

\subsection{Ball-harmonic decomposition of the rotational cross-correlation}

We decompose $f$ and $h$ in a ball-harmonic basis, which separates radial and angular variables. Under a rotation, the radial basis functions remain unchanged, while for each degree $\ell$, the spherical-harmonic coefficients are mixed by the Wigner--\(D\) matrices. After summing over the radial index, the rotational cross-correlation can therefore be expressed as a Fourier series on \(\SO(3)\) in the Wigner--\(D\) basis.

Truncating the angular content at degree $L$, we approximate the rotational cross-cor\-re\-la\-tion~\eqref{eq:maxcc_rot} by
\begin{equation}\label{eq:cc-L}
\CC_{L}(g) \eqdef  
\sum_{\ell = 0}^{ L} 
\sum_{|m|\leq \ell} 
\sum_{|m'|\leq \ell} 
\sigma_{\ell,m,m'} \, D_{m m'}^\ell(g),
\quad g \in \Gc.
\end{equation}
Here $D^\ell_{m m'}$ are the matrix coefficients of the irreducible unitary representation of $\SO(3)$ of degree $\ell$. By the Peter--Weyl theorem, they form the analogue of the Fourier basis on $\SO(3)$. The expansion in \eqref{eq:cc-L} is therefore a Fourier series for the rotational cross-correlation, and truncation at degree $L$ amounts to retaining only frequencies up to $L$. In particular, $\CC_L \to \CC$ in $L^2(\SO(3))$ as $L \to \infty$.
The coefficients \(\sigma_{\ell,m,m'}\) are obtained by summing products of the
ball-harmonic coefficients of \(f\) and \(h\) over the retained radial index;
their explicit form is given in Appendix~\ref{sec:appendix-rotation-cc}.

The cutoff \(L\) is an angular degree cutoff in the Wigner--\(D\) expansion, not
a spatial frequency in inverse pixels or inverse \angstrom. It controls the angular
complexity of the rotational score. Its relation to conventional spatial
resolution depends on object radius and is therefore not described by a single
global \angstrom{} value.
 
The representation in Equation~\ref{eq:cc-L} is also convenient for optimization: first- and second-order derivatives can be evaluated by summing \(\Oc(L^3)\) Wigner terms. In the coarse-to-fine scheme, we start from a small cutoff $L_0$---yielding a smoothed, low-fre\-quen\-cy objective---and progressively increase $L$, to enrich the angular content of $\CC_L(g)$.
The approximation at each new bandwidth is obtained simply by extending the sum over the precomputed coefficients $\sigma_{\ell,m,m'}$ to include higher degrees; no recomputation of lower-degree coefficients is needed.
 
\subsection{Continuous optimization}\label{sec:continuous-opt}
Because the cross-correlation $\CC$ is non-convex with multiple critical points,  standard local optimization methods will in general not converge to its global maxima from an arbitrary initialization. This, however, becomes possible provided that a good initial estimate is available. Finding a good initial estimate is in turn easier for the bandlimited cross-correlation $\CC_L$, since lower-bandwidth functions generically have fewer critical points and a smoother landscape, which translates into a larger basin of attraction of a global maximum. 

A standard modern approach to continuous optimization is to use automatic differentiation and gradient ascent: it only requires computing gradients and it typically makes progress from a coarse initialization. However, on landscapes such as those induced by bandlimited trigonometric polynomials, second-order methods can be orders of magnitude faster than gradient ascent, which may require many small steps to leverage the curvature information. This is especially relevant in our setting, where the cost of evaluating the objective and its derivatives is dominated by the same $\Oc(L^3)$ Wigner--\(D\) terms, which makes the asymptotic per-iteration scaling in $L$ the same for first- and second-order methods. We obtain these derivatives in closed form; automatic differentiation would instead require constructing and backpropagating through the computational graph of the recursive Wigner--\(D\) construction. Moreover, because $\Gc$ has a fixed low dimension (three for $\SO(3)$), the Newton step only requires inverting a $3\times 3$ matrix, which can be done in explicit form. 

Iteration count is therefore the main factor determining the difference in runtime.  Newton's method typically converges in only a few iterations once initialized in the desired basin of attraction; see Figure~\ref{fig:newton_v_gd}.

\paragraph{Coordinates on $\SO(3)$} While $\SO(3)$ admits an intrinsic Riemannian geometry, which would allow one to compute a Riemannian gradient and Hessian together with the corresponding retraction or exponential-map update, we found that working in a local Euclidean coordinate chart works well in practice. In our implementation we optimize directly in wrapped ZYZ Euler-angles $\theta = (\alpha, \beta, \gamma)$, mapping to rotations via $g(\theta) = r_z(\alpha)\,r_y(\beta)\,r_z(\gamma)$ (Equation~\ref{eq:Euler}). We use the ZYZ convention because the Wigner--\(D\) matrices factor as $D^\ell_{m m'}(\alpha,\beta,\gamma) = e^{-im\alpha}\,d^\ell_{m m'}(\beta)\,e^{-im'\gamma}$, so that the $\alpha$ and $\gamma$ derivatives act diagonally and the gradient and Hessian of $\CC_L$ are available in closed form. This chart has coordinate singularities only at $\beta\in\{0,\pi\}$; since the continuous optimization is initialized at a rotation near a maximizer, the iterates remain in a neighborhood where the parameterization is well-conditioned. 
In this chart, the update takes the usual form
\begin{equation*}
    \theta^{t+1} = \theta^t - H_{\CC_L}^{-1}(\theta^t)\,\nabla \CC_L(\theta^t),
    \qquad g^t = g(\theta^t),
\end{equation*}
where $H_{\CC_L}(\theta)\eqdef\nabla^2 \CC_L(\theta)$ denotes the Hessian.
\subsection{Initialization via low-resolution SOFFT}\label{sec:sofft}

Identifying the global maximum of $\CC_L$ using continuous  optimization depends on suitable initialization. For this we perform a grid search on the low-bandwidth correlation $\CC_{L_0}$, which is cheap to evaluate.

We use an oversampled equiangular grid with approximately $(KL_0)^3$ rotations, where $K\geq 1$ is the oversampling factor. The values of $\CC_{L_0}$ on this grid are computed using an inverse $\SO(3)$ Fourier transform (SOFFT) \cite{kostelec_ffts_2008}. On the natural equiangular grid ($K=1$), SOFFT requires $\Oc(L_0^4)$ operations; 
for $K > 1$ optimized implementations achieve $\Oc((KL_0)^3 +L_0^4)$ \cite{potts_fast_2009}.
We retain the $P$ strongest local maxima as candidate initializations.

\subsection{Coarse-to-fine optimization algorithm}

The complete alignment procedure combines a single exhaustive search at $L_0$ with frequency-marched continuous optimization. The final estimate is reported at angular cutoff $L_J$, chosen such that $\CC_{L_J}$ approximates the full correlation to the desired accuracy. Algorithm~\ref{algo:coarse-to-fine} summarizes the procedure.

In practice, we run at most $M_{\mathrm{iter}}$ Newton steps per band and stop early when the gradient norm, the parameter increment, or the relative change in the objective fall below a prescribed tolerance. When applying \method{} to subtomogram alignment~(\ref{sec:exp_STA}), we use a fixed bandwidth schedule starting from $L_0 = 30$, for which the exhaustive SOFFT search remains cheap. Intermediate bandwidths are chosen to keep the per-step perturbation within the stability bounds of our convergence analysis (Section~\ref{sec:benefit_marching}). We retain $P = 10$ candidates with oversampling factor $K = 2$ for the initial SOFFT search.

\begin{algorithm}
        \caption{\method{}, a Multi-band Angular Template Matching Algorithm}\label{algo:coarse-to-fine}
        \begin{algorithmic}[1]
            \Require
                \begin{tabular}[t]{@{}l@{}}
                    Angular cutoffs $\{L_0, L_1, \dots, L_J\}$ \\
                    Number of Newton iterations $M_{\mathrm{iter}}$ \\
                    Number of candidates $P$ \\
                    Oversampling factor $K$
                \end{tabular}
            \State Perform exhaustive search with SOFFT at angular cutoff $L_0$ with oversampling factor $K$
            \State Retain the $P$ strongest local maxima $\{g_p^{(0)}\}_{p=1}^{P}$
            \State Optionally refine each retained candidate on \(\CC_{L_0}\), and redefine the refined outputs as \(\{g_p^{(0)}\}_{p=1}^P\).
           \For {$j$ in $\{1,\hdots, J\}$}
              \For {$p$ in $\{1,\hdots, P\}$}
                \State $g_p^{(j)} \leftarrow$ result of $M_{\mathrm{iter}}$ Newton steps on $\CC_{L_j}$ initialized at $g_p^{(j-1)}$
              \EndFor
            \EndFor
\State \textbf{return} $\displaystyle \argmax_{p\in\{1,\dots,P\}} \CC_{L_J}\!\left(g_p^{(J)}\right)$
        \end{algorithmic}
\end{algorithm}

\section{Theoretical analysis}
\label{sec:analysis}

The practical efficiency of \method{} (Algorithm~\ref{algo:coarse-to-fine})
comes from three features: the search space is only three-dimensional,
the derivatives of the correlation objective are available in closed form,
and local second-order refinement is cheap.

In this section we prove that under realistic assumptions the frequency-marching scheme defined in Algorithm~\ref{algo:coarse-to-fine} returns a final candidate that is arbitrarily close to the global maximizer. The proof formalizes the idea that selected strongly concave coarse-level maximizer can be continued across successive bandwidth levels, with controlled drift, provided the inter-level perturbations are small. This requires handling two failure modes. One is that the maximizer inside a tracked basin stops being unique. The other is that a point outside the tracked region \(S\) (see Section~\ref{sec:theory-setting} for the definition) becomes globally optimal at the final bandwidth. The first is controlled by strong concavity plus small $C^2$ perturbations; the second is ruled out by a set-gap condition and a global sup-norm bound.

\subsection{Setting} \label{sec:theory-setting}
We equip $\SO(3)$ with its bi-invariant metric, unique up to scaling
\cite{MILNOR1976293}. All distances, geodesic balls, gradients $\nabla F$, Hessians $\nabla^2 F$, and operator norms are taken with respect to this metric.
Let $S\subset \SO(3)$ denote a compact tracked region. Intuitively, \(S\) represents the union of basins around the coarse candidates retained after the low-bandwidth search.

We select a bandwidth schedule $L_0 < L_1 < \cdots < L_J$ and write
$$
F_j \eqdef \CC_{L_j} \ \ (j=0,1,\ldots,J),
\qquad
E_j \eqdef F_j - F_{j-1} \ \ (j=1,\ldots,J).
$$
The role of \(E_j\) is to measure the perturbation introduced when passing from bandwidth \(L_{j-1}\) to bandwidth \(L_j\). To exclude the possibility that a point outside the tracked region \(S\) becomes globally optimal at the final bandwidth, we impose a coarse-level set-gap condition and require the accumulated perturbation from \(F_0\) to \(F_J\) to be smaller than this gap.

\begin{definition}
Let $F : \SO(3) \to \R$ be continuous and $S \subset \SO(3)$ be compact. The set gap of $F$ with respect to $S$ is
\begin{align}     \label{eq:setgap_def}
    \Gamma_F(S) \eqdef \max_{g\in S}F(g) - \sup_{g\notin S} F(g).
\end{align}
\end{definition}
Thus $\Gamma_F(S)$ measures the margin by which the best value inside $S$ exceeds the best value outside $S$.
Note that if $\Gamma_F(S) > 0$, then $S$ contains every global maximizer of $F$. Indeed, any $g \notin S$ satisfies $F(g) \leq \sup_{g'\notin S} F(g') < \max_{g'\in S} F(g')$, so $g$ cannot be a global maximizer.

\subsection{Excluding failure modes}

The first lemma shows that a positive set gap remains positive under sufficiently small perturbations.

\begin{lemma}[Set-gap stability]
\label{lem:setgap_stability}
Let $F, \widetilde F:\SO(3) \to \R$ be continuous and let $S\subset \SO(3)$ be compact.
Assume the set gap $\Gamma_F(S)$ defined in \eqref{eq:setgap_def} satisfies $\Gamma_F(S)>0$. If
\begin{equation}
\big\| F - \widetilde F \big\|_\infty \le \varepsilon
\qquad\text{and} \qquad
2\varepsilon < \Gamma_F(S),
\label{eq:setgap_eps_cond}
\end{equation}
then every global maximizer $\tilde g^\star\in\argmax \widetilde F$ lies in $S$.
\end{lemma}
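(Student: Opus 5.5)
The plan is a direct sandwich argument that compares values of $\widetilde F$ inside and outside $S$ through $F$. Since $S$ is compact and $F$ is continuous, the maximum of $F$ over $S$ is attained; fix $g_S\in S$ with $F(g_S)=\max_{g\in S}F(g)$. We will argue by contradiction: suppose some global maximizer $\tilde g^\star$ of $\widetilde F$ lies outside $S$. Because $\SO(3)$ is compact and $\widetilde F$ is continuous, $\argmax\widetilde F$ is nonempty, so the statement is not vacuous.

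First, the sup-norm bound gives $\widetilde F(g_S)\ge F(g_S)-\varepsilon=\max_{S}F-\varepsilon$. Second, since $\tilde g^\star\notin S$, we get
\[
\widetilde F(\tilde g^\star)\le F(\tilde g^\star)+\varepsilon\le \sup_{g\notin S}F(g)+\varepsilon = \max_{S}F-\Gamma_F(S)+\varepsilon .
\]
Combining these with $2\varepsilon<\Gamma_F(S)$ yields
\[
\widetilde F(\tilde g^\star)\le \max_S F-\Gamma_F(S)+\varepsilon<\max_S F-\varepsilon\le \widetilde F(g_S),
\]
which contradicts the maximality of $\tilde g^\star$. Hence every global maximizer of $\widetilde F$ lies in $S$.

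The only point needing care is the case $S=\SO(3)$, where the complement is empty and the supremum equals $-\infty$. In that case the conclusion is trivial, and the positivity of $\Gamma_F(S)$ is consistent with this convention. Beyond that there is no real obstacle: attainment of the maximum on $S$ comes from compactness, and the supremum over the complement need not be attained, which is why we bound $F(\tilde g^\star)$ by the supremum rather than by a maximum.
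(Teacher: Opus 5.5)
Your proof is correct and follows the paper's argument: fix a maximizer $g_S$ of $F$ on $S$ (using compactness), use the bounds $\widetilde F(g_S)\ge \max_S F-\varepsilon$ and $\widetilde F(g)\le \max_S F-\Gamma_F(S)+\varepsilon$ for $g\notin S$, and conclude with $2\varepsilon<\Gamma_F(S)$. The only difference is presentational: you argue by contradiction, whereas the paper directly shows $\widetilde F(g_S)-\widetilde F(g)\ge \Gamma_F(S)-2\varepsilon>0$ for every $g\notin S$; your remark on the case $S=\SO(3)$ is a harmless extra.
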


\begin{proof}
   Since $\SO(3)$ is compact and $\widetilde{F}$ is continuous, $\widetilde{F}$ attains its maximum. 
    Define $a_S := \max_{g \in S} F(g)$ and $a_{S^c} := \sup_{g \notin S} F(g)$, so $a_S - a_{S^c} = \Gamma_F(S) > 0$ by assumption. Since $S$ is compact and $F$ is continuous, there exists $g_S \in S$ such that $F(g_S) = a_S$. For any $g \notin S$,
    \begin{align}
        \widetilde{F}(g) 
        \leq F(g) + \varepsilon 
        \leq a_{S^c} + \varepsilon
        \leq a_S - \Gamma_F(S) + \varepsilon,
    \end{align}
    and also
    $$
        \widetilde{F}(g_S) \geq F(g_S) - \varepsilon = a_S - \varepsilon.
    $$
    Subtracting, 
    $$
       \widetilde{F}(g_S) - \widetilde{F}(g) \geq (a_S - \varepsilon) - (a_S - \Gamma_F(S) + \varepsilon) = \Gamma_F(S) - 2\varepsilon > 0.
    $$
    Thus $\widetilde{F}(g_S) > \widetilde{F}(g)$ for $g \notin S$ and every global maximizer of \(\widetilde F\) lies in \(S\).
\end{proof}
In our application, this lemma is a way to exploit the spectral decay with frequency. Concretely, if the accumulated high-frequency tail satisfies
\[
\|\CC_{L_J}-\CC_{L_0}\|_\infty < \Gamma_{\CC_{L_0}}(S)/2,
\]
then every global maximizer of \(\CC_{L_J}\) lies in \(S\). In particular, a point outside \(S\) cannot become globally optimal at the final bandwidth.

To control the tracked maxima inside $S$, we assume upper and lower bounds on the Hessian. 

\begin{definition}[Second-order concavity and curvature bounds] Let $U \subset \SO(3)$ be compact and geodesically convex, let $\Omega \subset \SO(3)$ be an open neighborhood of $U$, let $F\in C^2(\Omega)$ and let $\mu, M > 0$. We say $F$ is $\mu$-strongly concave on $U$ if
$$
\inprod{\nabla^2 F(g)[X]}{X}\le -\mu
$$ 
for every $g \in U$ and every unit tangent vector $X \in T_g\SO(3)$.

We say that the Hessian of $F$ is $M$-smooth on $U$ if 
$$
\inprod{\nabla^2 F(g)[X]}{X}\ge -M
$$
for every $g \in U$ and every unit tangent vector $X \in T_g\SO(3)$.
\end{definition}

Since $\SO(3)$ has positive convexity radius, every sufficiently small closed geodesic ball is geodesically convex. Together with Hessian bounds this allows us to control the objective around critical points. 

\begin{lemma}[Two-sided quadratic bounds around a critical point]
\label{lem:2sided}
Let $U$ be geodesically convex, let $\Omega\subset \SO(3)$ be an open neighborhood of $U$, and let $F\in C^2(\Omega)$, and suppose $g_0\in U$ satisfies $\nabla F(g_0)=0$.
Assume $F$ is $\mu$-strongly concave and its Hessian is $M$-smooth on $U$. Then for every $g\in U$,
$$
F(g_0) - \frac{M}{2} d(g,g_0)^2 \le F(g)\le F(g_0)-\frac{\mu}{2}\,d(g,g_0)^2.
$$
\end{lemma}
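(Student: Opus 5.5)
The plan is to reduce the statement to a one-dimensional Taylor argument along the geodesic from \(g_0\) to \(g\). Fix \(g\in U\) and set \(r\eqdef d(g,g_0)\); if \(r=0\) both inequalities are trivial. Since \(U\) is geodesically convex, there is a minimizing unit-speed geodesic \(\gamma:[0,r]\to\SO(3)\) with \(\gamma(0)=g_0\) and \(\gamma(r)=g\), lying entirely in \(U\). Here we interpret geodesic convexity in the standard strong sense, so that the segment realizing \(d(g,g_0)\) stays inside \(U\); for the small geodesic balls used later this holds below the convexity radius. We then define \(\varphi(t)\eqdef F(\gamma(t))\). Because \(\gamma([0,r])\subset U\subset\Omega\) and \(F\in C^2(\Omega)\), the function \(\varphi\) is \(C^2\) on \([0,r]\).

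Next we compute the derivatives of \(\varphi\). The first derivative is \(\varphi'(t)=\inprod{\nabla F(\gamma(t))}{\dot\gamma(t)}\), so \(\varphi'(0)=0\) by criticality of \(g_0\). For the second derivative, the key fact is that \(\gamma\) is a geodesic, so \(\nabla_{\dot\gamma}\dot\gamma=0\) for the Levi-Civita connection of the bi-invariant metric. Hence
\[
\varphi''(t)=\inprod{\nabla^2F(\gamma(t))[\dot\gamma(t)]}{\dot\gamma(t)},
\]
with no extra acceleration term. Since \(\|\dot\gamma(t)\|=1\) and \(\gamma(t)\in U\), the two curvature hypotheses give \(-M\le\varphi''(t)\le-\mu\) for all \(t\in[0,r]\).

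Finally we apply Taylor's theorem with integral remainder:
\[
\varphi(r)=\varphi(0)+\int_0^r (r-t)\,\varphi''(t)\,dt,
\]
using \(\varphi'(0)=0\). Bounding \(\varphi''\) above by \(-\mu\) and below by \(-M\), and using \(\int_0^r (r-t)\,dt=r^2/2\), gives
\[
F(g_0)-\tfrac M2 r^2\le F(g)\le F(g_0)-\tfrac\mu2 r^2,
\]
which is exactly the claimed two-sided bound.

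The only delicate step is the identification of \(\varphi''\) with the Riemannian Hessian quadratic form, which needs \(\gamma\) to be a genuine geodesic. The other point to state explicitly is that \(\gamma\) has unit speed and length exactly \(d(g,g_0)\); this is what the minimizing-geodesic clause of geodesic convexity provides. Everything else is routine.
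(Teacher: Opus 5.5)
Your proposal is correct and matches the paper's proof: both restrict $F$ to a unit-speed minimizing geodesic from $g_0$ to $g$ inside $U$, use $\varphi'(0)=0$ together with $-M\le\varphi''\le-\mu$, and integrate. The only differences are cosmetic: you write the double integration as Taylor's formula with integral remainder, and you state explicitly that the acceleration term $\nabla_{\dot\gamma}\dot\gamma$ vanishes, which the paper uses without comment.
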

\begin{proof}
Fix $g\in U$. By geodesic convexity of \(U\), there exists a unit-speed geodesic \(\gamma:[0,\ell]\to U\) from \(g_0\) to \(g\), where
\(\ell=d(g,g_0)\). \\
Let $h(t) \eqdef F(\gamma(t))$. Then $h'(0)=\inprod{\nabla F(g_0)}{\dot\gamma(0)}=0$ and
$$
h''(t)=\inprod{\nabla^2 F(\gamma(t))[\dot\gamma(t)]}{\dot\gamma(t)}.
$$
Since $\|\dot\gamma(t)\|=1$ and $F$ is $\mu$-strongly concave and $M$-smooth on $U$, we have for all $t\in[0,\ell]$,
$$
-M\le h''(t)\le -\mu.
$$
Integrate $h''(t)\ge -M$ from $0$ to $t$ to get $h'(t)\ge -Mt$, and integrate again from $0$ to $\ell$ to get
$$
h(\ell)-h(0)=\int_0^\ell h'(t)\,dt \ge \int_0^\ell (-Mt)\,dt = -\frac{M}{2}\ell^2.
$$
Thus $h(\ell)\ge h(0)-\frac{M}{2}\ell^2$, i.e., \ $F(g)\ge F(g_0)-\frac{M}{2}d(g,g_0)^2$.
The upper bound can be derived analogously using $h''(t)\le -\mu$.
Indeed, \(h''(t)\le -\mu\) gives \(h'(t)\le -\mu t\), and hence
\[
h(\ell)-h(0)\le -\frac{\mu}{2}\ell^2.
\]
\end{proof}

Next we show that a strongly concave basin persists under small $C^2$ perturbations, in the sense that the maximizer remains unique, stays in the interior, and moves by a controlled amount.

\begin{lemma}[Basin persistence under a $C^2$ perturbation]
\label{lem:basin_persist}
Let \(r>0\) be below the convexity radius of \(\SO(3)\), and let \(U=\overline{B(g^\star,r)}\) be a closed geodesically convex ball. Let $\Omega \subset \SO(3)$ be an open neighborhood of $U$,
and let $F, E \in C^2(\Omega)$.
Assume \(F\) is \(\mu\)-strongly concave on \(U\) and that \(g^\star\) maximizes \(F\) on \(U\).
Let $\widetilde{F} \eqdef F + E$ and set
$$
    a \eqdef \norm{\nabla E}_{L^\infty(U)}, \qquad b \eqdef \sup_{g\in U}\|\nabla^2 E(g)\|_{\mathrm{op}}.
$$
If $b<\mu$ and $a<(\mu-b)\,r/2$, then $\widetilde F$ is $(\mu-b)$-strongly concave on $U$ with a \emph{unique} maximizer $\tilde g^\star \in \mathrm{int}(U)$ such that
$
d(\tilde g^\star,g^\star) \le {a}/{(\mu-b)}.
$
\end{lemma}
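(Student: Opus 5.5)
We establish three things in order: strong concavity of $\widetilde F$ on $U$, existence of a maximizer with the stated distance bound, and uniqueness together with interiority.

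\textbf{Strong concavity.} Fix $g\in U$ and a unit tangent vector $X\in T_g\SO(3)$. Then
\[
\inprod{\nabla^2\widetilde F(g)[X]}{X}=\inprod{\nabla^2 F(g)[X]}{X}+\inprod{\nabla^2 E(g)[X]}{X}\le -\mu+b .
\]
So $\widetilde F$ is $(\mu-b)$-strongly concave on $U$, and $\mu-b>0$ by hypothesis. Restricted to any geodesic in $U$, $\widetilde F$ is therefore a strictly concave function of arc length. Since $U$ is geodesically convex (we take geodesics to be the unique minimizing ones, as $r$ is below the convexity radius), any two maximizers of $\widetilde F$ on $U$ would be joined by a geodesic in $U$ along which $\widetilde F$ is strictly concave. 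It would then exceed the common maximal value at the midpoint, which is impossible. Hence the maximizer is unique. Existence follows from compactness of $U$ and continuity of $\widetilde F$.

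\textbf{Distance bound.} Let $\tilde g^\star$ be the maximizer of $\widetilde F$ on $U$ and set $\ell=d(\tilde g^\star,g^\star)$. Let $\gamma:[0,\ell]\to U$ be the unit-speed geodesic from $g^\star$ to $\tilde g^\star$, and put $h(t)=\widetilde F(\gamma(t))$.
\begin{itemize}
\item Since $g^\star$ maximizes $F$ on $U$ and $\dot\gamma(0)$ points into the convex set $U$, first-order optimality gives $\inprod{\nabla F(g^\star)}{\dot\gamma(0)}\le 0$. If $g^\star$ is interior, this is simply $\nabla F(g^\star)=0$.
\item Consequently $h'(0)\le \inprod{\nabla E(g^\star)}{\dot\gamma(0)}\le a$.
\item By the first step, $h''\le-(\mu-b)$ on $[0,\ell]$, so $h'(\ell)\le a-(\mu-b)\ell$.
\item Since $\tilde g^\star$ maximizes $\widetilde F$ on $U$ and $-\dot\gamma(\ell)$ points into $U$, first-order optimality at the other endpoint gives $h'(\ell)\ge 0$.
\end{itemize}
Combining the last two bullets yields $\ell\le a/(\mu-b)$.

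\textbf{Interiority.} The hypothesis $a<(\mu-b)r/2$ gives $\ell<r/2$. I expect this to be the main obstacle, because interiority in $U$ is not immediate from this bound when $g^\star$ is not the center of $U$. Here, however, $U=\overline{B(g^\star,r)}$ is centered at $g^\star$, so $d(\tilde g^\star,g^\star)<r/2<r$ directly places $\tilde g^\star$ in $\mathrm{int}(U)$. Being an interior local maximizer, it satisfies $\nabla\widetilde F(\tilde g^\star)=0$.

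The factor $1/2$ is thus only slack; a weaker condition would already suffice for interiority. The only delicate point is justifying the first-order conditions at boundary points of $U$ in the distance-bound step, and geodesic convexity of $U$ handles this: both $\dot\gamma(0)$ and $-\dot\gamma(\ell)$ are directions into $U$. Alternatively, once $g^\star$ is known to be interior, $\nabla F(g^\star)=0$ and the argument is cleaner still.
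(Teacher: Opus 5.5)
Your proof is correct, but it orders the argument differently from the paper.

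\textbf{The paper's order.} The paper proves interiority first, by a value comparison. For each $y\in\partial U$ it integrates $h''\le-(\mu-b)$ twice along the radial geodesic from $g^\star$ to $y$. This gives $\widetilde F(y)\le \widetilde F(g^\star)+ar-\frac{\mu-b}{2}r^2<\widetilde F(g^\star)$, and this inequality is where the hypothesis $a<(\mu-b)r/2$ is used. Only afterwards does it invoke $\nabla\widetilde F(\tilde g^\star)=0$ at the now-interior maximizer and integrate $\psi'\le-(\mu-b)$ once to get $d(\tilde g^\star,g^\star)\le a/(\mu-b)$.

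\textbf{Your order.} You get the distance bound directly from the one-sided first-order conditions $h'(0)\le a$ and $h'(\ell)\ge 0$. These need only that the geodesic stays in $U$, so they hold whether or not $\tilde g^\star$ lies on $\partial U$. You then read off interiority from $\ell\le a/(\mu-b)<r$, using that $U$ is centered at $g^\star$.

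\textbf{What each buys.} Your route is more economical, and your remark about slack is right. Your argument never uses the factor $1/2$, so $a<(\mu-b)r$ already suffices for interiority. The one-dimensional model $F(x)=-\mu x^2/2$, $E(x)=ax$ on $[-r,r]$ illustrates this: the maximizer $a/\mu$ is interior exactly when $a<\mu r$. The paper's route yields the extra fact that $\widetilde F$ is strictly smaller on all of $\partial U$ than at $g^\star$, though the lemma's conclusions do not need it.
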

\begin{proof}
Since \(g^\star\) is the center of the ball \(U\), it lies in
\(\operatorname{int}(U)\). Because it maximizes \(F\) on \(U\) and
\(F\in C^2(\Omega)\), we have \(\nabla F(g^\star)=0\).

To verify strong concavity, note that for any unit tangent $X$ at any $g \in U$,
$$
    \inprod{\nabla^2 \widetilde F(g)[X]}{X}
    = \inprod{\nabla^2 F(g)[X]}{X}+\inprod{\nabla^2 E(g)[X]}{X}
    \leq -\mu + b = -(\mu-b).
$$
By continuity and compactness, $\widetilde{F}$ attains a maximum on $U$.
Because $\widetilde{F}$ is $(\mu-b)$-strongly concave on the geodesically convex $U$,
it can have at most one maximizer in $U$ (if there were two, strict concavity along the
connecting geodesic would be contradicted).

We now show that the maximizer cannot be on the boundary $\partial U$.
Let \(y\in\partial U\). Since \(U=\overline{B(g^\star,r)}\) is a closed geodesic ball with radius below the convexity radius, there is a unit-speed minimizing geodesic \(\gamma:[0,r]\to U\) from \(g^\star\) to \(y\).
Define $h(t) = \widetilde F(\gamma(t))$.
Then $h'(0) = \inprod{\nabla \widetilde F(g^\star)}{\dot\gamma(0)} = \inprod{\nabla E(g^\star)}{\dot\gamma(0)}$ so $|h'(0)|\le a$.
We already showed $h''(t) \leq -(\mu - b)$.
Integrating $h''(t)\le -(\mu-b)$ twice gives
$$
h(r) \leq h(0) + r\,h'(0) - \frac{\mu - b}{2}r^2 \le h(0) + ar - \frac{\mu - b}{2}r^2.
$$
Since $a < (\mu - b)r/2$, we have $a r -\frac{\mu - b}{2}r^2 < 0$, hence $h(r) < h(0)$. Thus no boundary point can maximize $\widetilde{F}$, so the maximizer lies in $\mathrm{int}(U)$.

Finally, to bound the displacement, let $\tilde g^\star$ be the unique maximizer of $\widetilde F$ in $U$ and $\gamma:[0,\ell] \to U$ the unit-speed geodesic from $\gamma(0) = g^\star$ to $\gamma(\ell) = \tilde g^\star$,
so $\ell=d(g^\star,\tilde g^\star)$. Since \(\tilde g^\star\in \operatorname{int}(U)\), the first-order
optimality condition gives \(\nabla \widetilde F(\tilde g^\star)=0\).
Define $\psi(t) = \inprod{\nabla \widetilde F(\gamma(t))}{\dot\gamma(t)}$.
Then $\psi(\ell) = 0$ because $\nabla\widetilde F(\tilde g^\star) = 0$, and
$$
\psi'(t)=\inprod{\nabla^2 \widetilde F(\gamma(t))[\dot\gamma(t)]}{\dot\gamma(t)}\le-(\mu-b).
$$
Integrating gives $0 = \psi(\ell) \leq \psi(0) - (\mu - b)\ell$, hence $(\mu - b)\ell \leq \psi(0)$.
Since $\nabla \widetilde F(g^\star) = \nabla E(g^\star)$, we get $\psi(0) = \inprod{\nabla E(g^\star)}{\dot\gamma(0)} \leq \|\nabla E(g^\star)\|\le a$.
Therefore $\ell \leq a/(\mu - b)$.
\end{proof}

\subsection{Main result}

Assumptions~\ref{ass:A-initialization}--\ref{ass:C-setgap} concern the family of correlation landscapes
$\{F_j\}_{j=0}^J$.
Additionally, Assumption~\ref{ass:D-new_accuracy} abstracts away the local convergence analysis of the particular Newton implementation. 

The proof shows that Assumptions~\ref{ass:A-initialization}--\ref{ass:C-setgap} imply persistence of the tracked
basins and containment of all global maximizer of $F_J$ inside $S$.
Together with Assumption~\ref{ass:D-new_accuracy} and the
two-sided second-order bound of Lemma~\ref{lem:2sided} this implies near-optimality of the best final candidate returned by Algorithm~\ref{algo:coarse-to-fine}. 

For each $j=1,\dots,J$ and each basin index $p$, Algorithm~\ref{algo:coarse-to-fine} applies \(M_{\mathrm{iter}}\) Newton steps
to \(F_j\), initialized from \(g_p^{(j-1)}\).
Let $g_p^{(j)}$ denote the returned iterate.

\begin{assumption}
\begin{enumerate}[label=(\Alph*),ref=(\Alph*),series=assumptions]
\item (Initialization)\label{ass:A-initialization} Fix $r>0$ below the convexity
radius of $\SO(3)$, so that every closed geodesic ball of radius at most $r$ is
geodesically convex.
The initialization produces one point $g_p^{(0)}$ for each $p=1,\dots,P$.
Around each such point we define the tracked basin
\[
    U_p \eqdef \overline{B(g_p^{(0)},r)} \subset \SO(3),
\]
with $U_p$ being pairwise disjoint, and $g_p^{(0)}$ the unique maximizer of
$F_0$ on $U_p$.
We let $S \eqdef \bigcup_{p = 1}^P U_p$. Since $S$ is a finite union of compact
sets, it is compact.

\item (Small $C^2$ inter-level perturbations.)\label{ass:B-perturb}
Assume that \(F_j\) is continuous on \(\SO(3)\) for every \(j=0,\dots,J\), and \(C^2\) on \(S\). Let $E_j \eqdef F_j - F_{j-1}$ and
\[
    a_j \eqdef \norm{\nabla E_j}_{L^\infty(S)},\qquad
    b_j \eqdef \sup_{g\in S}\|\nabla^2 E_j(g)\|_{\mathrm{op}}, \qquad
    \delta_j \eqdef \norm{E_j}_{L^\infty(\SO(3))}.
\]
We have
\begin{equation}
\sum_{j=1}^J b_j \le \frac{\mu}{2},
\qquad
a_j < \frac{\mu r}{8},
\qquad
\sum_{j=1}^J \frac{2a_j}{\mu} \le \frac{r}{2},
\label{eq:step_smallness}
\end{equation}
and $F_0$ is $\mu$-strongly concave on each $U_p$.

\item (Set-gap condition.)\label{ass:C-setgap}
We have $\Gamma_{F_0}(S)>0$ and
\begin{equation}
2\sum_{j=1}^J \delta_j < \Gamma_{F_0}(S).
\label{eq:tail_gap_cond}
\end{equation}
In particular,
\[
\|F_J-F_0\|_\infty \le \sum_{j=1}^J \delta_j < \frac{\Gamma_{F_0}(S)}{2}.
\]

\item (Finite Newton accuracy.)\label{ass:D-new_accuracy} 
There is accuracy $\tau \in [0, r/2]$ such that, for every basin
index $p=1,\dots,P$ and every level $j=1,\dots,J$, the iterate $g_p^{(j)}$
returned by the $M_{\mathrm{iter}}$ Newton steps on $F_j$ (initialized at
$g_p^{(j-1)}$) satisfies
\begin{equation}
  d\bigl(g_p^{(j)},\, m_{p,j}\bigr) \le \tau,
  \label{eq:newton_tau}
\end{equation}
where $m_{p,j}\in\operatorname{int}(U_p)$ denotes the maximizer of $F_j$ on
$U_p$.
\end{enumerate}
\end{assumption}
Note that by Assumption~\ref{ass:A-initialization}, we assume that the relevant coarse basins have already been selected.
\begin{proposition}
\label{prop:AC_consequences}
Under Assumptions~\ref{ass:A-initialization}--\ref{ass:C-setgap}, for every basin index $p$ and level $j=1,\dots,J$ we have:
\begin{enumerate}[label=(\roman*)]
\item $F_j$ is $\left(\mu-\sum_{k=1}^j b_k\right)$-strongly concave on $U_p$, and hence in particular $(\mu/2)$-strongly concave on $U_p$.
\item $F_j$ has a unique maximizer $m_{p,j}\in\mathrm{int}(U_p)$.
\item The basin maximizer satisfy the drift bounds
\[
d(m_{p,j},m_{p,j-1})
\le \frac{a_j}{\mu-\sum_{k=1}^j b_k}
\le \frac{2a_j}{\mu},
\]
and therefore
\[
d(m_{p,j},m_{p,0})
\le \sum_{k=1}^j \frac{2a_k}{\mu}
\le \frac r2.
\]
\item Every global maximizer of $F_J$ lies in $S$.

\end{enumerate}
\end{proposition}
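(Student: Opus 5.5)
The plan is an induction on the level $j$ for a fixed basin index $p$, with Lemma~\ref{lem:basin_persist} applied at each step. The ball used at step $j$ is centered at the previous maximizer, not at $g_p^{(0)}$, and has a smaller radius. Part (iv) follows separately from Lemma~\ref{lem:setgap_stability}.

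\textbf{Part (i).} Strong concavity is immediate and needs no induction. Write $F_j=F_0+\sum_{k\le j}E_k$. For unit $X$ at $g\in U_p\subset S$,
\[
\inprod{\nabla^2F_j(g)[X]}{X}\le-\mu+\sum_{k\le j}b_k .
\]
Assumption~\ref{ass:B-perturb} gives $\sum_k b_k\le\mu/2$, so this is at most $-\mu/2$. The same bound holds on any geodesically convex subset of $U_p$.

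\textbf{Induction hypothesis.} Assume $m_{p,j-1}$ is the unique maximizer of $F_{j-1}$ on $U_p$ and $d(m_{p,j-1},g_p^{(0)})\le\sum_{k<j}2a_k/\mu\le r/2$. The base case $j=1$ is Assumption~\ref{ass:A-initialization}.

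\textbf{Inductive step.} Let $V=\overline{B(m_{p,j-1},r/2)}$. By the triangle inequality $V\subset U_p$, and $V$ is geodesically convex because $r/2<r$. Moreover $m_{p,j-1}$ maximizes $F_{j-1}$ on $V$, since it maximizes on $U_p$.

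Apply Lemma~\ref{lem:basin_persist} on $V$ with $F=F_{j-1}$, $E=E_j$, concavity constant $\mu_{j-1}=\mu-\sum_{k<j}b_k\ge\mu/2$, and radius $r/2$. The hypotheses to check are:
\begin{itemize}
\item[] $b_j<\mu_{j-1}$, which holds because $\mu_{j-1}-b_j=\mu-\sum_{k\le j}b_k\ge\mu/2>0$;
\item[] $a_j<(\mu_{j-1}-b_j)(r/2)/2$, which holds because the right side is at least $\mu r/8$ and Assumption~\ref{ass:B-perturb} gives $a_j<\mu r/8$.
\end{itemize}
The suprema of $\nabla E_j$ and $\nabla^2E_j$ over $V$ are bounded by $a_j,b_j$ because $V\subset S$.

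The lemma then yields a unique maximizer $\tilde m$ of $F_j$ on $V$, lying in $\mathrm{int}(V)$, with
\[
d(\tilde m,m_{p,j-1})\le a_j\Big/\Big(\mu-\sum_{k\le j}b_k\Big)\le 2a_j/\mu .
\]
This is the drift bound in (iii), and summing it gives $d(\tilde m,g_p^{(0)})\le\sum_{k\le j}2a_k/\mu\le r/2$, which closes the induction.

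\textbf{Main obstacle.} The lemma only gives maximality on $V$, while (ii) asks for the maximizer on all of $U_p$. To bridge this, note that $\tilde m$ is an interior critical point of $F_j$ lying in $U_p$. Lemma~\ref{lem:2sided} applies on the convex set $U_p$, since $F_j$ is strongly concave there by (i), with the $M$-smoothness constant taken as any finite bound available by compactness and $C^2$. It gives $F_j(g)\le F_j(\tilde m)-\tfrac{\mu}{4}d(g,\tilde m)^2$ on $U_p$. Hence $\tilde m=m_{p,j}$ is the unique maximizer of $F_j$ on $U_p$. It lies in $\mathrm{int}(U_p)$ because its distance from $g_p^{(0)}$ is at most $r/2<r$. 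This proves (ii) and (iii).

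\textbf{Part (iv).} The telescoping sum and the triangle inequality give $\|F_J-F_0\|_\infty\le\sum_j\delta_j=:\varepsilon$. Assumption~\ref{ass:C-setgap} gives $2\varepsilon<\Gamma_{F_0}(S)$. Lemma~\ref{lem:setgap_stability} with $F=F_0$ and $\widetilde F=F_J$ then places every global maximizer of $F_J$ in $S$.
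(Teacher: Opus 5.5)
Your proposal is correct and follows the paper's proof essentially step for step. It uses the same direct Hessian bound for (i), the same induction applying Lemma~\ref{lem:basin_persist} on the shrunken ball $\overline{B(m_{p,j-1},r/2)}$ with constant $\mu_{j-1}$ and the check $a_j<\mu r/8\le(\mu_{j-1}-b_j)r/4$, and the same appeal to Lemma~\ref{lem:setgap_stability} for (iv). The only difference is cosmetic: you justify the passage from maximality on the small ball to maximality on $U_p$ via the upper bound of Lemma~\ref{lem:2sided}, whereas the paper simply notes that an interior critical point of a strongly concave function on a geodesically convex set is its unique maximizer there.
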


\begin{proof}
  \textbf{Part (i):}
  Throughout, we write $m_{p,0} = g_p^{(0)}$ for the initial maximizer of $F_0$ on $U_p$.
Fix $p$.
On $U_p$ we have
\[
\inprod{\nabla^2 F_j(g)[X]}{X}
=
\inprod{\nabla^2 F_0(g)[X]}{X}
+\sum_{k=1}^j \inprod{\nabla^2 E_k(g)[X]}{X}
\le -\mu + \sum_{k=1}^j b_k,
\]
 so $F_j$ is $\left(\mu-\sum_{k=1}^j b_k\right)$-strongly concave on $U_p$.
Since $\sum_{k=1}^J b_k \le \mu/2$, this implies (i).

\textbf{Part (ii)--(iii):}
We prove (ii)--(iii) by induction on $j$.
The case $j=0$ is exactly Assumption~(A).
Assume the claim holds at level $j-1$.
Set
\[
\mu_{j-1}\eqdef \mu-\sum_{k=1}^{j-1}b_k \ge \frac{\mu}{2}.
\]
By the induction hypothesis,
\[
d(m_{p,j-1},m_{p,0}) \le \sum_{k=1}^{j-1}\frac{2a_k}{\mu}\le \frac r2.
\]
Hence the smaller ball
\[
U' \eqdef \overline{B(m_{p,j-1},r/2)}
\]
is geodesically convex and contained in $U_p$. Indeed, if \(g\in U'\), then
\[
d(g,m_{p,0})\le d(g,m_{p,j-1})+d(m_{p,j-1},m_{p,0})
\le r/2+r/2=r.
\]
We shrink to $U'$ so that the perturbation lemma can be applied around the previously tracked maximizer while staying inside the original basin $U_p$. 
Since \(U'\subset U_p\) and \(m_{p,j-1}\) is the unique maximizer of \(F_{j-1}\) on \(U_p\), it is also the unique maximizer of \(F_{j-1}\) on \(U'\).
Since
\[
\mu_{j-1}-b_j \ge \mu-\sum_{k=1}^J b_k \ge \frac{\mu}{2},
\qquad
a_j < \frac{\mu r}{8} \le (\mu_{j-1}-b_j)\frac r4,
\]
we can apply Lemma~\ref{lem:basin_persist} to $F_{j-1}$ perturbed by $E_j$ on $U'$, with strong concavity constant $\mu_{j-1}$ and radius $r/2$. 
Lemma~\ref{lem:basin_persist} shows that
$m_{p,j}\in \mathrm{int}(U')$ is a maximizer of $F_j$ on $U'$ satisfying
\[
d(m_{p,j},m_{p,j-1}) \le \frac{a_j}{\mu_{j-1}-b_j}
= \frac{a_j}{\mu-\sum_{k=1}^j b_k}
\le \frac{2a_j}{\mu}.
\]
The maximizer produced by Lemma~\ref{lem:basin_persist} lies in
\(\operatorname{int}(U')\), hence it is a critical point of \(F_j\).
Since \(F_j\) is strongly concave on the geodesically convex set \(U_p\),
any critical point in \(U_p\) is the unique global maximizer of \(F_j\) on
\(U_p\). Therefore this point is the unique maximizer \(m_{p,j}\) of \(F_j\)
on \(U_p\).
Summing in $j$ yields the cumulative drift bound in (iii). This proves (ii)--(iii); in words, each tracked basin persists across levels, its maximizer remains unique and interior, and the cumulative drift stays controlled.

\textbf{Part (iv):}
Finally,
\[
\|F_J-F_0\|_\infty
\le \sum_{j=1}^J \delta_j
< \frac{\Gamma_{F_0}(S)}{2}
\]
by Assumption~(C), so Lemma~\ref{lem:setgap_stability} implies that every global maximizer of $F_J$ lies in $S$.
This proves (iv).
\end{proof}

\begin{theorem}[Main result: near-optimality in objective value]
\label{thm:multiband_correct}
Assume~\ref{ass:A-initialization}--\ref{ass:D-new_accuracy} and let
\[
\hat p \in \argmax_{p=1,\dots,P} F_J\bigl(g_p^{(J)}\bigr) \text{ and } \hat g \eqdef g_{\hat p}^{(J)}
\]
be the output of \method{} (Algorithm~\ref{algo:coarse-to-fine}).
Define
\[
M_J \eqdef \sup_{g\in S}\|\nabla^2F_J(g)\|_{\mathrm{op}} < \infty.
\]
Then
\[
0 \le \max_{g\in\SO(3)}F_J(g)-F_J(\hat g)\le \frac{M_J}{2}\tau^2,
\]
where $\tau$ is the uniform Newton accuracy from Assumption~\ref{ass:D-new_accuracy}.
\end{theorem}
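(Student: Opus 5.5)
The plan is to combine Proposition~\ref{prop:AC_consequences} with Assumption~\ref{ass:D-new_accuracy} and the lower bound of Lemma~\ref{lem:2sided}. The lower bound $0\le \max F_J - F_J(\hat g)$ is trivial: $\SO(3)$ is compact and $F_J$ is continuous, so the maximum exists and dominates $F_J(\hat g)$. For the upper bound, Proposition~\ref{prop:AC_consequences}(iv) places every global maximizer $g^\star$ of $F_J$ in $S=\bigcup_p U_p$, so $g^\star\in U_{p^\star}$ for some index $p^\star$. By (ii), $F_J$ has a unique maximizer $m_{p^\star,J}$ on $U_{p^\star}$. Since $g^\star$ maximizes $F_J$ globally and lies in $U_{p^\star}$, we get $g^\star=m_{p^\star,J}$. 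Hence $\max_{\SO(3)}F_J = F_J(m_{p^\star,J})$.

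Next, I will compare $F_J(g_{p^\star}^{(J)})$ with $F_J(m_{p^\star,J})$. Since $m_{p^\star,J}$ is interior to $U_{p^\star}$ and maximizes $F_J$ there, it is a critical point. Assumption~\ref{ass:D-new_accuracy} gives $d(g_{p^\star}^{(J)},m_{p^\star,J})\le\tau\le r/2$. I will apply Lemma~\ref{lem:2sided} on a geodesically convex set $U$ that contains both points. The choice is $U=\overline{B(m_{p^\star,J},r/2)}$. By (iii), $d(m_{p^\star,J},g_{p^\star}^{(0)})\le r/2$, so the triangle inequality gives $U\subset U_{p^\star}\subset S$. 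Its radius is below the convexity radius, so it is geodesically convex.

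On $U$, $F_J$ is $(\mu/2)$-strongly concave by (i). Its Hessian is $M_J$-smooth because $\inprod{\nabla^2F_J(g)[X]}{X}\ge -\|\nabla^2F_J(g)\|_{\mathrm{op}}\ge -M_J$. The lemma needs $F_J\in C^2$ on an open neighborhood of $U$; I will supply it by noting that $F_J$ is a finite trigonometric sum of Wigner--$D$ functions and hence smooth on all of $\SO(3)$. Lemma~\ref{lem:2sided} then gives
\[
F_J\bigl(g_{p^\star}^{(J)}\bigr)\ge F_J(m_{p^\star,J})-\tfrac{M_J}{2}\tau^2 .
\]

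Finally, by the definition of $\hat p$ as the argmax over candidates, $F_J(\hat g)\ge F_J(g_{p^\star}^{(J)})$. Chaining this with the previous two steps yields $\max F_J - F_J(\hat g)\le \tfrac{M_J}{2}\tau^2$.

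The finiteness of $M_J$ follows from compactness of $S$ and continuity of $\nabla^2F_J$. The main point needing care is the identification of the global maximizer with the tracked basin maximizer $m_{p^\star,J}$, together with choosing $U$ so that Lemma~\ref{lem:2sided} applies with a convex set containing both $g_{p^\star}^{(J)}$ and $m_{p^\star,J}$. The triangle-inequality argument with radius $r/2$ above handles the latter.
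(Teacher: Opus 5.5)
Your proof is correct and follows essentially the same route as the paper. You use Proposition~\ref{prop:AC_consequences}(ii) and (iv) to identify $\max_{\SO(3)}F_J$ with $F_J(m_{p^\star,J})$, then Assumption~\ref{ass:D-new_accuracy} together with (iii) to place $g_{p^\star}^{(J)}$ in a geodesically convex region where Lemma~\ref{lem:2sided} applies, and finally the argmax definition of $\hat p$; the only differences are cosmetic. You choose $p^\star$ as the basin containing a global maximizer and apply the lemma on $\overline{B(m_{p^\star,J},r/2)}$, whereas the paper chooses the basin with the largest $F_J(m_{p,J})$ and applies the lemma on $U_p$ itself after checking $g_p^{(J)}\in U_p$.
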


\begin{proof}
By Proposition~\ref{prop:AC_consequences}, for each $p$ the function $F_J$ has a unique maximizer
$m_{p,J}\in \mathrm{int}(U_p)$, $F_J$ is $(\mu/2)$-strongly concave on $U_p$,
and every global maximizer of $F_J$ lies in $S$.

By Assumption~\ref{ass:D-new_accuracy}, for every $p$,
\[
d\bigl(g_p^{(J)},m_{p,J}\bigr)\le \tau \le \frac r2.
\]
Together with Proposition~\ref{prop:AC_consequences}(iii), this implies
\[
d(g_p^{(J)},m_{p,0})
\le d(g_p^{(J)},m_{p,J}) + d(m_{p,J},m_{p,0})
\le \tau + \frac r2 \le r,
\]
hence $g_p^{(J)}\in U_p$.

Now fix $p$.
Since $m_{p,J}$ maximizes $F_J$ on $U_p$ and $g_p^{(J)}\in U_p$,
\[
F_J(g_p^{(J)})\le F_J(m_{p,J}).
\]
Also $F_J$ is $(\mu/2)$-strongly concave and $M_J$-smooth on $U_p$,
so Lemma~\ref{lem:2sided} applied at \(g_0=m_{p,J}\) and \(g=g_p^{(J)}\) gives
\[
F_J(g_p^{(J)})
\ge F_J(m_{p,J})-\frac{M_J}{2}\tau^2.
\]
Choose any basin index $p^\star$ such that
\[
F_J(m_{p^\star,J})=\max_{q=1,\dots,P}F_J(m_{q,J}).
\]
Since every global maximizer of $F_J$ lies in $S$ by
Proposition~\ref{prop:AC_consequences}(iv), and since $m_{p,J}$ is the
unique maximizer of $F_J$ on $U_p$ for each $p$, we have
\[
\max_{g\in \SO(3)}F_J(g)
=
\max_{g\in S}F_J(g)
=
\max_{q=1,\dots,P}F_J(m_{q,J})
=
F_J(m_{p^\star,J}).
\]
Since $\hat g$ is the best final candidate,
\[
F_J(\hat g)
=
\max_{p=1,\dots,P} F_J(g_p^{(J)})
\ge F_J(g_{p^\star}^{(J)})
\ge F_J(m_{p^\star,J})-\frac{M_J}{2}\tau^2.
\]
Therefore
\[
0 \le \max_{g\in\SO(3)}F_J(g)-F_J(\hat g)\le \frac{M_J}{2}\tau^2.
\]
\end{proof}

Theorem~\ref{thm:multiband_correct} proves that the objective value at $\hat g$ is near globally optimal. To additionally guarantee that $\hat g$ itself is close to the globally
best maximizer requires an additional separation or curvature condition, such
as used in Corollary~\ref{cor:winning_basin}.

\begin{corollary}[Unique globally best basin is correctly identified]
\label{cor:winning_basin}\leavevmode\par\noindent
Assume \ref{ass:A-initialization}--\ref{ass:D-new_accuracy}, let
$p^\star \in \argmax_{p=1,\dots,P} F_J(m_{p,J})$, and set
\[
\gamma \eqdef
F_J(m_{p^\star,J})-\max_{p\neq p^\star}F_J(m_{p,J}).
\]
If $\frac{M_J}{2}\tau^2 < \gamma$, then $p^\star$ is the unique best basin and
\[
\hat p = p^\star,
\qquad
\hat g = g_{p^\star}^{(J)},
\qquad
d(\hat g,m_{p^\star,J})\le \tau.
\]
\end{corollary}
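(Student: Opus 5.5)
The plan is to reuse the two-sided estimate from the proof of Theorem~\ref{thm:multiband_correct} basin by basin, and then compare the winning basin against every other basin. By Proposition~\ref{prop:AC_consequences} and Assumption~\ref{ass:D-new_accuracy}, each final iterate satisfies $g_p^{(J)}\in U_p$ and $d(g_p^{(J)},m_{p,J})\le\tau$. This is exactly what the proof of Theorem~\ref{thm:multiband_correct} establishes via the triangle inequality with drift bound (iii). For every $p$ this gives the sandwich
\[
F_J(m_{p,J})-\frac{M_J}{2}\tau^2\le F_J\bigl(g_p^{(J)}\bigr)\le F_J(m_{p,J}).
\]
The upper bound holds because $m_{p,J}$ maximizes $F_J$ on $U_p$. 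The lower bound is Lemma~\ref{lem:2sided} applied at the critical point $m_{p,J}$, using the $(\mu/2)$-strong concavity from Proposition~\ref{prop:AC_consequences}(i) and the curvature bound $M_J$.

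Uniqueness of the best basin is immediate. The hypothesis gives $\gamma>\frac{M_J}{2}\tau^2\ge 0$, so $F_J(m_{p^\star,J})>F_J(m_{p,J})$ for every $p\neq p^\star$. Hence the argmax defining $p^\star$ is a singleton.

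Next I show that the algorithm selects $p^\star$. For $p\ne p^\star$, the upper half of the sandwich and the definition of $\gamma$ give
\[
F_J\bigl(g_p^{(J)}\bigr)\le F_J(m_{p,J})\le F_J(m_{p^\star,J})-\gamma.
\]
The lower half applied to $p^\star$ gives
\[
F_J\bigl(g_{p^\star}^{(J)}\bigr)\ge F_J(m_{p^\star,J})-\frac{M_J}{2}\tau^2.
\]
Since $\frac{M_J}{2}\tau^2<\gamma$, the candidate from basin $p^\star$ is strictly better than every other candidate. The argmax over $p$ in Algorithm~\ref{algo:coarse-to-fine} is therefore the single index $\hat p=p^\star$, with no tie-breaking issue, and so $\hat g=g_{p^\star}^{(J)}$. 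The distance bound $d(\hat g,m_{p^\star,J})\le\tau$ is then just Assumption~\ref{ass:D-new_accuracy} for $p=p^\star$, $j=J$.

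The only delicate point is justifying the lower bound in the sandwich. It requires $g_p^{(J)}\in U_p$ and $U_p$ geodesically convex so that Lemma~\ref{lem:2sided} applies, and both are inherited from the proof of Theorem~\ref{thm:multiband_correct}. I would cite that argument rather than redo it. Combined with Proposition~\ref{prop:AC_consequences}(iv), this also confirms that $m_{p^\star,J}$ is a global maximizer of $F_J$ on $\SO(3)$, which is what gives the conclusion its interpretation as identifying the globally best basin.
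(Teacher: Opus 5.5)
Your proposal is correct and follows the paper's proof essentially step for step. It uses the same sandwich $F_J(m_{p,J})-\frac{M_J}{2}\tau^2\le F_J(g_p^{(J)})\le F_J(m_{p,J})$ inherited from the proof of Theorem~\ref{thm:multiband_correct}, the same strict comparison $F_J(g_{p^\star}^{(J)})-F_J(g_p^{(J)})\ge\gamma-\frac{M_J}{2}\tau^2>0$ against every other basin, and Assumption~\ref{ass:D-new_accuracy} at $p=p^\star$, $j=J$ for the distance bound.
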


\begin{proof}
Note that $p^\star$ is the unique best basin: since $\frac{M_J}{2}\tau^2\ge 0$, and because $\frac{M_J}{2}\tau^2<\gamma$ forces $\gamma>0$, we have that $F_J(m_{p^\star,J})$ strictly exceeds $F_J(m_{p,J})$ for every
$p\neq p^\star$.

As in the proof of Theorem~\ref{thm:multiband_correct}, $g_p^{(J)}\in U_p$ for
every $p$. For the best basin,
\[
F_J(g_{p^\star}^{(J)})\ge F_J(m_{p^\star,J})-\frac{M_J}{2}\tau^2,
\]
while for every $p\neq p^\star$,
\[
F_J(g_p^{(J)})\le F_J(m_{p,J})\le F_J(m_{p^\star,J})-\gamma.
\]
Subtracting,
\[
F_J(g_{p^\star}^{(J)})-F_J(g_p^{(J)})\ge \gamma-\frac{M_J}{2}\tau^2>0,
\]
so the final $\argmax$ selects $p^\star$: $\hat p=p^\star$ and
$\hat g=g_{p^\star}^{(J)}$. Finally $d(\hat g,m_{p^\star,J})\le\tau$ by
Assumption~\ref{ass:D-new_accuracy}.
\end{proof}

\subsection{Benefits of marching}\label{sec:benefit_marching}

The assumptions of Theorem~\ref{thm:multiband_correct} are stated
in terms of quantities $a_j$, $b_j$, $\delta_j$.
To connect these with the bandwidth schedule, we use the fact that
the derivatives of bandlimited functions on $\SO(3)$ are bounded by the powers of the bandwidth. 

A natural question is whether intermediate bandwidths are beneficial and if so, by how much, compared with making a single big jump from $L_0$ to $L_J$. Indeed, frequency marching controls the perturbation induced by higher angular frequencies at the scale of each increment rather than the final bandwidth. 
A single jump from $L_0$ to $L_J$ requires the full perturbation $F_J-F_0$ to be small enough, relative to the basin curvature margin $\mu$ at $L_0$; as we show, marching distributes this perturbation over smaller increments.

Since each increment $E_j = F_j - F_{j-1}$ has bandwidth at most $L_j$, Lemma~\ref{lem:bernstein_SO3} yields
\begin{equation}\label{eq:bernstein_per_step}
    a_j \leq B_1 (1+L_j)\,\delta_j, \qquad b_j \leq B_2 (1+L_j)^2\,\delta_j,
\end{equation}
where we recall that $\delta_j = \|E_j\|_\infty$. This allows us to write the assumptions in terms of the quantities $(L_j, \delta_j)_{j=1}^J$.

We let $\lesssim$ absorb the Bernstein constants $B_1$, $B_2$ from Lemma~\ref{lem:bernstein_SO3}. Suppose we skip all intermediate levels and apply the basin persistence argument directly to the single perturbation $E_{\mathrm{tot}} = F_J - F_0$. Since $E_{\mathrm{tot}}$ has bandwidth $L_J$ we have by Lemma~\ref{lem:bernstein_SO3}
\[
    \|\nabla^2 E_{\mathrm{tot}}\|_\infty \lesssim (1+L_J)^2\,\|E_{\mathrm{tot}}\|_\infty.
\]
Preserving strong concavity of the tracked basins therefore is guaranteed by
\begin{equation}\label{eq:oneshot_requirement}
    \|E_{\mathrm{tot}}\|_\infty \lesssim \frac{\mu}{(1+L_J)^2}.
\end{equation}

By contrast, when marching, we need to apply Lemma~\ref{lem:bernstein_SO3} to each increment $E_j$ separately. Using~\eqref{eq:bernstein_per_step}, the  condition $\sum_j b_j \leq \mu/2$ from Assumption~(B) becomes
\[
    B_2 \sum_{j=1}^J (1+L_j)^2\,\delta_j \leq \frac{\mu}{2}.
\]
Write $\delta_{\mathrm{tot}} = \sum_j \delta_j$ and define the amplitude-weighted effective bandwidth for $\delta_{\mathrm{tot}} > 0$ as

\begin{equation}\label{eq:effective_bandwidth}
    \overline{L}_2^{\,2} \eqdef \frac{\sum_{j=1}^J (1+L_j)^2\,\delta_j}{\delta_{\mathrm{tot}}}\,
\end{equation}
so that the marching condition reads $\delta_{\mathrm{tot}} \lesssim \mu / \overline{L}_2^{\,2}$.
Note that $\delta_{\mathrm{tot}} \geq \|E_{\mathrm{tot}}\|_\infty$ by the triangle inequality, so the marching analysis works with a potentially larger numerator. The gain comes from the denominator: since $\overline{L}_2 \le 1+L_J$, with equality only when all nonzero increment sizes \( \delta_j \) occur at the final level, marching compensates by replacing the worst-case $(1+L_J)^2$ weight with the effective weight~$\overline{L}_2^{\,2}$. 
The denominator improves from $(1+L_J)^2$ to $\overline{L}_2^2$, but this can be offset by the larger numerator $\delta_{\mathrm{tot}} = \sum_j \delta_j$. Thus, marching is most advantageous when the increments are concentrated at lower bandwidths, which is the typical regime for smooth signals or noise-dominated high-frequency content. An analogous comparison holds for the gradient-based conditions. The one-shot approach requires $\|E_{\mathrm{tot}}\|_\infty \lesssim \mu r / L_J$, whereas marching imposes two per-step conditions: $L_j \delta_j \lesssim \mu r$ (to apply basin persistence at each level) and $\sum_j L_j \delta_j \lesssim \mu r$ (to control the cumulative drift of the tracked maxima). Both are easier to satisfy when the increments $\delta_j$ decay with~$j$, which is, again, a typical situation. 

\paragraph{A toy model}
We consider an example of uniform bandwidth steps $L_j = L_0 + j\Delta$ with geometrically decaying increments $\delta_j = \delta\,q^{j-1}$, $q \in (0,1)$. A direct computation in the $J \to \infty$ limit gives
\begin{equation}\label{eq:Lbar_toy}
    \overline{L}_2^{\,2}
    = (1+L_0 + \Delta)^2 + \frac{2(1+L_0 + \Delta)\Delta\,q}{1-q} + \frac{\Delta^2\,q(1+q)}{(1-q)^2}\,.
\end{equation}
The key observation is that this expression is \emph{independent of~$L_J$}: no matter how large the final bandwidth, the effective penalty depends only on~$L_0$ and the decay rate~$q$. When \(q\) is small enough that
\[
\frac{2(1+L_0 + \Delta)\Delta q}{1-q}
+
\frac{\Delta^2q(1+q)}{(1-q)^2}
\ll (1+L_0 + \Delta)^2,
\]
we have \(\overline{L}_2\approx 1+ L_0+ \Delta \), and the (a priori) marching requirement $\delta_{\mathrm{tot}} \lesssim \mu / (1+L_0+\Delta)^2$ is generically much milder than the one-shot requirement~\eqref{eq:oneshot_requirement}.
Thus, if most of the spectral change occurs in the first few bandwidth increments, the continuation difficulty is governed by those early bandwidths rather than by the final bandwidth.

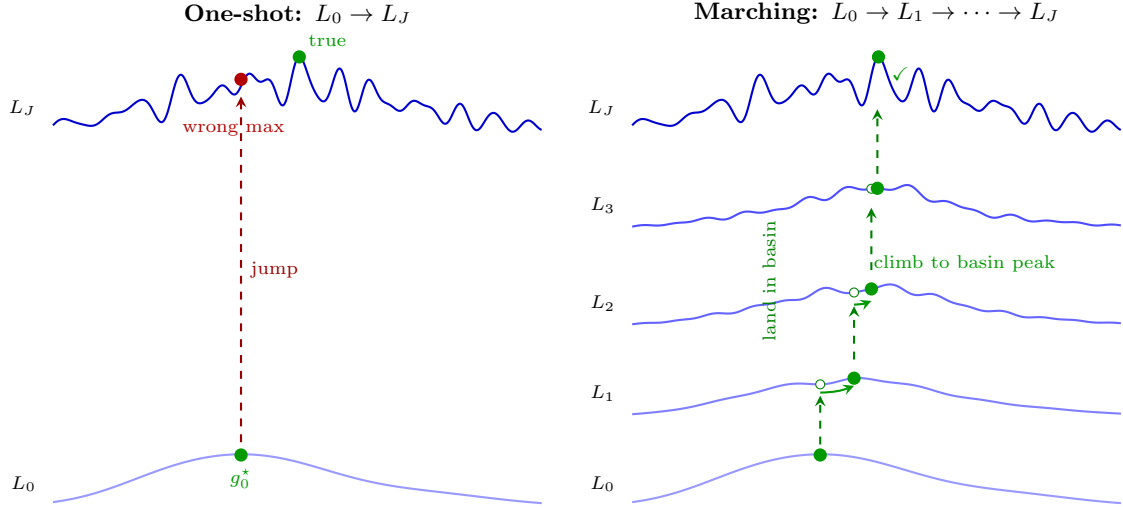
\begin{figure}[t]
\centering
\begin{tikzpicture}[>=stealth, scale=1.0]

% Parameters
\def\xmin{0}
\def\xmax{6.5}

% Common x-locations for tracked points
\def\xLzero{2.50}
\def\xLone{2.95}
\def\xLtwo{3.16}
\def\xLthree{3.25}
\def\xLJ{3.275}

% ============================================================
% LEFT PANEL: One-shot 
% ============================================================
\begin{scope}[shift={(0,0)}]

  \node[font=\small\bfseries, anchor=south] at (3.25, 6.3) {One-shot: $L_0 \to L_J$};

  % --- L_J: oscillatory landscape (top) ---
  \def\yoffJ{5.0}
  \draw[thick, blue!80!black] plot[domain=\xmin:\xmax, samples=400, smooth]
    (\x, {\yoffJ + 0.6*exp(-0.5*(\x-3.2)^2) + 0.25*exp(-0.8*(\x-1.5)^2) + 0.15*exp(-0.6*(\x-5.0)^2)
      + 0.18*sin(12*\x r)*exp(-0.15*(\x-3.2)^2)
      + 0.12*sin(8*\x r + 1.2)*exp(-0.2*(\x-2)^2)
      + 0.1*cos(15*\x r + 0.5)*exp(-0.1*(\x-4)^2)
      + 0.08*sin(20*\x r)*exp(-0.3*(\x-3.5)^2)});
  \node[font=\scriptsize, anchor=east] at (-0.1, \yoffJ+0.3) {$L_J$};

  % True maximum at L_J
  \fill[matchaColor] (\xLJ,
    {\yoffJ + 0.6*exp(-0.5*(\xLJ-3.2)^2) + 0.25*exp(-0.8*(\xLJ-1.5)^2) + 0.15*exp(-0.6*(\xLJ-5.0)^2)
      + 0.18*sin(12*\xLJ r)*exp(-0.15*(\xLJ-3.2)^2)
      + 0.12*sin(8*\xLJ r + 1.2)*exp(-0.2*(\xLJ-2)^2)
      + 0.1*cos(15*\xLJ r + 0.5)*exp(-0.1*(\xLJ-4)^2)
      + 0.08*sin(20*\xLJ r)*exp(-0.3*(\xLJ-3.5)^2)}) circle (2.5pt);
  \node[font=\scriptsize, matchaColor, anchor=south west] at (\xLJ, \yoffJ+1.0) {true};

  % Wrong local maximum reached from one-shot initialization
  \fill[red!70!black] (\xLzero,
    {\yoffJ + 0.6*exp(-0.5*(2.85-3.2)^2) + 0.25*exp(-0.8*(2.85-1.5)^2) + 0.15*exp(-0.6*(2.85-5.0)^2)
      + 0.18*sin(12*2.85 r)*exp(-0.15*(2.85-3.2)^2)
      + 0.12*sin(8*2.85 r + 1.2)*exp(-0.2*(2.85-2)^2)
      + 0.1*cos(15*2.85 r + 0.5)*exp(-0.1*(2.85-4)^2)
      + 0.08*sin(20*2.85 r)*exp(-0.3*(2.85-3.5)^2)}) circle (2.5pt);

%   % Arrow showing Newton going to wrong max
%   \draw[->, thick, red!60!black] (\xLzero, \yoffJ-0.15) to[bend left=20] (\xLzero, \yoffJ+0.45);
    \node[font=\scriptsize, red!70!black, anchor=south] at (\xLzero-0.1, \yoffJ-0.2) {wrong max};

  % --- Dashed arrow: big jump from L_0 to L_J ---
  \draw[dashed, thick, red!60!black, ->] (\xLzero, 0.85) -- (\xLzero, \yoffJ+0.45)
    node[midway, right, font=\scriptsize, red!60!black] {jump};

  % --- L_0: smooth landscape (bottom) ---
  \def\yoff{0.0}
  \draw[thick, blue!40] plot[domain=\xmin:\xmax, samples=200, smooth]
    (\x, {\yoff + 0.6*exp(-0.5*(\x-2.8)^2) + 0.25*exp(-0.8*(\x-1.5)^2) + 0.15*exp(-0.6*(\x-5.0)^2)});
  \node[font=\scriptsize, anchor=east] at (-0.1, \yoff+0.3) {$L_0$};

  % Mark the maximum at L_0
  \fill[matchaColor] (\xLzero, {\yoff + 0.68}) circle (2.5pt);
  \node[font=\scriptsize, matchaColor, anchor=north] at (\xLzero, {\yoff+0.62}) {$g_0^\star$};

\end{scope}

% ============================================================
% RIGHT PANEL: Marching
% ============================================================
\begin{scope}[shift={(7.7,0)}]

  \def\yoff{0.0}
  \node[font=\small\bfseries, anchor=south] at (3.25, 6.3) {Marching: $L_0 \to L_1 \to \cdots \to L_J$};

  % Common x-locations
  \def\xLzero{2.50}
  \def\xLone{2.95}
  \def\xLtwo{3.18}
  \def\xLthree{3.26}
  \def\xLJ{3.275}

  % --- L_0: smooth landscape (bottom) ---
  \def\yoffZero{0.0}
  \draw[thick, blue!40] plot[domain=\xmin:\xmax, samples=200, smooth]
    (\x, {\yoffZero + 0.6*exp(-0.5*(\x-2.8)^2) + 0.25*exp(-0.8*(\x-1.5)^2) + 0.15*exp(-0.6*(\x-5.0)^2)});
  \node[font=\scriptsize, anchor=east] at (-0.1, \yoffZero+0.3) {$L_0$};
  \fill[matchaColor] (\xLzero, {\yoffZero + 0.68}) circle (2.5pt);

  % --- L_1: broad basin, clearly rising to the right from the landing point ---
  \def\yoffA{1.2}
  \draw[thick, blue!50] plot[domain=\xmin:\xmax, samples=320, smooth]
    (\x, {\yoffA
      + 0.47*exp(-0.52*(\x-3.00)^2)
      + 0.14*exp(-0.85*(\x-1.55)^2)
      + 0.10*exp(-0.55*(\x-5.0)^2)
      - 0.055*exp(-10.0*(\x-2.62)^2)
      + 0.015*sin(7*\x r + 0.3)*exp(-0.30*(\x-3.0)^2)});
  \node[font=\scriptsize, anchor=east] at (-0.1, \yoffA+0.3) {$L_1$};

  % landing point on L_1
  \draw[matchaColor, fill=white] (\xLzero,
    {\yoffA
      + 0.47*exp(-0.52*(\xLzero-3.00)^2)
      + 0.14*exp(-0.85*(\xLzero-1.55)^2)
      + 0.10*exp(-0.55*(\xLzero-5.0)^2)
      - 0.055*exp(-10.0*(\xLzero-2.62)^2)
      + 0.015*sin(7*\xLzero r + 0.3)*exp(-0.30*(\xLzero-3.0)^2)}) circle (1.7pt);

  % peak on L_1
  \fill[matchaColor] (\xLone,
    {\yoffA
      + 0.47*exp(-0.52*(\xLone-3.00)^2)
      + 0.14*exp(-0.85*(\xLone-1.55)^2)
      + 0.10*exp(-0.55*(\xLone-5.0)^2)
      - 0.055*exp(-10.0*(\xLone-2.62)^2)
      + 0.015*sin(7*\xLone r + 0.3)*exp(-0.30*(\xLone-3.0)^2)}) circle (2.5pt);

  % --- L_2: landing from L_1 is visibly left of the local peak ---
  \def\yoffB{2.4}
  \draw[thick, blue!60] plot[domain=\xmin:\xmax, samples=340, smooth]
    (\x, {\yoffB
      + 0.51*exp(-0.65*(\x-3.18)^2)
      + 0.14*exp(-0.85*(\x-1.55)^2)
      + 0.10*exp(-0.55*(\x-5.0)^2)
      - 0.075*exp(-15.0*(\x-3.02)^2)
      + 0.030*sin(8*\x r + 0.5)*exp(-0.24*(\x-3.15)^2)
      + 0.020*sin(13*\x r - 0.2)*exp(-0.16*(\x-3.18)^2)});
  \node[font=\scriptsize, anchor=east] at (-0.1, \yoffB+0.3) {$L_2$};

  % landing point on L_2
  \draw[matchaColor, fill=white] (\xLone,
    {\yoffB
      + 0.51*exp(-0.65*(\xLone-3.18)^2)
      + 0.14*exp(-0.85*(\xLone-1.55)^2)
      + 0.10*exp(-0.55*(\xLone-5.0)^2)
      - 0.075*exp(-15.0*(\xLone-3.02)^2)
      + 0.030*sin(8*\xLone r + 0.5)*exp(-0.24*(\xLone-3.15)^2)
      + 0.020*sin(13*\xLone r - 0.2)*exp(-0.16*(\xLone-3.18)^2)}) circle (1.7pt);

  % peak on L_2
  \fill[matchaColor] (\xLtwo,
    {\yoffB
      + 0.51*exp(-0.65*(\xLtwo-3.18)^2)
      + 0.14*exp(-0.85*(\xLtwo-1.55)^2)
      + 0.10*exp(-0.55*(\xLtwo-5.0)^2)
      - 0.075*exp(-15.0*(\xLtwo-3.02)^2)
      + 0.030*sin(8*\xLtwo r + 0.5)*exp(-0.24*(\xLtwo-3.15)^2)
      + 0.020*sin(13*\xLtwo r - 0.2)*exp(-0.16*(\xLtwo-3.18)^2)}) circle (2.5pt);

  % --- L_3: small final drift to the right before L_J ---
  \def\yoffC{3.7}
  \draw[thick, blue!70] plot[domain=\xmin:\xmax, samples=360, smooth]
    (\x, {\yoffC
      + 0.54*exp(-0.82*(\x-3.25)^2)
      + 0.14*exp(-0.85*(\x-1.55)^2)
      + 0.10*exp(-0.55*(\x-5.0)^2)
      - 0.050*exp(-18.0*(\x-3.12)^2)
      + 0.040*sin(9*\x r + 0.45)*exp(-0.20*(\x-3.23)^2)
      + 0.028*sin(14*\x r - 0.2)*exp(-0.14*(\x-3.25)^2)});
  \node[font=\scriptsize, anchor=east] at (-0.1, \yoffC+0.3) {$L_3$};

  % landing point on L_3
  \draw[matchaColor, fill=white] (\xLtwo,
    {\yoffC
      + 0.54*exp(-0.82*(\xLtwo-3.25)^2)
      + 0.14*exp(-0.85*(\xLtwo-1.55)^2)
      + 0.10*exp(-0.55*(\xLtwo-5.0)^2)
      - 0.050*exp(-18.0*(\xLtwo-3.12)^2)
      + 0.040*sin(9*\xLtwo r + 0.45)*exp(-0.20*(\xLtwo-3.23)^2)
      + 0.028*sin(14*\xLtwo r - 0.2)*exp(-0.14*(\xLtwo-3.25)^2)}) circle (1.7pt);

  % peak on L_3
  \fill[matchaColor] (\xLthree,
    {\yoffC
      + 0.54*exp(-0.82*(\xLthree-3.25)^2)
      + 0.14*exp(-0.85*(\xLthree-1.55)^2)
      + 0.10*exp(-0.55*(\xLthree-5.0)^2)
      - 0.050*exp(-18.0*(\xLthree-3.12)^2)
      + 0.040*sin(9*\xLthree r + 0.45)*exp(-0.20*(\xLthree-3.23)^2)
      + 0.028*sin(14*\xLthree r - 0.2)*exp(-0.14*(\xLthree-3.25)^2)}) circle (2.5pt);

  % --- L_J: same endpoint as left panel ---
  \def\yoffJ{5.0}
  \draw[thick, blue!80!black] plot[domain=\xmin:\xmax, samples=400, smooth]
    (\x, {\yoffJ + 0.6*exp(-0.5*(\x-3.2)^2) + 0.25*exp(-0.8*(\x-1.5)^2) + 0.15*exp(-0.6*(\x-5.0)^2)
      + 0.18*sin(12*\x r)*exp(-0.15*(\x-3.2)^2)
      + 0.12*sin(8*\x r + 1.2)*exp(-0.2*(\x-2)^2)
      + 0.1*cos(15*\x r + 0.5)*exp(-0.1*(\x-4)^2)
      + 0.08*sin(20*\x r)*exp(-0.3*(\x-3.5)^2)});
  \node[font=\scriptsize, anchor=east] at (-0.1, \yoffJ+0.3) {$L_J$};

  % landing point on L_J
  \draw[matchaColor, fill=white] (\xLthree,
    {\yoffJ + 0.6*exp(-0.5*(\xLthree-3.2)^2) + 0.25*exp(-0.8*(\xLthree-1.5)^2) + 0.15*exp(-0.6*(\xLthree-5.0)^2)
      + 0.18*sin(12*\xLthree r)*exp(-0.15*(\xLthree-3.2)^2)
      + 0.12*sin(8*\xLthree r + 1.2)*exp(-0.2*(\xLthree-2)^2)
      + 0.1*cos(15*\xLthree r + 0.5)*exp(-0.1*(\xLthree-4)^2)
      + 0.08*sin(20*\xLthree r)*exp(-0.3*(\xLthree-3.5)^2)}) circle (1.7pt);

  % true maximum at L_J
  \fill[matchaColor] (\xLJ,
    {\yoffJ + 0.6*exp(-0.5*(\xLJ-3.2)^2) + 0.25*exp(-0.8*(\xLJ-1.5)^2) + 0.15*exp(-0.6*(\xLJ-5.0)^2)
      + 0.18*sin(12*\xLJ r)*exp(-0.15*(\xLJ-3.2)^2)
      + 0.12*sin(8*\xLJ r + 1.2)*exp(-0.2*(\xLJ-2)^2)
      + 0.1*cos(15*\xLJ r + 0.5)*exp(-0.1*(\xLJ-4)^2)
      + 0.08*sin(20*\xLJ r)*exp(-0.3*(\xLJ-3.5)^2)}) circle (2.5pt);
  \node[font=\scriptsize, matchaColor, anchor=south west] at (\xLJ, \yoffJ+1.0) {true};

  % Vertical "land in basin" arrows
  \draw[->, thick, matchaColor, dashed] (\xLzero, 0.82) -- (\xLzero, 1.47);
  \draw[->, thick, matchaColor, dashed] (\xLone, 1.95) -- (\xLone, 2.66);
  \draw[->, thick, matchaColor, dashed] (\xLtwo, 3.14) -- (\xLtwo, 3.97);
  \draw[->, thick, matchaColor, dashed] (\xLthree, 4.42) -- (\xLthree, 5.28);

  % Within-level climb to basin peak
  \draw[->, thick, matchaColor] (\xLzero, 1.51) to[bend right=10] (\xLone, 1.6);
  \draw[->, thick, matchaColor] (\xLone, 2.68) to[bend right=10] (\xLtwo, 2.73);
 
  \node[font=\scriptsize, matchaColor, rotate=90, anchor=south] at (2.03, 2.955) {land in basin};
  \node[font=\scriptsize, matchaColor, anchor=south] at (4.52, 3.0) {climb to basin peak};
 % Mark the maximum at L_0
  \fill[matchaColor] (\xLzero, {\yoff + 0.68}) circle (2.5pt);
  \node[font=\scriptsize, matchaColor, anchor=north] at (\xLzero, {\yoff+0.62}) {$g_0^\star$};
\end{scope}
\end{tikzpicture}
\caption{One-shot versus frequency marching on a schematic 1D correlation landscape.
  \textbf{Left:} jumping directly from~$L_0$ to~$L_J$, Newton's
  method \emph{can} converge to a spurious local maximum that
  emerged at high bandwidth.
  \textbf{Right:} under the conditions of
  Theorem~\ref{thm:multiband_correct}, marching through intermediate
  bandwidths keeps the tracked maximizer in the correct basin at
  each level; Newton refines within the basin before proceeding to
  the next bandwidth.}
\end{figure}

%%%% EXPERIMENTS %%%%% 
\section{Experiments} \label{sec:experiments}
We first validate the proposed method on synthetic data, demonstrating substantial gains in both speed and accuracy over exhaustive rotational search.
We then show that our approach can be used as a drop-in replacement in the subtomogram averaging (STA) pipeline of RELION, accelerating rotational alignment by an order of magnitude while refining a ribosome to $4$\angstrom{}, reaching the local Nyquist limit of $3.8$\angstrom{}.

The experiments were conducted either on one A100-80GB GPU (Section~\ref{sec:exp-synthetic}) or  on four RTX4090 GPUs (Section~\ref{sec:exp_STA}). They can be reproduced using code available at \href{https://github.com/swing-research/Matcha}{the Matcha repository}.

\subsection{Rotational search: synthetic examples}  \label{sec:exp-synthetic}

\def\ps{0.32}
\def\sz{4cm}
\begin{figure}
	\centering
	\begin{subfigure}[t]{\ps\textwidth}
		\centering
		\includegraphics[height=\sz]{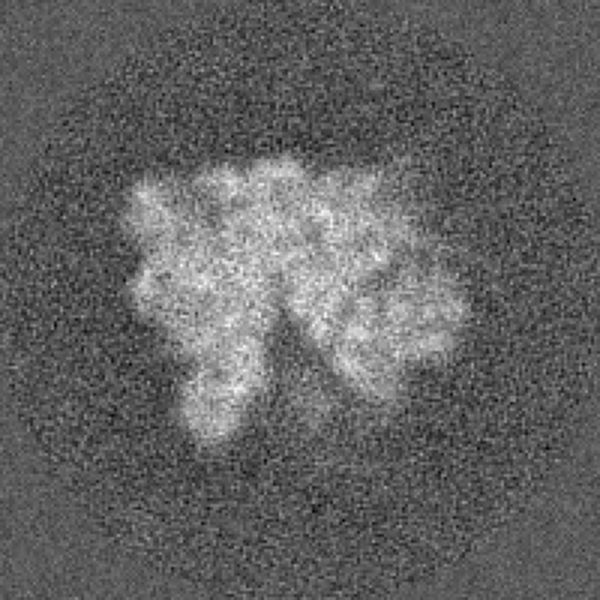}
		\caption{$\text{SNR}=20\text{dB}$.}
	\end{subfigure}%\hfill
	\begin{subfigure}[t]{\ps\textwidth}
		\centering
		\includegraphics[height=\sz]{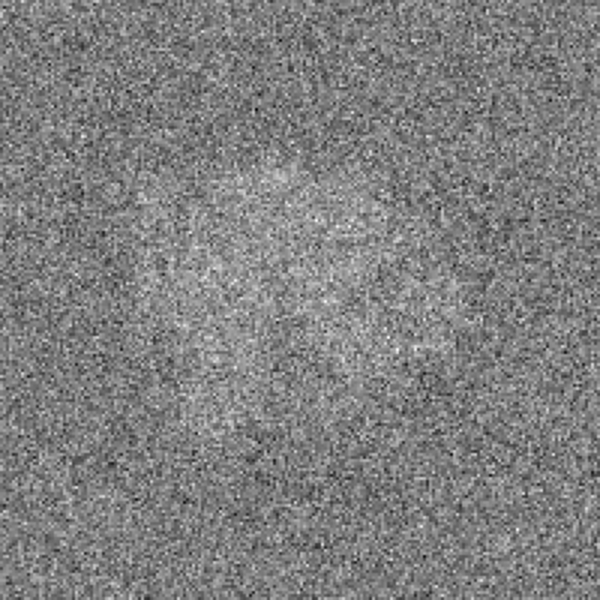}
		\caption{$\text{SNR}=0\text{dB}$.}
	\end{subfigure}%\hfill
	\begin{subfigure}[t]{\ps\textwidth}
		\centering
		\includegraphics[height=\sz]{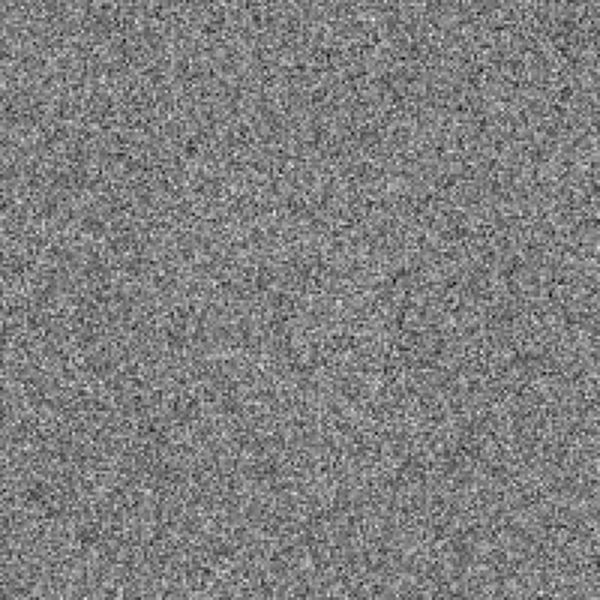}
		\caption{$\text{SNR}=-20\text{dB}$.}
	\end{subfigure}%\hfill
\caption{Slice of a ribosome volume corrupted by Gaussian noise at the indicated SNR levels.}\label{fig:exp_snr}
\end{figure}

Let $h \in \mathbb{R}^{N \times N \times N}$ denote a ribosome density 
at $13$\angstrom{} resolution (EMDB: EMD-3228), obtained from the EMPIAR-10045 dataset\footnote{\href{https://www.ebi.ac.uk/emdb/EMD-3228}{https://www.ebi.ac.uk/emdb/EMD-3228}}.
We synthetically generate measurements
\begin{equation*}
f = g^\star\circ  h + \eta,    
\end{equation*}
where $g^\star \in \SO(3)$ is the ground truth rotation (assumed unknown in the search) and $\eta$ is additive white Gaussian noise. We consider three signal-to-noise ratios: $20~\text{dB}$, $0~\text{dB}$, and $-20~\text{dB}$. Figure~\ref{fig:exp_snr} shows a representative volume slice at each level; at $0~\text{dB}$ the signal and noise have equal energy, and single-particle cryo-EM and cryo-ET alignment often operate in very low-SNR regimes, motivating the \(-20\) dB stress test. We report the results at $0~\text{dB}$ in the main text and defer the remaining SNR levels to Appendix~\ref{sec:other_noise}.

At each SNR, we generate $1000$ randomly rotated volumes of size $N=200$ and compare the performance of Algorithm~\ref{algo:coarse-to-fine} with  SOFFT~\cite{kostelec_ffts_2008,potts_fast_2009,mcewen2015novel}. SOFFT evaluates the cross-correlation on a discrete grid of rotations whose resolution is controlled by an oversampling factor~$K$: increasing \(K\) refines the grid but increases the number of sampled rotations by a factor of order \(K^3\).
For SOFFT experiments we used a batch size of 10 for \(K\le 8\), 4 for \(K=10\), and 1 for \(K=18\), as larger batches exceed GPU memory at higher bandwidths; even so, SOFFT frequently exceeds the 80GB memory limit at large oversampling factors (see Figure~\ref{fig:accuracy-v-bandlimit-snr0db}).

For \method{}, we select $10$ candidate rotations as the highest local maxima of the cross-correlation computed by SOFFT at a coarse bandwidth $L_0=30$ with oversampling factor~$2$.
Candidates are then refined across bandwidth $L_j \in \{30,40,60,L_{\max}\}$, skipping any $L_j$ that exceed $L_{\max}$: at each~$L_j$, a single Newton step is applied to every candidate, and refined estimates are propagated to the next band. After optimization at $L_{\max}$, the candidate with the highest correlation score is retained.

Figure~\ref{fig:accuracy-v-time} reports the wall-clock time and 90th-percentile alignment error for a fixed bandwidth $L_{\max} = 40$, which yields the fastest inference but not the best resolution---especially for SOFFT, whose accuracy is fundamentally limited by its grid spacing. At the highest oversampling factor that does not cause out-of-memory errors ($K = 18$), SOFFT achieves $0.13^\circ$  accuracy but requires over a hundred seconds. For $K=8$ and above, wall-clock times range from ten seconds to several minutes. By contrast, \method{} achieves $0.03^\circ$ accuracy---over four times more precise---within a few seconds, despite starting from a coarse initial grid. 

This gain stems from a structural advantage: unlike SOFFT, whose accuracy is fundamentally limited by the grid spacing at a given bandwidth, \method{} uses the grid only to identify candidate basins and then refines continuously to a precision that is independent of the initial grid resolution.

\method{} also uses memory more efficiently, which enables larger batch sizes when processing multiple volumes. Under identical hardware constraints, SOFFT is limited to a batch size of 10 for $K \le 8$, a batch size of $4$ for $K = 10$, and a batch size of $1$ for $K = 18$, whereas \method{} supports batch sizes of up to $100$.

The maximum bandwidth $L_{\max}$ is a key hyperparameter of both methods: small values yield fast computation but introduce bias that limits accuracy, while larger values are slower but more accurate. Figure~\ref{fig:accuracy-v-bandlimit-snr0db} confirms that higher $L_{\max}$ consistently improves estimation accuracy. The effect is moderate for \method{} and more pronounced for SOFFT at all oversampling factors.  Beyond $L_{\max} = 100$, improvements for \method{} are marginal, whereas SOFFT becomes infeasible due to memory constraints.

We also compare the Newton refinement used in \method{} with a first-order gradient method. Since both methods require evaluating the same bandlimited objective, the practical difference is the number of refinement steps needed to reach high precision. Figure~\ref{fig:newton_v_gd} shows that Newton refinement reaches close to optimal accuracy after very few steps, whereas the first-order method requires many more steps for the desired accuracy. This supports our use of one Newton step per bandwidth level in the synthetic experiments.
\newlength{\panelsep}
\setlength{\panelsep}{4em}

\newlength{\panelwidth}
\setlength{\panelwidth}{\dimexpr(\textwidth-\panelsep)/2\relax}

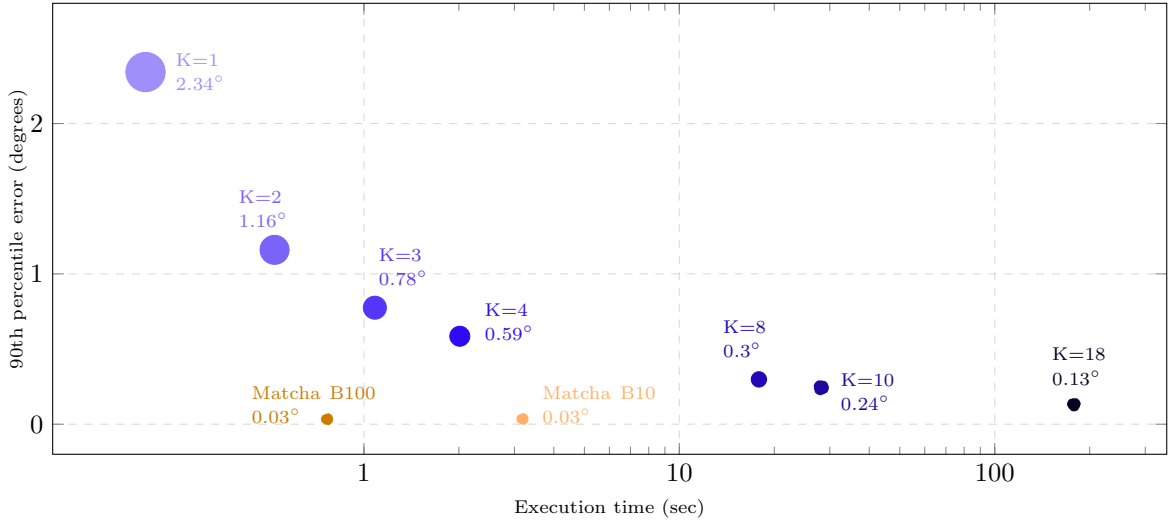
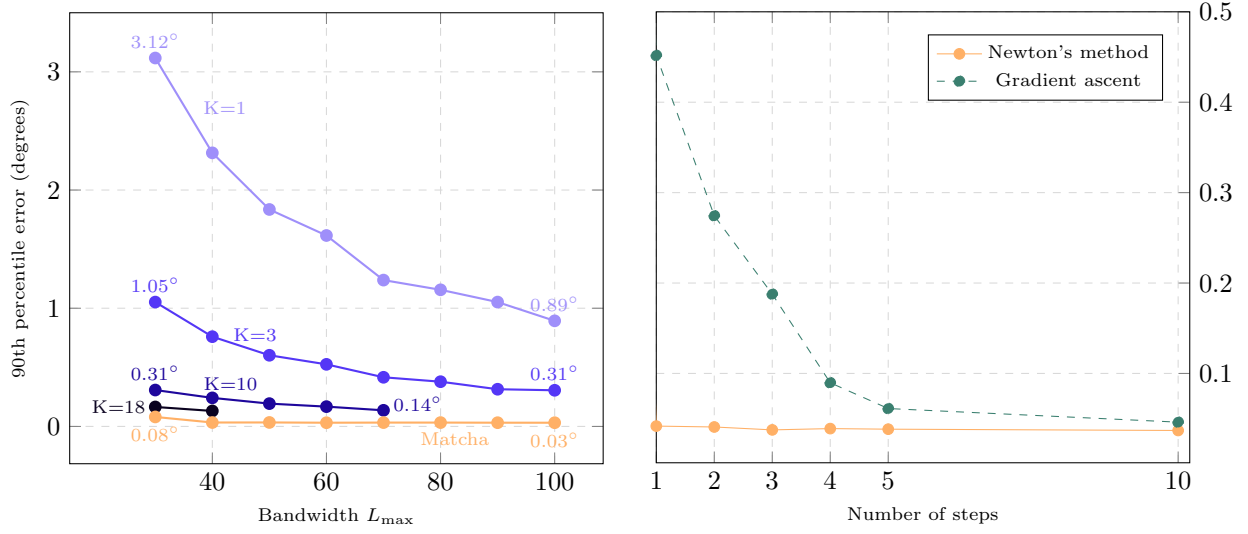
\begin{figure}[htbp]
    \centering
     \hspace{-0.01\textwidth}\begin{subfigure}[t]{0.95\textwidth}
    \centering
    \begin{tikzpicture}
        \begin{axis}[
            xmode=log,
            xlabel={\scriptsize Execution time (sec)},
            ylabel={\scriptsize90th percentile error (degrees)},
            ymin=-0.2,
            ymax=2.8,
            xtick={0.001, 0.01,0.1,1},
            xticklabels={1,10,100,1000},
         	grid=major, % Display a grid
			grid style={dashed,gray!30}, % Set the style
            width=\linewidth,
            height=6cm,
            %ticklabel style={font=\scriptsize},
            ylabel style={yshift=-0.2cm},
            scale only axis,
            trim axis left,
            trim axis right,
        ]
        
\newcommand{\snr}{snr0}
\newcommand{\pathPlotDots}{plots/plot2_newlayout/}

\DTLloaddb{mydata}{\pathPlotDots/sofft_K1_\snr.csv}
\DTLgetvalue{\alphavalA}{mydata}{1}{2}
\pgfmathparse{round(\alphavalA*100)/100}
\edef\alphavalroundedA{\pgfmathresult}
\pgfmathparse{(\alphavalroundedA*1.5+0.2)}
\edef\alphascaleA{\pgfmathresult}
\addplot+[mark=*, color=cOv1, mark options={scale=\alphascaleA}]
  table[col sep=comma, x=x, y=y] {\pathPlotDots/sofft_K1_\snr.csv} node[pos=0,right, text width=5cm] {\scriptsize{\quad K=1 \\\vspace{-0.1cm}\quad$\pgfmathprintnumber[fixed,precision=2]{\alphavalA}^\circ$}};
\DTLgetvalue{\xA}{mydata}{1}{1}
\DTLgetvalue{\yA}{mydata}{1}{2}
\DTLcleardb{mydata}
\DTLdeletedb{mydata}

\DTLloaddb{mydata}{\pathPlotDots/sofft_K2_\snr.csv}
\DTLgetvalue{\alphavalB}{mydata}{1}{2}
\pgfmathparse{round(\alphavalB*100)/100.}
\edef\alphavalroundedB{\pgfmathresult}
\pgfmathparse{(\alphavalroundedB*1.5 + 1)}
\edef\alphascaleB{\pgfmathresult}
\addplot+[mark=*, color=cOv2, mark options={scale=\alphascaleB}]
  table[col sep=comma, x=x, y=y] {\pathPlotDots/sofft_K2_\snr.csv} node[pos=0,text width=5cm, anchor=south west, xshift= -25pt, yshift=5pt] {\scriptsize{\quad K=2 \\\vspace{-0.1cm}\quad$\pgfmathprintnumber[fixed,precision=2]{\alphavalB}^\circ$}};
\DTLcleardb{mydata}
\DTLdeletedb{mydata}

\DTLloaddb{mydata}{\pathPlotDots/sofft_K3_\snr.csv}
\DTLgetvalue{\alphavalC}{mydata}{1}{2}
\pgfmathparse{round(\alphavalC*100)/100}
\edef\alphavalroundedC{\pgfmathresult}
\pgfmathparse{(\alphavalroundedC*1.5 + 1)}
\edef\alphascaleC{\pgfmathresult}
\addplot+[mark=*, color=cOv3, mark options={scale=\alphascaleC}]
  table[col sep=comma, x=x, y=y] {\pathPlotDots/sofft_K3_\snr.csv} node[pos=0,right, text width=5cm, anchor=south west, yshift=5pt, xshift= -10pt] {\scriptsize{\quad K=3 \\\vspace{-0.1cm}\quad$\pgfmathprintnumber[fixed,precision=2]{\alphavalC}^\circ$}};
\DTLcleardb{mydata}
\DTLdeletedb{mydata}

\DTLloaddb{mydata}{\pathPlotDots/sofft_K4_\snr.csv}
\DTLgetvalue{\alphavalD}{mydata}{1}{2}
\pgfmathparse{round(\alphavalD*100)/100}
\edef\alphavalroundedD{\pgfmathresult}
\pgfmathparse{(\alphavalroundedD*1.5+ 1)}
\edef\alphascaleD{\pgfmathresult}
\addplot+[mark=*, color=cOv5, mark options={scale=\alphascaleD}]
  table[col sep=comma, x=x, y=y] {\pathPlotDots/sofft_K4_\snr.csv} node[pos=0,right, text width=5cm, anchor=south west, yshift=-5pt, xshift=-2pt]{\scriptsize{\quad K=4 \\\vspace{-0.1cm}\quad$\pgfmathprintnumber[fixed,precision=2]{\alphavalD}^\circ$}};
\DTLcleardb{mydata}
\DTLdeletedb{mydata}

\DTLloaddb{mydata}{\pathPlotDots/sofft_K8_\snr.csv}
\DTLgetvalue{\alphavalE}{mydata}{1}{2}
\pgfmathparse{round(\alphavalE*100)/100}
\edef\alphavalroundedE{\pgfmathresult}
\pgfmathparse{(\alphavalroundedE*1.5 + 1)}
\edef\alphascaleE{\pgfmathresult}
\addplot+[mark=*, color=cOv8, mark options={scale=\alphascaleE}]
  table[col sep=comma, x=x, y=y] {\pathPlotDots/sofft_K8_\snr.csv} node[pos=0,right, text width=5cm,anchor=south west, yshift=5pt, xshift=-25pt]{\scriptsize{\quad K=8\\\vspace{-0.1cm}\quad $\pgfmathprintnumber[fixed,precision=2]{\alphavalE}^\circ$}};
\DTLcleardb{mydata}
\DTLdeletedb{mydata}

\DTLloaddb{mydata}{\pathPlotDots/sofft_K10_\snr.csv}
\DTLgetvalue{\alphavalF}{mydata}{1}{2}
\pgfmathparse{round(\alphavalF*100)/100}
\edef\alphavalroundedF{\pgfmathresult}
\pgfmathparse{(\alphavalroundedF*1.5 + 1)}
\edef\alphascaleF{\pgfmathresult}
\addplot+[mark=*, color=cOv10, mark options={scale=\alphascaleF}]
  table[col sep=comma, x=x, y=y] {\pathPlotDots/sofft_K10_\snr.csv} node[right, text width=5cm,anchor=south west, yshift=-12pt, xshift=-4pt]{\scriptsize{\quad K=10\\\vspace{-0.1cm}\quad$\pgfmathprintnumber[fixed,precision=2]{\alphavalF}^\circ$}};
\DTLcleardb{mydata}
\DTLdeletedb{mydata}

\DTLloaddb{mydata}{\pathPlotDots/sofft_K18_\snr.csv}
\DTLgetvalue{\alphavalL}{mydata}{1}{2}
\pgfmathparse{round(\alphavalL*100)/100}
\edef\alphavalroundedL{\pgfmathresult}
\pgfmathparse{(\alphavalroundedL*1.5+ 1)}
\edef\alphascaleL{\pgfmathresult}
\addplot+[mark=*, color=cOv18, mark options={scale=\alphascaleL}]
  table[col sep=comma, x=x, y=y] {\pathPlotDots/sofft_K18_\snr.csv} node[pos=0,below, text width=5cm,anchor=north, xshift=55pt, yshift=25pt] {\scriptsize{\quad K=18\\\vspace{-0.1cm}\quad$\pgfmathprintnumber[fixed,precision=2]{\alphavalL}^\circ$}};
\DTLcleardb{mydata}
\DTLdeletedb{mydata}

\DTLloaddb{mydata}{\pathPlotDots/matcha_bs10_steps1_cand10_\snr.csv}
\DTLgetvalue{\alphavalH}{mydata}{1}{2}
\pgfmathparse{round(\alphavalH*100)/100}
\edef\alphavalroundedH{\pgfmathresult}
\pgfmathparse{(\alphavalroundedH*1.5 + 1)}
\edef\alphascaleH{\pgfmathresult}
\addplot+[mark=*, color=cOurs10, mark options={scale=\alphascaleH}]
  table[col sep=comma, x=x, y=y] {\pathPlotDots/matcha_bs10_steps1_cand10_\snr.csv} node[pos=0,right, text width=5cm,anchor=south west, yshift=-5pt, xshift=-4pt] {\scriptsize{\quad \method{} B10\\\vspace{-0.1cm}\quad $\pgfmathprintnumber[fixed,precision=2]{\alphavalH}^\circ$}};
\DTLcleardb{mydata}
\DTLdeletedb{mydata}

\DTLloaddb{mydata}{\pathPlotDots/matcha_bs100_steps1_cand10_\snr.csv}
\DTLgetvalue{\alphavalI}{mydata}{1}{2}
\pgfmathparse{round(\alphavalI*100)/100}
\edef\alphavalroundedI{\pgfmathresult}
\pgfmathparse{(\alphavalroundedI*1.5 + 1)}
\edef\alphascaleI{\pgfmathresult}
\addplot+[mark=*, color=cOurs100, mark options={scale=\alphascaleI}]
  table[col sep=comma, x=x, y=y] {\pathPlotDots/matcha_bs100_steps1_cand10_\snr.csv} node[pos=0, above, text width=5cm,anchor=south west, yshift=-5pt, xshift=-40pt] {\scriptsize{\quad \method{} B100\\\vspace{-0.1cm}\quad $\pgfmathprintnumber[fixed,precision=2]{\alphavalI}^\circ$}};
\DTLcleardb{mydata}
\DTLdeletedb{mydata}
        \end{axis}
    \end{tikzpicture}
    
    \caption{Accuracy vs.\ execution time ($L_{\max}=40$).}
        \label{fig:accuracy-v-time}
\end{subfigure}    
\medskip
    \begin{subfigure}[t]{\panelwidth}
    \centering
            \begin{tikzpicture}
         \begin{axis}[
            % ymode=log,
            xlabel={\scriptsize Bandwidth $L_\mathrm{max}$},
            ylabel={\scriptsize 90th percentile error (degrees)},
            ymax=3.5,
            width=\linewidth,
			grid=major, % Display a grid
			grid style={dashed,gray!30}, % Set the style
            height=6cm,
            xmin = 15,
            xtick={40,60,80,100},
            scale only axis,
            trim axis left,
            trim axis right,
        ]
            \newcommand{\snr}{snr0}
            
            \PlotCurveWithLabels{plots/plot1_newlayout/ov1_\snr.csv}{cOv1}{\hspace{0.7cm} K=1}{solid, thick}{above}{0.1}{above}{0pt}{0pt}
           
            \PlotCurveWithLabels{plots/plot1_newlayout/ov3_\snr.csv}{cOv3}{K=3}{solid, thick}{above}{0.25}{above}{0pt}{0pt}
            \PlotCurveWithLabels{plots/plot1_newlayout/ov10_\snr.csv}{cOv10}{K=10}{solid, thick}{above}{0.33}{above}{13pt}{-4pt}
            \PlotCurveWithLabelsSecond{plots/plot1_newlayout/ov18_\snr.csv}{cOv18}{K=18}{solid, thick}{}{0.0}
            \PlotCurveWithLabels{plots/plot1_newlayout/ours_\snr.csv}{cOurs}{\method{}}{solid, thick}{below}{0.75}{below}{0pt}{0pt}
        \end{axis}
    \end{tikzpicture}
   
    \caption{Effect of $L_{\max}$ for varying oversampling $K$.}
    \label{fig:accuracy-v-bandlimit-snr0db}
    \end{subfigure}
    \hfill
    \begin{subfigure}[t]{\panelwidth}
    \centering
    \begin{tikzpicture}
        \begin{axis}[
            xlabel={\scriptsize Number of steps},
            ymin=0.0012,
            ymax=0.5,
            xtick={0,1,2,3,4,5,10},
            xmin=1,
            xmax=10.2,
            width=\linewidth,
            %ticklabel style={font=\scriptsize},
            height=6cm,
            legend style={legend columns=1,at={(0.95,0.95)}},
			grid=major, % Display a grid
			grid style={dashed,gray!30}, % Set the style
            ylabel style={yshift=-0.2cm},
            scale only axis,
            trim axis right,
            trim axis left,
            ytick pos=right,
            yticklabel pos=right,
        ]
            \PlotCurveWithLast{plots/plot3/newton_snr0.csv}{cOurs10}{}{}{solid}{below}{0.8}
            \PlotCurveWithLast{plots/plot3/gd_snr0.csv}{cGD}{}{}{dashed}{above}{0.8}
            \addlegendimage{solid, cOurs10, mark=*}
            \addlegendentry{\scriptsize  Newton's method}
            \addlegendimage{dashed, cGD, mark=*}
            \addlegendentry{\scriptsize Gradient ascent}
        \end{axis}
    \end{tikzpicture}
    \caption{Newton's method vs.\ gradient ascent ($L_{\max}=40$).}
    \label{fig:newton_v_gd}
\end{subfigure}

\caption{Hyperparameter study for \method{} and SOFFT at $\mathrm{SNR}=0\,\mathrm{dB}$. In~(a), dot size encodes alignment error; \method{} achieves $0.03^\circ$ accuracy in seconds, whereas SOFFT at its highest feasible oversampling ($K=18$) requires over a hundred seconds for $0.13^\circ$. In~(b), increasing $L_{\max}$ improves both methods, but the effect is more pronounced for SOFFT; beyond $L_{\max}=100$, SOFFT becomes memory-prohibitive at large~$K$. In~(c), a single Newton step ($M_{\mathrm{iter}} = 1)$ matches the accuracy that gradient ascent reaches only after several iterations, supporting the use of one Newton step per band in~(a) and~(b).}
\end{figure}

\subsection{Nyquist resolution in cryo-ET using STA}\label{sec:exp_STA}

We integrate \method{} into the subtomogram averaging pipeline of RELION-5~\cite{burt2024image} as a replacement for the pose-refinement step performed during \emph{3D auto-refine}. 
All other processing steps, including classification, CTF refinement, Bayesian polishing or tomography refinement, reconstruction, and post-processing, are carried out using RELION's own implementation. 
Because these steps are shared between the baseline and our method, we exclude them from the runtime comparison and report only the cumulative wall-clock time spent on alignment.

We follow the STA tutorial\footnote{\href{https://tomoguide.github.io}{https://tomoguide.github.io}} for the Chlamy dataset (EMPIAR-11830), originally published by~\cite{kelley2026toward}. 
Particles were picked by template matching using pytom-match-pick, following the TomoGuide workflow. 
The RELION baseline refines progressively through three binning levels (bin4, bin2, bin1), reaching \(3.8\,\angstrom\) local resolution at bin1---the Nyquist limit for voxel size \(1.9\,\angstrom\)---under gold-standard FSC at a threshold of \(0.143\). 
The cumulative RELION alignment time across all three stages is approximately \(3.5\)~hours on four RTX\,4090 GPUs. The implementation of \method{} operates on particles extracted as three-dimensional pseudo-subtomograms.

The initial bin4 refinement is performed entirely in RELION, not in \method{}. 
For this bin4 stage, we use the orientations determined by pytom-match-pick as initialization for the local RELION search, as in the tutorial. 
At the bin2 and bin1 stages, the RELION baseline \emph{3D auto-refine} uses local angular searches initialized from the previous refinement stage. 
We replace these bin2 and bin1 rotational refinement stages by \method{}. 
The current implementation of \method{} performs a global rotational search at each replacement stage, rather than a local search around the previous orientation. 
This makes the comparison conservative in RELION's favor, since \method{} solves a larger rotational search problem at bin2 and bin1.

\method{} estimates both rotations and translations using the same half-set split as RELION, with translations recovered by FFT-based correlation; details are given in Appendix~\ref{sec:appendix-6d-search}. 
At each binning level, \method{} performs \(T=3\) outer rotation--translation alternations. 
Within each alternation, we retain \(P=10\) rotation candidates from a global low-bandwidth SOFFT search at angular-degree cutoff \(L_0=30\) with oversampling factor~\(2\), and refine them with \(M_{\mathrm{iter}}=5\) Newton steps through bandwidths \(\{30,40,L_{\max}\}\), where \(L_{\max}=45\) at bin2 and \(L_{\max}=60\) at bin1. 
This is more conservative than the \(1\)--\(2\) Newton steps sufficient in the synthetic setting (Section~\ref{sec:exp-synthetic}), but improves robustness under the noisier conditions of experimental data. 
Shifts are estimated by FFT-based correlation with subpixel refinement~\cite{guizar-sicairos_efficient_2008}.

Our bin1 reconstruction attains \(3.8\,\angstrom \) local resolution, matching the RELION baseline at the Nyquist limit, and \(4.0\,\angstrom\) global resolution under the same FSC criterion. 
The cumulative alignment time drops from approximately \(3.5\)~hours with RELION to roughly \(20\)~minutes with \method{}, a speedup of over \(10\times\) (Figure~\ref{fig:emp_11830_timings}). 
The non-alignment processing steps add a fixed overhead of approximately \(90\)~minutes to both pipelines, so the end-to-end speedup is smaller than the alignment-only speedup.
\begin{figure}
    \centering
     \includegraphics[width=0.8\linewidth]{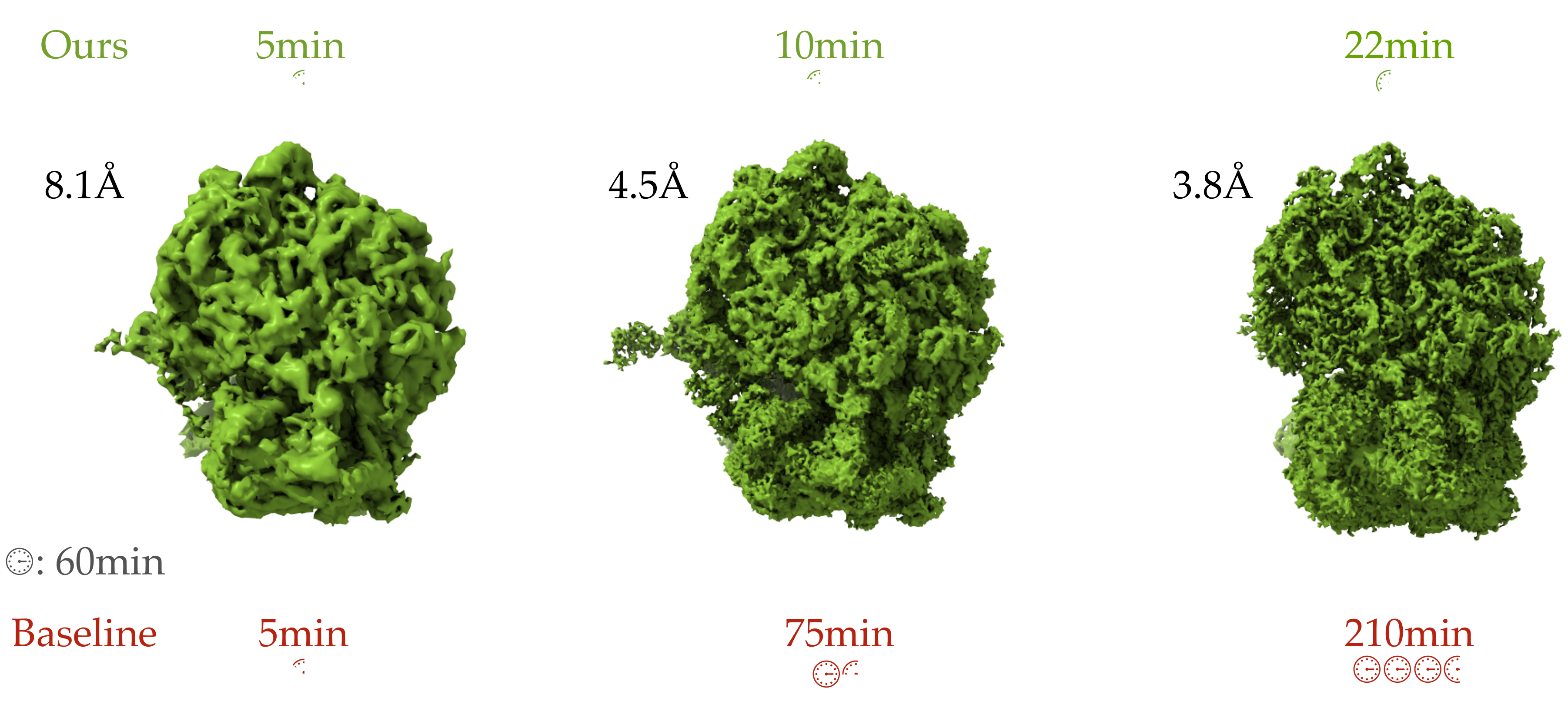}
    \caption{Subtomogram averaging on the Chlamy dataset (EMPIAR-11830) using RELION-5. \method{} replaces \emph{3D auto-refine} at the bin2 and bin1 stages, matching the baseline resolution ($3.8$\,\angstrom{} at Nyquist) while reducing cumulative alignment time by over $10\times$. Intermediate processing steps (CTF-refinement, classification, template updates) are shared between both pipelines and excluded from the reported timings.}
    \label{fig:emp_11830_timings}   
\end{figure}

\newpage
%%%% Conclusion %%%%% 
\section{Conclusion}
We presented \method{}, a frequency-marching algorithm for fast rotational alignment that replaces exhaustive high-resolution search with coarse SOFFT initialization followed by Newton refinement across progressively finer angular bandwidths. 
Algorithmically, the method works because exhaustive search is performed only at low angular bandwidth, while subsequent high-bandwidth refinement is continuous and low-dimensional. 
Theoretically, the frequency-marching analysis shows that continuation through intermediate bandwidths replaces a worst-case final-bandwidth perturbation requirement by an effective bandwidth determined by the sizes of the increments.

On synthetic benchmarks, \method{} achieves sub-degree angular accuracy with one-to-two-order-of-magnitude speedups over exhaustive search in the tested settings. 
Integrated into the RELION-5 subtomogram-averaging pipeline, it reduces alignment runtime by over \(10\times\) while matching Nyquist-limited reconstruction quality, demonstrating that the approach is practical for large-scale cryo-ET datasets.

Several directions remain open. 
First, our convergence analysis is deterministic and assumes spectral-decay and gap conditions; establishing probabilistic guarantees for realistic noisy-particle ensembles would be valuable. 
Second, the current alternating rotation--translation scheme inherits the limitations of block coordinate ascent. 
A joint continuous optimization over the full six-dimensional pose space, exploiting the semidirect-product structure of \(\SE(3)\), could improve both accuracy and convergence speed.

Third, the present rotational objective does not explicitly include particle-specific Fourier-space weighting during the Wigner--\(D\) rotational matching step. 
In RELION's 3D subtomogram workflow, each particle may be associated with a Fourier-space CTF weight volume; during alignment, the reference can be rotated and then weighted in Fourier space before scoring. 
In \method{}, by contrast, the rotational correlation is expanded once in a ball-harmonic/Wigner--\(D\) representation, after which rotations are evaluated implicitly through Wigner--\(D\) matrices. 
This makes rotational scoring efficient, but makes particle-specific, rotation-dependent CTF or missing-information weighting less direct. 
Weighted correlation objectives, including missing-wedge-aware variants, are possible in principle~\cite{chen_fast_2013}, but incorporating them while preserving the efficiency of the precomputed Wigner--\(D\) expansion remains future work.

Finally, in practice the reference template is itself unknown and must be iteratively estimated from the data, and the particle population may contain multiple conformational states. 
The speed of \method{} makes it feasible to evaluate alignments against multiple candidate templates per iteration, opening a path toward faster heterogeneity analysis in subtomogram averaging.
\section*{Acknowledgments}
We thank Ricardo Diogo Righetto for extensive feedback on the manuscript and for helpful discussions on the subtomogram-averaging experiments and implementation details. 
We thank Florent Waltz for help with the ribosome tutorial and related RELION/TomoGuide workflow details.
We thank Lorenzo Baldassari for valuable feedback on the manuscript. 
Valentin Debarnot is supported by the Agence Nationale de la Recherche (ANR) and the Ministère de l’Enseignement Supérieur et de la Recherche. Calculations were partially performed at sciCORE (\href{http://scicore.unibas.ch/}{http://scicore.unibas.ch/}) scientific computing center at University of Basel.
\newpage

\appendix

\section{Ball harmonics, rotational correlation, and radial sparsity}
\label{sec:ball_harmonics_and_sparsity}

The ball harmonics provide an orthonormal, rotation-steerable basis for the function space $L^2(\Bc_3)$ that is naturally ordered by Laplacian eigenvalues. This structure makes the representation particularly convenient for rotational alignment via cross-correlation, and it induces a characteristic \emph{radial sparsity} pattern under the eigenvalue-based truncation scheme described by Kileel et al.~\cite{kileel2025fast}.

In this appendix, we give a more precise definition of the ball harmonics, summarize their relevant properties and provide references for the main results.

\subsection{Ball harmonics expansion and eigenvalue-based truncation}
We let $\Bc_3$ denote the unit ball in $\mathbb{R}^3$ and let $u : \Bc_3 \to \mathbb{R}$ be supported on $\Bc_3$. It admits an expansion of the form
\begin{equation}\label{eq:ball-harmonics-expansion}
u(x) = \sum_{(k,\ell,m)\in\mathcal{I}_\Lambda} \hat u_{k,\ell,m}\,\psi_{k,\ell,m}(x),
\end{equation}
where the basis $\{\psi_{k,\ell,m}\}_{k,\ell,m}$ is known as the ball harmonics basis. These basis functions are separable into radial (indexed by $k$) and angular (indexed by $(\ell,m)$)  factors and form an orthonormal basis of $L^2(\Bc_3)$, see \cite{kileel2025fast} for conventions.
The basis functions $\{\psi_{k,\ell,m}\}_{k,\ell,m}$ are Dirichlet Laplacian eigenfunctions on $\Bc_3$ with eigenfrequency \(\lambda_{\ell k}=j_{\ell,k}\), where \(j_{\ell,k}\) denotes the \(k\)-th positive zero of the spherical Bessel function \(j_\ell\). The corresponding Dirichlet Laplacian eigenvalue is \(\lambda_{\ell k}^2\):
\[
-\Delta \psi_{k,\ell,m} = \lambda_{\ell k}^2\,\psi_{k,\ell,m},
\qquad
\psi_{k,\ell,m}=0 \quad \text{on } \partial \Bc_3 .
\]
In practice, we keep only frequencies $\lambda_{\ell k}$ that are below a certain threshold (frequency budget) $\Lambda>0$, and retain the index set
$$
\mathcal{I}_\Lambda := \{(k,\ell,m): \lambda_{\ell k}\le \Lambda,\ \ell\in\mathbb{Z}_{\ge0},\ m\in\{-\ell,\dots,\ell\},\ k\in\mathbb{Z}_{>0}\}.
$$
For a given angular degree $\ell$, define the retained radial set
$$
K_\ell := \{k\in\mathbb{Z}_{>0}:\lambda_{\ell k}\le \Lambda\}.
$$
The function $(\ell,k) \mapsto \lambda_{\ell k}$ is increasing in both $\ell$ and $k$, so the cardinality $|K_\ell|$ decreases monotonically with $\ell$. 
Low-rank approximation can be employed to reduce the number of coefficients in the set $\mathcal{I}_\Lambda$, see Appendix \ref{app:effective_rank}.

\subsection{Computational complexity}
A key practical property is that the truncated expansion in Equation \eqref{eq:ball-harmonics-expansion} can be computed in nearly-linear time (up to polylogarithmic factors) in the number of voxels.

\begin{lemma}[Ball harmonics transform complexity]
\label{lem:complexity_spherical_harmonic}
Given a volume sampled on an $N\times N\times N$ Cartesian grid, the ball harmonics decomposition up to a bandwidth $\Lambda=\Oc (N)$ can be computed with complexity $\Oc (N^3\log^2 N)$ while achieving numerical accuracy $\varepsilon=\Oc(1/N)$.
\end{lemma}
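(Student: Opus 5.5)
This is essentially a statement about the fast ball-harmonic transform of Kileel et al.~\cite{kileel2025fast}, so I plan to prove it by factoring the transform into the pipeline that work uses and bounding each stage. \textbf{Stage 1} is a Cartesian-to-Fourier step. Because each $\psi_{k,\ell,m}$ is a Dirichlet eigenfunction on $\Bc_3$, the coefficient $\hat u_{k,\ell,m}=\langle u,\psi_{k,\ell,m}\rangle$ can be written as an integral of $\hat u$ against a radial kernel times $\overline{Y_\ell^m}$ in frequency space. So I first compute the Fourier transform of the voxelized volume on a spherical quadrature grid: about $\Oc(N)$ radial nodes and $\Oc(N^2)$ angular nodes on each sphere, for $\Oc(N^3)$ nonuniform points in total. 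I evaluate it with a type-2 NUFFT at cost $\Oc(N^3\log N+N^3\log^3(1/\varepsilon))$. \textbf{Stage 2} is a fast spherical harmonic transform on each of the $\Oc(N)$ radial shells. Fast transforms (Driscoll--Healy/Healy et al.) cost $\Oc(N^2\log^2 N)$ per shell, giving $\Oc(N^3\log^2 N)$ overall; this is the dominant term. \textbf{Stage 3} applies a radial transform for each fixed $(\ell,m)$, mapping the shell values to the coefficients indexed by $k\in K_\ell$. In \cite{kileel2025fast} this is done by interpolation via the Fast Multipole/Chebyshev method, or by a dense application exploiting $\sum_\ell(2\ell+1)|K_\ell|=|\mathcal{I}_\Lambda|=\Oc(\Lambda^3)$. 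The fast variant costs $\Oc(N^3\log N)$.

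Adding the three stages gives $\Oc(N^3\log^2 N)$ once $\Lambda=\Oc(N)$. The Weyl-law count $|\mathcal{I}_\Lambda|\asymp\Lambda^3$ keeps the output size consistent with the $N^3$ input.

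For accuracy, I will take the NUFFT and interpolation tolerances to be $\Oc(1/N)$, which costs only polylogarithmic factors. The dominant error is the discretization error in replacing the continuous inner product with the voxel-grid quadrature, which is first order in the voxel size $1/N$ for functions supported in the ball, giving $\varepsilon=\Oc(1/N)$.

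The main obstacle is Stage 3 together with the error bookkeeping. The radial kernels are spherical Bessel functions with $\ell$-dependent zeros, so a naive evaluation costs $\Oc(N^4)$. I would rely on the fast-interpolation result of \cite{kileel2025fast} for exactly this step rather than re-deriving it, and only check that its tolerance can be set to $1/N$ without exceeding the stated complexity.
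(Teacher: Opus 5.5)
\textbf{The complexity bookkeeping does not close.} By your own accounting, Stage~1 costs $\Oc(N^3\log N+N^3\log^3(1/\varepsilon))$. You then set the NUFFT tolerance to $\Oc(1/N)$, which makes Stage~1 cost $\Oc(N^3\log^3 N)$. So Stage~2 is not the dominant term, and the three stages add up to $\Oc(N^3\log^3 N)$, not $\Oc(N^3\log^2 N)$. The phrase ``costs only polylogarithmic factors'' hides exactly what the lemma asserts, namely the power of the logarithm.

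The paper does not reassemble the pipeline at all. It invokes \cite[Theorem~3.1]{kileel2025fast} for the whole transform, whose cost bound is expressed through $|\log\varepsilon|$. It then notes that $\varepsilon=1/N$ gives $|\log\varepsilon|=\Theta(\log N)$, so that bound becomes $\Oc(N^3\log^2 N)$ when $\Lambda=\Oc(N)$. To repair your argument, you can do one of two things:
\begin{itemize}
\item use that theorem as a black box for the full transform, not only for Stage~3; or
\item replace the generic 3D type-2 NUFFT, with its $\log^3(1/\varepsilon)$ spreading cost per point, by a Cartesian-to-spherical evaluation whose tolerance dependence is at most $N^3\log^2(1/\varepsilon)$.
\end{itemize}
Also, the ``dense application'' you offer as an alternative in Stage~3 costs $N\,|\mathcal{I}_\Lambda|=\Oc(N^4)$, as you concede later. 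It is therefore not a way to meet the bound.

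\textbf{Your accuracy paragraph misreads $\varepsilon$.} In the lemma, as in the paper's proof, $\varepsilon$ is the tolerance of the fast algorithm relative to the exact discrete transform. It is chosen as $1/N$ only so that $|\log\varepsilon|=\Theta(\log N)$. It is not the error between voxel quadrature and the continuous inner product. Your claim that this quadrature error is first order in $1/N$ is also unsupported. No smoothness of $u$ is assumed, and at $\Lambda=\Oc(N)$ the highest retained basis functions oscillate on the voxel scale. There is then no reason for a Riemann-sum error to be $\Oc(1/N)$ uniformly over the coefficients, so that claim should simply be dropped.

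A smaller point: Stage~1 is best justified by the plane-wave expansion. For $u$ supported in $\Bc_3$, it shows that each ball-harmonic coefficient equals, up to a known factor depending on $(\ell,k)$, the $(\ell,m)$ spherical-harmonic coefficient of the Fourier transform of $u$ on the sphere of radius $\lambda_{\ell k}$. This pointwise identity is why Stage~3 is an interpolation to $\ell$-dependent radii rather than an integral against a radial kernel.
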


\begin{proof}
This follows from \cite[Theorem~3.1]{kileel2025fast}. Choosing $\varepsilon=1/N$ gives $|\log \varepsilon|=\Theta(\log N)$, so the stated complexity bound reduces to $\Oc (N^3(\log N)^2)$ for $\Lambda= \Oc (N)$.
\end{proof}

\subsection{Parametrization of the rotation group \texorpdfstring{\(\SO(3)\)}{SO(3)}}\label{app:parametrization}
There are many ways to parametrize $\SO(3)$ \cite{stuelpnagel1964parametrization}, including unit quaternions, the
axis--angle (exponential) map, and various Euler-angle conventions.
In the implementation, we parametrize rotations $g\in\Gc$ by ZYZ Euler-angles
following~\cite{kostelec_ffts_2008},
\begin{equation}\label{eq:Euler}
    g \equiv g(\alpha, \beta, \gamma) = r_z(\alpha) r_y(\beta) r_z(\gamma), 
\end{equation}
where
\begin{equation*}
    r_z(\alpha) = \begin{pmatrix}
        \cos\alpha & - \sin \alpha & 0\\
        \sin\alpha & \cos \alpha & 0\\
        0 & 0 & 1
    \end{pmatrix}, \quad r_y(\beta) = \begin{pmatrix}
        \cos\beta & 0 & \sin \beta\\
        0 & 1 & 0 \\
        -\sin\beta & 0 & \cos \beta
    \end{pmatrix},
\end{equation*}
and $(\alpha,\beta,\gamma)\in [0,2\pi)\times[0,\pi]\times[0,2\pi)$. We adopt the
ZYZ convention because it is the natural coordinate system for the harmonic
analysis on $\SO(3)$ underlying our method: the Wigner-$D$ matrices factor as
$D^\ell_{m m'}(\alpha,\beta,\gamma) = e^{-im\alpha}\, d^\ell_{m m'}(\beta)\, e^{-im'\gamma}$,
so that the $\alpha$ and $\gamma$ dependence enters through simple complex
exponentials and only the $\beta$ dependence requires the (real) Wigner
$d$-functions.

\subsection{\texorpdfstring{Wigner--\(D\) matrices}{Wigner-D matrices}}

\label{app:d-matrix}
An orthogonal basis for $L^2(\SO(3))$ is given by the Wigner--\(D\) matrices. 
As a result, rotation of ball harmonic functions can be expressed as a linear combination of other ball harmonic functions and Wigner--\(D\) matrices with angular indices up to the same order. This property is useful for efficiently evaluating the cross-correlation at many rotations and for computing first- and second-order derivatives.
More precisely, for any $g\in\SO(3)$, we have
\begin{equation*}
\psi_{k,\ell,m}(g^{-1}x)
=
\sum_{m'=-\ell}^{\ell}
D^\ell_{m m'}(g)\,
\psi_{k,\ell,m'}(x), \forall x\in\Bc_3.    
\end{equation*}

As defined in Equation~\ref{eq:Euler}, any rotation $g\in\SO(3)$ can be parametrized by ZYZ Euler-angles $(\alpha,\beta,\gamma)\in [0,2\pi)\times[0,\pi]\times[0,2\pi)$.
For each angular degree $\ell\in\mathbb{Z}_{\ge 0}$, the Wigner--\(D\) matrix
\begin{equation*}
D^\ell(g) \in \mathbb{C}^{(2\ell+1)\times(2\ell+1)}, 
\end{equation*}
has entries indexed by order $m,m' \in \{-\ell,\dots,\ell\}$ and is defined by
\begin{equation}\label{eq:wigner-D}
D^\ell_{m m'}(\alpha,\beta,\gamma)
=
e^{- i m \alpha}\,
d^\ell_{m m'}(\beta)\,
e^{- i m' \gamma},
\end{equation}
where $d^\ell_{m m'}$ are the \emph{Wigner--$d$ functions}. The Wigner--\(D\) functions depend only on the polar angle $\beta$ and admit explicit expressions in terms of finite sums \cite{khersonskii_quantum_1988} and can be evaluated efficiently.

The collection $\{D^\ell_{m m'}\}_{\ell,m,m'}$ forms an orthogonal basis of $L^2(\SO(3))$ with respect to the Haar measure $dg$, normalized so that
\begin{equation}\label{eq:wigner-orthogonality}
\int_{\SO(3)} 
D^\ell_{m m'}(g)\,
\overline{D^{\ell'}_{n n'}(g)}\,
dg
=
\frac{1}{2 \ell + 1}\,
\delta_{\ell\ell'}\,\delta_{mn}\,\delta_{m'n'}.
\end{equation}

The Wigner--\(D\) matrices admit closed-form expressions not only for their values but also for their derivatives with respect to the rotation parameters. In particular, derivatives with respect to the Euler-angles $(\alpha,\beta,\gamma)$ can be written analytically in terms of the same Wigner--\(D\) matrices, see \cite{khersonskii_quantum_1988}. Differentiation with respect to $\alpha$ and $\gamma$ acts diagonally,
$$
\partial_\alpha D^\ell_{m m'} = - i m\, D^\ell_{m m'}, 
\qquad
\partial_\gamma D^\ell_{m m'} = - i m'\, D^\ell_{m m'},
$$
while derivatives with respect to \(\beta\) can be expressed using standard recurrence relations for the Wigner--\(D\) functions, equivalently as finite linear combinations of neighboring Wigner--\(D\) functions with known coefficients.

\subsection{Rotational cross-correlation in the ball harmonics basis}
\label{sec:appendix-rotation-cc}

Rotations act only on the angular index via Wigner--\(D\) matrices. Consequently, the rotational cross-correlation of $f,h\in L^2(\Bc_3)$,
\begin{equation*}
\CC(g) = \int_{\Bc_3} f(x)\,\overline{h(g^{-1}x)}\,dx,
\qquad g\in\SO(3),
\end{equation*}
admits a natural Wigner--\(D\) expansion. In a bandlimited setting with angular cutoff $L$, we write
\begin{equation*}
\CC_L(g) = \sum_{\ell=0}^{L}\ \sum_{m=-\ell}^\ell\ \sum_{m'=-\ell}^\ell
\sigma_{\ell m m'}\, D^\ell_{m m'}(g),
\end{equation*}
where $D^\ell_{m m'}(g)$ are the Wigner--\(D\) matrix entries defined in Equation \eqref{eq:wigner-D}.
The coefficients are determined directly by the ball harmonic coefficients of $f$ and $h$, where reindexing and conjugation are absorbed into the coefficients:
\begin{equation}\label{eq:sigmalmm}
\sigma_{\ell m m'} = \sum_{k\in K_\ell} \hat f_{k,\ell,m}\,\overline{\hat h_{k,\ell,m'}}.
\end{equation}
The exact placement of conjugates and the ordering of \(m,m'\) depend on the chosen convention for the Wigner--\(D\) matrices and the group action. The formula above corresponds to the convention in Equation~\eqref{eq:wigner-D}.
For a fixed rotation $g$, a direct evaluation of Equation~\eqref{eq:cc-L} requires
\[
\sum_{\ell=0}^{L}(2\ell+1)^2 = \frac{(L+1)(2L+1)(2L+3)}{3} = \Oc(L^3)
\]
Wigner--\(D\) terms. 
    
\subsection{Low-rank structure induced by radial sparsity}\label{app:analysis_lowrank}

For each fixed $\ell$, define the $(2\ell+1)\times (2\ell+1)$ matrix
$$
A_\ell = \big[\sigma_{\ell m m'}\big]_{m,m'=-\ell}^{\ell}.
$$
Using \eqref{eq:sigmalmm}, $A_\ell$ admits the factorization
\begin{equation}\label{eq:Alpha_l_factorization}
A_\ell = \sum_{k\in K_\ell} a_{\ell k}\, b_{\ell k}^\ast,
\qquad
(a_{\ell k})_m=\hat f_{k,\ell,m},\quad (b_{\ell k})_m=\hat h_{k,\ell,m},
\end{equation}
and therefore
\begin{equation}\label{eq:rank_bound}
\operatorname{rank}(A_\ell)\le |K_\ell|.
\end{equation}
Since eigenvalue-based truncation enforces $|K_\ell|$ to decrease with $\ell$, the coefficient blocks $A_\ell$ of the correlation become progressively low-rank at higher angular degrees. In practice, the effective rank is often even smaller than $|K_\ell|$ due to additional dependencies among the vectors $\{a_{\ell k}\}_{k\in K_\ell}$ and $\{b_{\ell k}\}_{k\in K_\ell}$.

In particular, $\rank(A_\ell)$ does not depend on the rotation of $h$: 
Let $h_{g^*} = h(R_{g^*}^{-1}x)$ be the volume $h$ rotated by $g^*\in \SO(3)$ with $(b_{\ell k})_m^{(g^*)}$ and $A_\ell^{(g^*)}$ the corresponding factorization of the ball harmonics coefficients of $h_{g^*}$. 

If the rotated coefficients satisfy
\[
b_{\ell k}^{(g^\star)}=D^\ell(g^\star)b_{\ell k},
\]
then
\[
A_\ell^{(g^\star)}
=\sum_{k\in K_\ell} a_{\ell k}\bigl(b_{\ell k}^{(g^\star)}\bigr)^*
=\sum_{k\in K_\ell} a_{\ell k} b_{\ell k}^* D^\ell(g^\star)^*
=A_\ell D^\ell(g^\star)^*.
\]
Since \(D^\ell(g^\star)\) is unitary, multiplication by
\(D^\ell(g^\star)^*\) preserves rank.
\subsection{Bernstein-type inequality}

Bandlimited functions on $\SO(3)$ satisfy Bernstein-type derivative bounds.
\begin{lemma}[Bernstein inequality on $\SO(3)$]
\label{lem:bernstein_SO3}
Let $\Pi_L \eqdef \operatorname{span}\{D^\ell_{m m'}:\ 0\le \ell\le L\}$ be the space of functions band-limited to degree $L$. There exist constants $B_1,B_2>0$, depending only on the metric normalization, such that every $F\in \Pi_L$ satisfies
\[
  \|\nabla F\|_{L^\infty(\SO(3))} \le B_1(1+L)\,\|F\|_{L^\infty(\SO(3))},
  \qquad
  \|\nabla^2 F\|_{L^\infty_{\mathrm{op}}(\SO(3))} \le B_2(1+L)^2\,\|F\|_{L^\infty(\SO(3))}.
\]
\end{lemma}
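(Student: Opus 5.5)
We reduce both inequalities to the classical Bernstein inequality for trigonometric polynomials along one-parameter subgroups, using the bi-invariance of the metric.

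\emph{First-order bound.} Fix $g\in\SO(3)$ and a unit tangent vector $X\in T_g\SO(3)$. Write $X=g\,\xi$ with $\xi\in\mathfrak{so}(3)$; by bi-invariance $\|\xi\|=1$. Consider
\[
\phi(t)\eqdef F(g\exp(t\xi)).
\]
Since $t\mapsto\exp(t\xi)$ is a geodesic through the identity, $\phi'(0)=\inprod{\nabla F(g)}{X}$. Unit-norm $\xi$ are all conjugate to a fixed generator $c\,J_z$, where $c$ depends only on the metric normalization. Hence $\exp(t\xi)=k\,r_z(ct)\,k^{-1}$ for some $k\in\SO(3)$. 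Expanding $F\in\Pi_L$ in Wigner--$D$ entries and using the homomorphism property $D^\ell(ab)=D^\ell(a)D^\ell(b)$ together with $D^\ell_{mm'}(r_z(ct))=e^{-imct}\delta_{mm'}$, we see that $\phi$ is a trigonometric polynomial in $ct$ of degree at most $L$, i.e.\ a trigonometric polynomial in $t$ of frequencies at most $cL$.

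The classical Bernstein inequality for trigonometric polynomials gives
\[
|\phi'(0)|\le cL\,\|\phi\|_\infty\le cL\,\|F\|_{L^\infty(\SO(3))}.
\]
Taking the supremum over unit $X$ yields the gradient bound with $B_1=c$; the $(1+L)$ factor is harmless slack.

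\emph{Second-order bound.} The geodesics through $g$ are exactly $t\mapsto g\exp(t\xi)$, so for the Riemannian Hessian
\[
\inprod{\nabla^2F(g)[X]}{X}=\phi''(0).
\]
Applying Bernstein twice gives $|\phi''(0)|\le c^2L^2\|F\|_\infty$. Because $\nabla^2F(g)$ is a symmetric operator, its operator norm equals $\sup_{\|X\|=1}|\inprod{\nabla^2F(g)[X]}{X}|$. This yields the Hessian bound with $B_2=c^2$.

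The main point needing care is the identification of the Riemannian Hessian of the bi-invariant metric with second derivatives along one-parameter subgroups. It holds because, for a bi-invariant metric, geodesics are left translates of one-parameter subgroups. The only other ingredient to check is the claim that restrictions of $\Pi_L$ to these curves have degree at most $L$, which follows from the conjugation argument above.
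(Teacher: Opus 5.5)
Your proof is correct, but it takes a different route from the paper. The paper argues spectrally. Each $D^\ell_{mm'}$ is a Laplace--Beltrami eigenfunction with eigenvalue $\kappa\ell(\ell+1)$, and Pesenson's Bernstein inequality for band-limited functions on compact homogeneous manifolds (at $p=\infty$) bounds $X_iF$ and $X_iX_jF$ for a left-invariant orthonormal frame. The Riemannian Hessian is then recovered from $\nabla^2F(X_i,X_j)=X_iX_jF-\tfrac12[X_i,X_j]F$, with the operator norm bounded through the matrix entries. You instead restrict $F$ to the geodesics $t\mapsto g\exp(t\xi)$. Transitivity of the adjoint action on the unit sphere of $\mathfrak{so}(3)$ conjugates each such curve into the $z$-axis torus, where $D^\ell$ is diagonal, so everything reduces to the classical one-variable Bernstein inequality. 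Since you only need the diagonal quadratic form $\phi''(0)=\inprod{\nabla^2F(g)[X]}{X}$ and then invoke symmetry, the bracket correction never appears.

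Your route is elementary and self-contained, with no reliance on the $L^\infty$ case of a general theorem. It also yields explicit constants $B_1=c$, $B_2=c^2$ with $L$ in place of $1+L$, and these are sharp (test $\operatorname{Re}D^L_{LL}$ along $r_z$). The paper's route is more general, covering arbitrary compact homogeneous spaces and $L^p$ norms. Yours exploits the group structure specifically: geodesics of a bi-invariant metric are translated one-parameter subgroups, and every unit direction is $\mathrm{Ad}$-conjugate to $cJ_z$.

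One minor point: the last step in each part presumes $F$ is real-valued. This concerns the supremum over real unit $X$, and the identity $\|\nabla^2F(g)\|_{\mathrm{op}}=\sup_{\|X\|=1}|\inprod{\nabla^2F(g)[X]}{X}|$. Real-valuedness holds for $\CC_L$ and $E_j$ in the application. For complex $F\in\Pi_L$ you can treat $\operatorname{Re}F$ and $\operatorname{Im}F$ separately, since their restrictions to these curves are still trigonometric polynomials of degree at most $L$ in $ct$; this loses at most a factor $2$.
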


\begin{proof}
Let $X_1,X_2,X_3$ be an orthonormal basis of left-invariant vector fields for the fixed metric, and let $\mathcal L \eqdef -(X_1^2+X_2^2+X_3^2)$ be the associated Laplace--Beltrami operator. Each Wigner function $D^\ell_{m m'}$ is an eigenfunction of $\mathcal L$ with eigenvalue $\kappa\,\ell(\ell+1)$, where $\kappa>0$ depends only on the metric normalization; hence every $F\in\Pi_L$ has $\mathcal L$-spectrum contained in $[0,\,\kappa L(L+1)]$.

By Pesenson's Bernstein inequality for band-limited functions on compact homogeneous manifolds \cite{GellerPesenson2011,Pesenson2008BernsteinNikolskii}, applied at $p=\infty$, this spectral bound yields
\[
  \max_i \|X_iF\|_\infty \le C_1\,\sqrt{L(L+1)}\,\|F\|_\infty,
  \qquad
  \max_{i,j} \|X_iX_jF\|_\infty \le C_2\,L(L+1)\,\|F\|_\infty,
\]
with $C_1,C_2$ depending only on $\kappa$. In particular, passing to the sup-norm incurs no dimensional loss, so the scaling is the same as in $L^2$.

The gradient bound follows from $|\nabla F|^2=\sum_i |X_iF|^2$ and $\sqrt{L(L+1)}\le 1+L$. For the Hessian, $\nabla^2 F(X_i,X_j)=X_iX_jF-(\nabla_{X_i}X_j)F$; bi-invariance gives $\nabla_{X_i}X_j=\tfrac12[X_i,X_j]$, a fixed linear combination of $X_1,X_2,X_3$, so each entry satisfies
\[
  |\nabla^2 F(X_i,X_j)|
  \le \max_{p,q}\|X_pX_qF\|_\infty + c\,\max_p\|X_pF\|_\infty
  \le C\,L(L+1)\,\|F\|_\infty,
\]
where $c$ is an absolute constant arising from the brackets $[X_i,X_j]$ and $C$ depends only on $\kappa$. Bounding the operator norm of this matrix by its entries and using $L(L+1)\le(1+L)^2$ completes the proof.
\end{proof}
\section{Additional Experiments} 

\subsection{Empirical effective rank of Wigner blocks versus the theoretical bound}
\label{app:effective_rank}

In Appendix~\ref{app:analysis_lowrank} we remark that, for each angular degree $\ell$, the Wigner block
$A_\ell\in\C^{(2\ell+1)\times(2\ell+1)}$ of the rotational correlation admits a factorization such that $\rank(A_\ell)\le |K_\ell|$, where $K_\ell$ is the set of radial indices retained at degree $\ell$ by the eigenvalue cutoff $\Lambda$.
This provides a \emph{theoretical} (worst-case) rank upper bound determined purely by the truncation scheme.
In practice, however, we observe substantially lower rank behavior.

For selected degrees $\ell\in\{20,30,50,80\}$, see Figure~\ref{fig:block_sv}, the singular values exhibit a pronounced spectral gap: only a small number of modes carry significant mass, while the remaining singular values rapidly decay to a near-zero floor.
Motivated by this behavior, we define a gap-based effective rank by
$$
r_{\mathrm{eff}}(\ell)
\eqdef \min\{k: \frac{s_{k+1} (A_\ell)}{s_k(A_\ell)} < \rho\},
$$ 
where $s_1(A_\ell)\ge s_2(A_\ell)\ge\cdots$ are the singular values of $A_\ell$ and $\rho$ is a fixed numerical tolerance.

Figure~\ref{fig:block:rk} shows $r_{\mathrm{eff}}(\ell)$ alongside the upper-bound $|K_\ell|$ across degrees for $\rho = 0.2$.
Across degrees, $r_{\mathrm{eff}}(\ell)$ stays well below $|K_\ell|$ with high-$\ell$ blocks approaching rank one. This indicates that the correlation is effectively supported on a low-dimensional subspace relative to the ambient  $(2\ell+1)\times(2\ell+1)$ block size.

\def\a{0.7}
\def\mz{1.5pt}
\begin{figure}
	\centering
    \begin{subfigure}[t]{0.48\textwidth}
	\begin{tikzpicture}     
		\begin{axis}[
            ymode=log,   
			width=0.99\linewidth, % Scale the plot to \linewidth
			grid=major, % Display a grid
			grid style={dashed,gray!30}, % Set the style
			xlabel= {\scriptsize Singular value index},
			ylabel={\scriptsize Singular value amplitude (log)},
			xtick={0,50,100,150,200,250,300,350},
			xmin = 0, xmax = 162,
			axis x line*=bottom,
			axis y line*=left,
			legend style={at={(1.,0.9)}, legend cell align=left, align=left, draw=none,font=\scriptsize},
			xticklabel style={
				/pgf/number format/fixed,
				/pgf/number format/precision=0,
				/pgf/number format/fixed zerofill
			},
			scaled x ticks=false,
            ticklabel style={font=\scriptsize}
			]
			\addplot[only marks, color=blue, mark=*, mark size = \mz, opacity=\a] table [x expr=\coordindex, y=s_20, col sep=comma] {img/Experiment_rank/singular_values_set.csv};
			\addlegendentry{$l=20$}
			\addplot[only marks, color=orange, mark=*, mark size = \mz, opacity=\a] table [x expr=\coordindex, y=s_30, col sep=comma] {img/Experiment_rank/singular_values_set.csv};
			\addlegendentry{$l=30$}
			\addplot[only marks, color=olive, mark=*, mark size = \mz, opacity=\a] table [x expr=\coordindex, y=s_50, col sep=comma] {img/Experiment_rank/singular_values_set.csv};
			\addlegendentry{$l=50$}
			\addplot[only marks, color=violet, mark=*, mark size = \mz, opacity=\a] table [x expr=\coordindex, y=s_80, col sep=comma] {img/Experiment_rank/singular_values_set.csv};
			\addlegendentry{$l=80$}
		\end{axis}   
	\end{tikzpicture}	
	\caption{Singular values of $A_\ell$ for selected degrees.} 
    \label{fig:block_sv}
    \end{subfigure}
    \hfill
    \begin{subfigure}[t]{0.48\textwidth}
	\begin{tikzpicture}     
		\begin{axis}[
            % ymode=log,   
			width=0.99\linewidth, % Scale the plot to \linewidth
			grid=major, % Display a grid
			grid style={dashed,gray!30}, % Set the style
			xlabel= {\scriptsize Degree $\ell$},
			ylabel={\scriptsize Rank},
			xtick={0,50,100,150,200,250,300,350},
			xmin = 0, xmax = 86,
			ymin = 0,
			axis x line*=bottom,
			axis y line*=left,
			legend style={at={(1.,0.9)}, legend cell align=left, align=left, draw=none,font=\scriptsize},
			xticklabel style={
				/pgf/number format/fixed,
				/pgf/number format/precision=0,
				/pgf/number format/fixed zerofill
			},
			scaled x ticks=false,
            ticklabel style={font=\scriptsize}
			]
			\addplot[only marks, color=blue, mark=*, mark size = \mz, opacity=\a] table [x expr=\coordindex, y=effective_ranks, col sep=comma] {img/Experiment_rank/effective_ranks.csv};
			\addlegendentry{Effective rank}
			\addplot[only marks, color=orange, mark=*, mark size = \mz, opacity=\a] table [x expr=\coordindex, y=theoretical_sizes, col sep=comma] {img/Experiment_rank/theoretical_sizes.csv};
			\addlegendentry{Upper-bound $|K_\ell|$}
		\end{axis}   
	\end{tikzpicture}	
	\caption{Effective rank vs.\ theoretical upper bound $|K_\ell|$.}  \label{fig:block:rk} 
    \end{subfigure}
    \caption{Spectral properties of the Wigner blocks $A_\ell$.
  In~(a), the singular values exhibit a sharp spectral gap at
  each degree, with only a few modes carrying significant energy.
  In~(b), the effective rank (defined by the ratio criterion
  $s_{k+1}/s_k < \rho$ with $\rho = 0.2$) stays well below
  the theoretical bound~$|K_\ell|$ across all degrees, with
  high-$\ell$ blocks approaching rank one. This low-rank structure
  can be exploited to accelerate the evaluation of~$\CC_L$.}
\end{figure}
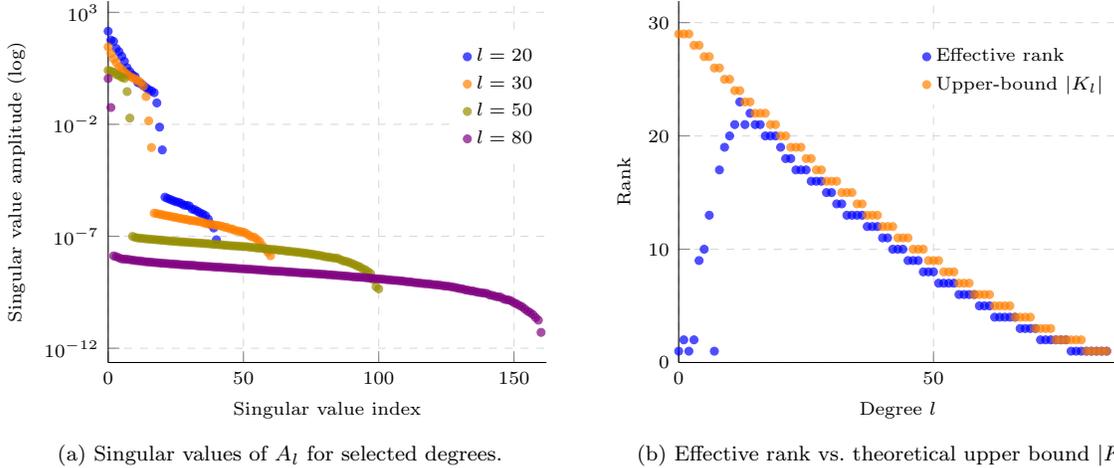

\subsection{Ablation study for other noise regimes}\label{sec:other_noise}
We follow the experimental setting presented in Section \ref{sec:exp-synthetic} and display the results for the extremely noisy setting, SNR$=-20\,\mathrm{dB}$, and the easier setting, SNR$=20\,\mathrm{dB}$.

A key hyperparameter of \method{} and SOFFT is the maximum bandwidth $L_{\mathrm{max}}$ used to evaluate the truncated correlation in Equation~\eqref{eq:cc-L}. This parameter controls the highest angular frequencies included in the alignment score and trades high-resolution information for computational time.

With \method{}, the experiments indicate that accurate alignment can already be achieved at relatively low bandwidth when the signal-to-noise ratio is moderate or high. As shown in Figures~\ref{fig:accuracy-v-bandlimit-snr0db}, the alignment accuracy remains high across the entire range of tested $L_{\mathrm{max}}$, suggesting that low-frequency components alone are sufficient to reliably localize the correct rotation in these regimes. 

At very low SNR, e.g. SNR $=-20$\,dB, Figure~\ref{fig:accuracy-v-bandlimit-snr-20}, the overall accuracy of both methods degrades, reflecting the increased difficulty of the alignment problem. Nevertheless, both methods continue to benefit from increasing the bandwidth. For \method{}, higher values of $L_{\mathrm{max}}$ lead to a gradual improvement in accuracy even in this strongly noisy setting, although the gains become progressively smaller as additional high-frequency components are increasingly dominated by noise. A similar trend is observed for SOFFT.

Overall, these results show that, as expected, increasing the bandwidth is beneficial for both methods across all noise levels. At the same time, \method{} exhibits strong robustness to bandwidth truncation at moderate and high SNR, achieving high accuracy even at low $L_{\mathrm{max}}$, while still being able to exploit additional frequency information when available.

Figure~\ref{fig:accuracy-v-time20db} and Figure~\ref{fig:accuracy-v-time-20db} present the execution times for \method{} and SOFFT with varying oversampling factor $K$ at SNR 20dB and -20dB respectively. In general, execution time for both \method{} and SOFFT does not depend on the noise level.

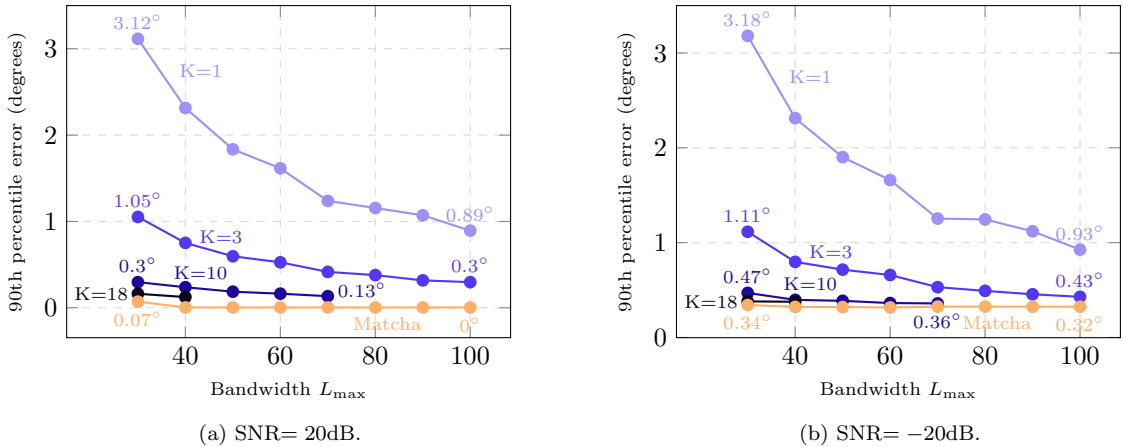
\begin{figure}[htbp]
    \centering
    \begin{subfigure}[b]{0.48\textwidth}
            \begin{tikzpicture}
         \begin{axis}[
            % ymode=log,
            xlabel={\scriptsize Bandwidth $L_\mathrm{max}$},
            ylabel={\scriptsize 90th percentile error (degrees)},
            ymax=3.5,
            width=\textwidth,
			grid=major, % Display a grid
			grid style={dashed,gray!30}, % Set the style
            height=6cm,
            xmin = 15,
            xtick={40,60,80,100},
        ]
            \newcommand{\snr}{snr20}
            % This should plot if the file is readable and has x,y columns:
            \PlotCurveWithLabels{plots/plot1_newlayout/ov1_\snr.csv}{cOv1}{\hspace{0.7cm} K=1}{solid, thick}{above}{0.1}{above}{0pt}{0pt}

            \PlotCurveWithLabels{plots/plot1_newlayout/ov3_\snr.csv}{cOv3}{K=3}{solid, thick}{above}{0.25}{above}{0pt}{0pt}

            \PlotCurveWithLabels{plots/plot1_newlayout/ov10_\snr.csv}{cOv10}{K=10}{solid, thick}{above}{0.33}{above}{13pt}{-4pt}
            \PlotCurveWithLabelsSecond{plots/plot1_newlayout/ov18_\snr.csv}{cOv18}{K=18}{solid, thick}{}{0.0}

            \PlotCurveWithLabels{plots/plot1_newlayout/ours_\snr.csv}{cOurs}{\method{}}{solid, thick}{below}{0.75}{below}{0pt}{0pt}
        \end{axis}
    \end{tikzpicture}
    \caption{SNR$=20$dB.}
    \label{fig:accuracy-v-bandlimit-snr20}
    \end{subfigure}
    \hfill
        \begin{subfigure}[b]{0.48\textwidth}
                    \begin{tikzpicture}
         \begin{axis}[
            xlabel={\scriptsize Bandwidth $L_\mathrm{max}$},
            ylabel={\scriptsize 90th percentile error (degrees)},
            ymax=3.5,
            width=\textwidth,
			grid=major, % Display a grid
			grid style={dashed,gray!30}, % Set the style
            height=6cm,
            xmin = 15,
            xtick={40,60,80,100},
        ]
            \newcommand{\snr}{snr-20}
            \PlotCurveWithLabels{plots/plot1_newlayout/ov1_\snr.csv}{cOv1}{\hspace{0.7cm} K=1}{solid, thick}{above}{0.1}{above}{0pt}{0pt}
            \PlotCurveWithLabels{plots/plot1_newlayout/ov3_\snr.csv}{cOv3}{K=3}{solid, thick}{above}{0.25}{above}{0pt}{0pt}
            \PlotCurveWithLabels{plots/plot1_newlayout/ov10_\snr.csv}{cOv10}{K=10}{solid, thick}{above}{0.33}{above}{0pt}{-13pt}
            \PlotCurveWithLabelsSecond{plots/plot1_newlayout/ov18_\snr.csv}{cOv18}{K=18}{solid, thick}{}{0.0}
            \PlotCurveWithLabels{plots/plot1_newlayout/ours_\snr.csv}{cOurs}{\method{}}{solid, thick}{below}{0.75}{below}{0pt}{0pt}
        \end{axis}
    \end{tikzpicture}
    \caption{SNR$=-20$dB.}
    \label{fig:accuracy-v-bandlimit-snr-20}
    \end{subfigure}
    \caption{Alignment accuracy of \method{} and SOFFT with varying oversampling factor $K$ for different $L_\mathrm{max}$ for different SNR.}
\end{figure}

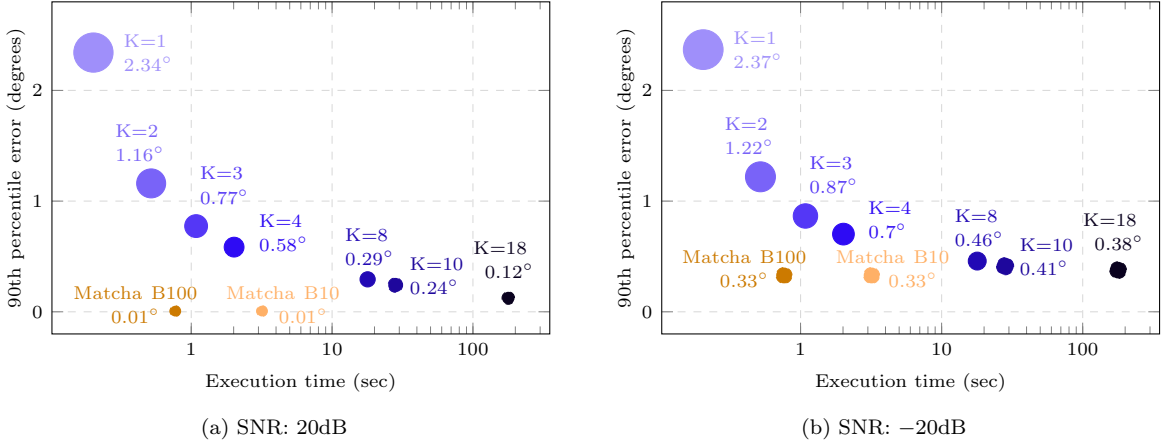
\begin{figure}
    \centering
     \begin{subfigure}[b]{0.48\textwidth}
    \centering
    \begin{tikzpicture}
        \begin{axis}[
            xmode=log,
            xlabel={\scriptsize Execution time (sec)},
            ylabel={\scriptsize90th percentile error (degrees)},
            ymin=-0.2,
            ymax=2.8,
            xtick={0.001, 0.01,0.1,1},
            xticklabels={1,10,100,1000},
            width=8.2cm,
            height=6cm,
            ticklabel style={font=\scriptsize},
            ylabel style={yshift=-0.15cm},
			grid=major, % Display a grid
			grid style={dashed,gray!30}, % Set the style
        ]
        
            \newcommand{\snr}{snr20}
            \newcommand{\pathPlotDots}{plots/plot2_newlayout/}
            \DTLifdbexists{mydata}{\DTLcleardb{mydata}\DTLdeletedb{mydata}}{}
            \DTLloaddb{mydata}{\pathPlotDots/sofft_K1_\snr.csv}
            \DTLgetvalue{\alphavalA}{mydata}{1}{2}
            \pgfmathparse{round(\alphavalA*100)/100}
            \edef\alphavalroundedA{\pgfmathresult}
            \pgfmathparse{(\alphavalroundedA*1.5+0.2)}
            \edef\alphascaleA{\pgfmathresult}
            \addplot+[mark=*, color=cOv1, mark options={scale=\alphascaleA}]
              table[col sep=comma, x=x, y=y] {\pathPlotDots/sofft_K1_\snr.csv} node[pos=0,right, text width=5cm] {\scriptsize{\quad K=1 \\\vspace{-0.1cm}\quad$\pgfmathprintnumber[fixed,precision=2]{\alphavalA}^\circ$}};
            \DTLgetvalue{\xA}{mydata}{1}{1}
            \DTLgetvalue{\yA}{mydata}{1}{2}
           
            \DTLcleardb{mydata}
            \DTLdeletedb{mydata}
            
            \DTLloaddb{mydata}{\pathPlotDots/sofft_K2_\snr.csv}
            \DTLgetvalue{\alphavalB}{mydata}{1}{2}
            \pgfmathparse{round(\alphavalB*100)/100.}
            \edef\alphavalroundedB{\pgfmathresult}
            \pgfmathparse{(\alphavalroundedB*1.5 + 1)}
            \edef\alphascaleB{\pgfmathresult}
            \addplot+[mark=*, color=cOv2, mark options={scale=\alphascaleB}]
              table[col sep=comma, x=x, y=y] {\pathPlotDots/sofft_K2_\snr.csv} node[pos=0,text width=5cm, anchor=south west, xshift= -25pt, yshift=5pt] {\scriptsize{\quad K=2 \\\vspace{-0.1cm}\quad$\pgfmathprintnumber[fixed,precision=2]{\alphavalB}^\circ$}};
            \DTLcleardb{mydata}
            \DTLdeletedb{mydata}
            
            \DTLloaddb{mydata}{\pathPlotDots/sofft_K3_\snr.csv}
            \DTLgetvalue{\alphavalC}{mydata}{1}{2}
            \pgfmathparse{round(\alphavalC*100)/100}
            \edef\alphavalroundedC{\pgfmathresult}
            \pgfmathparse{(\alphavalroundedC*1.5 + 1)}
            \edef\alphascaleC{\pgfmathresult}
            \addplot+[mark=*, color=cOv3, mark options={scale=\alphascaleC}]
              table[col sep=comma, x=x, y=y] {\pathPlotDots/sofft_K3_\snr.csv} node[pos=0,right, text width=5cm, anchor=south west, yshift=5pt, xshift= -10pt] {\scriptsize{\quad K=3 \\\vspace{-0.1cm}\quad$\pgfmathprintnumber[fixed,precision=2]{\alphavalC}^\circ$}};
            \DTLcleardb{mydata}
            \DTLdeletedb{mydata}
            
            \DTLloaddb{mydata}{\pathPlotDots/sofft_K4_\snr.csv}
            \DTLgetvalue{\alphavalD}{mydata}{1}{2}
            \pgfmathparse{round(\alphavalD*100)/100}
            \edef\alphavalroundedD{\pgfmathresult}
            \pgfmathparse{(\alphavalroundedD*1.5+ 1)}
            \edef\alphascaleD{\pgfmathresult}
            \addplot+[mark=*, color=cOv5, mark options={scale=\alphascaleD}]
              table[col sep=comma, x=x, y=y] {\pathPlotDots/sofft_K4_\snr.csv} node[pos=0,right, text width=5cm, anchor=south west, yshift=-5pt, xshift=-2pt]{\scriptsize{\quad K=4 \\\vspace{-0.1cm}\quad$\pgfmathprintnumber[fixed,precision=2]{\alphavalD}^\circ$}};
            \DTLcleardb{mydata}
            \DTLdeletedb{mydata}
            
            \DTLloaddb{mydata}{\pathPlotDots/sofft_K8_\snr.csv}
            \DTLgetvalue{\alphavalE}{mydata}{1}{2}
            \pgfmathparse{round(\alphavalE*100)/100}
            \edef\alphavalroundedE{\pgfmathresult}
            \pgfmathparse{(\alphavalroundedE*1.5 + 1)}
            \edef\alphascaleE{\pgfmathresult}
            \addplot+[mark=*, color=cOv8, mark options={scale=\alphascaleE}]
              table[col sep=comma, x=x, y=y] {\pathPlotDots/sofft_K8_\snr.csv} node[pos=0,right, text width=5cm,anchor=south west, yshift=2pt, xshift=-20pt]{\scriptsize{\quad K=8\\\vspace{-0.1cm}\quad $\pgfmathprintnumber[fixed,precision=2]{\alphavalE}^\circ$}};
            \DTLcleardb{mydata}
            \DTLdeletedb{mydata}
            
            \DTLloaddb{mydata}{\pathPlotDots/sofft_K10_\snr.csv}
            \DTLgetvalue{\alphavalF}{mydata}{1}{2}
            \pgfmathparse{round(\alphavalF*100)/100}
            \edef\alphavalroundedF{\pgfmathresult}
            \pgfmathparse{(\alphavalroundedF*1.5 + 1)}
            \edef\alphascaleF{\pgfmathresult}
            \addplot+[mark=*, color=cOv10, mark options={scale=\alphascaleF}]
              table[col sep=comma, x=x, y=y] {\pathPlotDots/sofft_K10_\snr.csv} node[right, text width=5cm,anchor=south west, yshift=-7pt, xshift=-6pt]{\scriptsize{\quad K=10\\\vspace{-0.1cm}\quad$\pgfmathprintnumber[fixed,precision=2]{\alphavalF}^\circ$}};
            \DTLcleardb{mydata}
            \DTLdeletedb{mydata}
            
            \DTLloaddb{mydata}{\pathPlotDots/sofft_K18_\snr.csv}
            \DTLgetvalue{\alphavalL}{mydata}{1}{2}
            \pgfmathparse{round(\alphavalL*100)/100}
            \edef\alphavalroundedL{\pgfmathresult}
            \pgfmathparse{(\alphavalroundedL*1.5+ 1)}
            \edef\alphascaleL{\pgfmathresult}
            \addplot+[mark=*, color=cOv18, mark options={scale=\alphascaleL}]
              table[col sep=comma, x=x, y=y] {\pathPlotDots/sofft_K18_\snr.csv} node[pos=0,below, text width=5cm,anchor=north, xshift=50pt, yshift=25pt] {\scriptsize{\quad K=18\\\vspace{-0.1cm}\hspace{0.15cm}\quad$\pgfmathprintnumber[fixed,precision=2]{\alphavalL}^\circ$}};
            \DTLcleardb{mydata}
            \DTLdeletedb{mydata}
            
            \DTLloaddb{mydata}{\pathPlotDots/matcha_bs10_steps1_cand10_\snr.csv}
            \DTLgetvalue{\alphavalH}{mydata}{1}{2}
            \pgfmathparse{round(\alphavalH*100)/100}
            \edef\alphavalroundedH{\pgfmathresult}
            \pgfmathparse{(\alphavalroundedH*1.5 + 1)}
            \edef\alphascaleH{\pgfmathresult}
            \addplot+[mark=*, color=cOurs10, mark options={scale=\alphascaleH}]
              table[col sep=comma, x=x, y=y] {\pathPlotDots/matcha_bs10_steps1_cand10_\snr.csv} node[pos=0,right, text width=5cm,anchor=south west, yshift=-8pt, xshift=-25pt] {\scriptsize{\quad \method{} B10\\\vspace{-0.1cm}\hspace{0.7cm}\quad $\pgfmathprintnumber[fixed,precision=2]{\alphavalH}^\circ$}};
            \DTLcleardb{mydata}
            \DTLdeletedb{mydata}
            
            \DTLloaddb{mydata}{\pathPlotDots/matcha_bs100_steps1_cand10_\snr.csv}
            \DTLgetvalue{\alphavalI}{mydata}{1}{2}
            \pgfmathparse{round(\alphavalI*100)/100}
            \edef\alphavalroundedI{\pgfmathresult}
            \pgfmathparse{(\alphavalroundedI*1.5 + 1)}
            \edef\alphascaleI{\pgfmathresult}
            \addplot+[mark=*, color=cOurs100, mark options={scale=\alphascaleI}]
              table[col sep=comma, x=x, y=y] {\pathPlotDots/matcha_bs100_steps1_cand10_\snr.csv} node[pos=0, above, text width=5cm,anchor=south west, yshift=-8pt, xshift=-50pt] {\scriptsize{\quad \method{} B100\\\vspace{-0.1cm}\hspace{0.5cm}\quad $\pgfmathprintnumber[fixed,precision=2]{\alphavalI}^\circ$}};
            \DTLcleardb{mydata}
            \DTLdeletedb{mydata}
        \end{axis}
    \end{tikzpicture}
    \caption{SNR: $20$dB} \label{fig:accuracy-v-time20db}
    \end{subfigure} 
    \hfill
             \begin{subfigure}[b]{0.48\textwidth}
        \centering
        \begin{tikzpicture}
            \begin{axis}[
                xmode=log,
                xlabel={\scriptsize Execution time (sec)},
                ylabel={\scriptsize90th percentile error (degrees)},
                ymin=-0.2,
                ymax=2.8,
                xtick={0.001, 0.01,0.1,1},
                xticklabels={1,10,100,1000},
                width=8.2cm,
                height=6cm,
                ticklabel style={font=\scriptsize},
                ylabel style={yshift=-0.15cm},
			grid=major, % Display a grid
			grid style={dashed,gray!30}, % Set the style
            ]
            
                \newcommand{\snr}{snr-20}
                \newcommand{\pathPlotDots}{plots/plot2_newlayout/}
                \DTLifdbexists{mydata}{\DTLcleardb{mydata}\DTLdeletedb{mydata}}{}
                \DTLloaddb{mydata}{\pathPlotDots/sofft_K1_\snr.csv}
                \DTLgetvalue{\alphavalA}{mydata}{1}{2}
                \pgfmathparse{round(\alphavalA*100)/100}
                \edef\alphavalroundedA{\pgfmathresult}
                \pgfmathparse{(\alphavalroundedA*1.5+0.2)}
                \edef\alphascaleA{\pgfmathresult}
                \addplot+[mark=*, color=cOv1, mark options={scale=\alphascaleA}]
                  table[col sep=comma, x=x, y=y] {\pathPlotDots/sofft_K1_\snr.csv} node[pos=0,right, text width=5cm] {\scriptsize{\quad K=1 \\\vspace{-0.1cm}\quad$\pgfmathprintnumber[fixed,precision=2]{\alphavalA}^\circ$}};
                \DTLgetvalue{\xA}{mydata}{1}{1}
                \DTLgetvalue{\yA}{mydata}{1}{2}
               
                \DTLcleardb{mydata}
                \DTLdeletedb{mydata}
                
                \DTLloaddb{mydata}{\pathPlotDots/sofft_K2_\snr.csv}
                \DTLgetvalue{\alphavalB}{mydata}{1}{2}
                \pgfmathparse{round(\alphavalB*100)/100.}
                \edef\alphavalroundedB{\pgfmathresult}
                \pgfmathparse{(\alphavalroundedB*1.5 + 1)}
                \edef\alphascaleB{\pgfmathresult}
                \addplot+[mark=*, color=cOv2, mark options={scale=\alphascaleB}]
                  table[col sep=comma, x=x, y=y] {\pathPlotDots/sofft_K2_\snr.csv} node[pos=0,text width=5cm, anchor=south west, xshift= -25pt, yshift=5pt] {\scriptsize{\quad K=2 \\\vspace{-0.1cm}\quad$\pgfmathprintnumber[fixed,precision=2]{\alphavalB}^\circ$}};
                \DTLcleardb{mydata}
                \DTLdeletedb{mydata}
                
                \DTLloaddb{mydata}{\pathPlotDots/sofft_K3_\snr.csv}
                \DTLgetvalue{\alphavalC}{mydata}{1}{2}
                \pgfmathparse{round(\alphavalC*100)/100}
                \edef\alphavalroundedC{\pgfmathresult}
                \pgfmathparse{(\alphavalroundedC*1.5 + 1)}
                \edef\alphascaleC{\pgfmathresult}
                \addplot+[mark=*, color=cOv3, mark options={scale=\alphascaleC}]
                  table[col sep=comma, x=x, y=y] {\pathPlotDots/sofft_K3_\snr.csv} node[pos=0,right, text width=5cm, anchor=south west, yshift=5pt, xshift= -10pt] {\scriptsize{\quad K=3\\\vspace{-0.1cm}\quad$\pgfmathprintnumber[fixed,precision=2]{\alphavalC}^\circ$}};
                \DTLcleardb{mydata}
                \DTLdeletedb{mydata}
                
                \DTLloaddb{mydata}{\pathPlotDots/sofft_K4_\snr.csv}
                \DTLgetvalue{\alphavalD}{mydata}{1}{2}
                \pgfmathparse{round(\alphavalD*100)/100}
                \edef\alphavalroundedD{\pgfmathresult}
                \pgfmathparse{(\alphavalroundedD*1.5+ 1)}
                \edef\alphascaleD{\pgfmathresult}
                \addplot+[mark=*, color=cOv5, mark options={scale=\alphascaleD}]
                  table[col sep=comma, x=x, y=y] {\pathPlotDots/sofft_K4_\snr.csv} node[pos=0,right, text width=5cm, anchor=south west, yshift=-5pt, xshift=-2pt]{\scriptsize{\quad K=4\\\vspace{-0.1cm}\quad$\pgfmathprintnumber[fixed,precision=2]{\alphavalD}^\circ$}};
                \DTLcleardb{mydata}
                \DTLdeletedb{mydata}
                
                \DTLloaddb{mydata}{\pathPlotDots/sofft_K8_\snr.csv}
                \DTLgetvalue{\alphavalE}{mydata}{1}{2}
                \pgfmathparse{round(\alphavalE*100)/100}
                \edef\alphavalroundedE{\pgfmathresult}
                \pgfmathparse{(\alphavalroundedE*1.5 + 1)}
                \edef\alphascaleE{\pgfmathresult}
                \addplot+[mark=*, color=cOv8, mark options={scale=\alphascaleE}]
                  table[col sep=comma, x=x, y=y] {\pathPlotDots/sofft_K8_\snr.csv} node[pos=0,right, text width=5cm,anchor=south west, yshift=2pt, xshift=-20pt]{\scriptsize{\quad K=8\\\vspace{-0.1cm}\quad $\pgfmathprintnumber[fixed,precision=2]{\alphavalE}^\circ$}};
                \DTLcleardb{mydata}
                \DTLdeletedb{mydata}
                
                \DTLloaddb{mydata}{\pathPlotDots/sofft_K10_\snr.csv}
                \DTLgetvalue{\alphavalF}{mydata}{1}{2}
                \pgfmathparse{round(\alphavalF*100)/100}
                \edef\alphavalroundedF{\pgfmathresult}
                \pgfmathparse{(\alphavalroundedF*1.5 + 1)}
                \edef\alphascaleF{\pgfmathresult}
                \addplot+[mark=*, color=cOv10, mark options={scale=\alphascaleF}]
                  table[col sep=comma, x=x, y=y] {\pathPlotDots/sofft_K10_\snr.csv} node[right, text width=5cm,anchor=south west, yshift=-7pt, xshift=-6pt]{\scriptsize{\quad K=10\\\vspace{-0.1cm}\quad$\pgfmathprintnumber[fixed,precision=2]{\alphavalF}^\circ$}};
                \DTLcleardb{mydata}
                \DTLdeletedb{mydata}
                
                \DTLloaddb{mydata}{\pathPlotDots/sofft_K18_\snr.csv}
                \DTLgetvalue{\alphavalL}{mydata}{1}{2}
                \pgfmathparse{round(\alphavalL*100)/100}
                \edef\alphavalroundedL{\pgfmathresult}
                \pgfmathparse{(\alphavalroundedL*1.5+ 1)}
                \edef\alphascaleL{\pgfmathresult}
                \addplot+[mark=*, color=cOv18, mark options={scale=\alphascaleL}]
                  table[col sep=comma, x=x, y=y] {\pathPlotDots/sofft_K18_\snr.csv} node[pos=0,below, text width=5cm,anchor=north, xshift=50pt, yshift=25pt] {\scriptsize{\quad K=18\\\vspace{-0.1cm}\hspace{0.15cm}\quad$\pgfmathprintnumber[fixed,precision=2]{\alphavalL}^\circ$}};
                \DTLcleardb{mydata}
                \DTLdeletedb{mydata}
                
                \DTLloaddb{mydata}{\pathPlotDots/matcha_bs10_steps1_cand10_\snr.csv}
                \DTLgetvalue{\alphavalH}{mydata}{1}{2}
                \pgfmathparse{round(\alphavalH*100)/100}
                \edef\alphavalroundedH{\pgfmathresult}
                \pgfmathparse{(\alphavalroundedH*1.5 + 1)}
                \edef\alphascaleH{\pgfmathresult}
                \addplot+[mark=*, color=cOurs10, mark options={scale=\alphascaleH}]
                  table[col sep=comma, x=x, y=y] {\pathPlotDots/matcha_bs10_steps1_cand10_\snr.csv} node[pos=0,right, text width=5cm,anchor=south west, yshift=-8pt, xshift=-25pt] {\scriptsize{\quad \method{} B10\\\vspace{-0.1cm}\hspace{0.7cm}\quad $\pgfmathprintnumber[fixed,precision=2]{\alphavalH}^\circ$}};
                \DTLcleardb{mydata}
                \DTLdeletedb{mydata}
                
                \DTLloaddb{mydata}{\pathPlotDots/matcha_bs100_steps1_cand10_\snr.csv}
                \DTLgetvalue{\alphavalI}{mydata}{1}{2}
                \pgfmathparse{round(\alphavalI*100)/100}
                \edef\alphavalroundedI{\pgfmathresult}
                \pgfmathparse{(\alphavalroundedI*1.5 + 1)}
                \edef\alphascaleI{\pgfmathresult}
                \addplot+[mark=*, color=cOurs100, mark options={scale=\alphascaleI}]
                  table[col sep=comma, x=x, y=y] {\pathPlotDots/matcha_bs100_steps1_cand10_\snr.csv} node[pos=0, above, text width=5cm,anchor=south west, yshift=-8pt, xshift=-50pt] {\scriptsize{\quad \method{} B100\\\vspace{-0.1cm}\hspace{0.5cm}\quad $\pgfmathprintnumber[fixed,precision=2]{\alphavalI}^\circ$}};
                \DTLcleardb{mydata}
                \DTLdeletedb{mydata}
            \end{axis}
        \end{tikzpicture}
        \caption{SNR: $-20$dB} \label{fig:accuracy-v-time-20db}
    \end{subfigure} 
    \caption{Alignment accuracy of \method{} and SOFFT versus execution time for $L_{\mathrm{max}} = 40$ and varying SNR. Large dot sizes correspond to larger alignment error. \method{} is one order of magnitude more precise and several orders of magnitude faster than exhaustive search SOFFT.}

\end{figure}

\section{Alternating search over rotations and translations}
\label{sec:appendix-6d-search}

In subtomogram alignment, the relative pose between two volumes is typically described by a rotation and a translation.
Given two volumetric densities $f,h\in L^2(\Bc_3)$, we consider the \emph{6-dimensional alignment} problem
\begin{equation}
\label{eq:6d_problem}
(\hat\tau,\hat g)\in\argmax_{\tau\in\mathbb{R}^3,\ g\in\SO(3)}\ \CC(\tau,g),
\qquad 
\CC(\tau,g)\eqdef \big\langle f,g\circ \mathcal{S}_\tau h\big\rangle,
\end{equation}
where $(\mathcal{S}_\tau h)(x)=h(x-\tau)$ denotes the translation operator.

Exhaustively discretizing the six-dimensional pose space defined in Problem \eqref{eq:6d_problem} is computationally prohibitive.
A common approach is to alternate between optimization over shifts and over the rotations, e.g. \cite{chen_fast_2013}.
For fixed shift $\tau$, optimization over $g$ can be performed efficiently using Algorithm \ref{algo:coarse-to-fine}.
For fixed rotation $g$, the optimal translation can be estimated efficiently on a fine grid by evaluating the cross-correlation with 3D FFT.

More precisely, we start from zero shift and estimate the rotation using Algorithm \ref{algo:coarse-to-fine}.
Then, given the estimated rotation, we estimate the translation of the rotated volume via a 3D FFT-based correlation.
This alternating process is repeated for a small number of iterations, e.g. $T=3$ in the experiment of Section \ref{sec:exp_STA}.

\paragraph{Remarks.}
(i) Each translation update costs $\Oc (N^3\log N)$ on an $N\times N\times N$ grid (or less if restricted to a bounded search window), and is typically cheap compared to the rotation refinement at higher angular cutoffs.
(ii) The outer loop is a block coordinate ascent method and is not guaranteed to find the global maximizer of the 6-dimensional objective; nevertheless, empirically, it has been employed successfully and is known to be robust if relatively accurate initial translation is provided.
(iii) If desired, multi-resolution translation refinement can be employed to achieve sub-pixel accuracy, for example using the efficient upsampled phase-correlation scheme of \cite{guizar-sicairos_efficient_2008}.
(iv) Translational search can be restricted to a local neighborhood if an accurate initial translation is available; as is the case for the hierarchical binning refinement data typically employed in STA, see Section~\ref{sec:exp_STA}.  

\bibliographystyle{unsrt}
\bibliography{refs}

\end{document}

%% file: preambule.tex
\usepackage{amsmath}

\usepackage{amsthm}
\usepackage{amssymb}
\usepackage{amsfonts}
\usepackage{dsfont}
\usepackage{accents}
\usepackage{bm}
\usepackage{hyperref}
\usepackage{bbm}
\usepackage{mathtools}

\usepackage{tabularx} % Better tables
\usepackage{caption}
\usepackage{multirow}
\usepackage{booktabs}
\usepackage{graphicx}
\graphicspath{{Figures/}}
 \usepackage{subcaption}
 \usepackage{tcolorbox}
 \usepackage{wasysym}
  \usepackage{longtable}
\usepackage{tikz,pgfplots}
\pgfplotsset{compat=newest}
\usepgfplotslibrary{groupplots}
\usetikzlibrary{intersections,calc,arrows,matrix,spy}
\usepackage{color}
\definecolor{darkred}{rgb}{0.6,0,0}
\definecolor{darkgreen}{rgb}{0.19,0.44,0.09}
\definecolor{darkblue}{rgb}{0,0,0.5}
\definecolor{SkyBlue}{rgb}{0.53, 0.81, 0.92}
\usepackage{algorithm}
\usepackage{algpseudocode}
\usepackage[squaren,Gray]{SIunits}
\usepackage[normalem]{ulem}
\usepackage{enumitem}

\usepackage{pgfplotstable}
\usepackage{pgfplots}
\pgfplotsset{compat=1.18}
\usepgfplotslibrary{groupplots}
\pgfplotstableset{col sep=comma}

\usepackage{datatool}
\newcommand{\SE}{\mathrm{SE}}

%% file: commands.tex
\def\argmax{\mathop{\mathrm{argmax}}}

\def\R{\mathbb{R}}          									 	          % Reals
\def\C{\mathbb{C}}          									 	          % Reals
\def\SO{\mathrm{SO}}
\newcommand{\Bc}{\mathcal{B}}

\newcommand{\Gc}{\mathcal{G}}

\newcommand{\Oc}{\mathcal{O}}

\usepackage{tcolorbox}
\usepackage{mdframed}
\usepackage{lipsum}

\newtheorem{lemma}{Lemma}[section]
\newtheorem{theorem}[lemma]{Theorem}
\newtheorem{proposition}[lemma]{Proposition}
\newtheorem{corollary}[lemma]{Corollary}
\newtheorem{definition}{Definition}[section]

\theoremstyle{definition}
\newtheorem{assumption}{Assumption}[section]
\newcommand{\eqdef}{\ensuremath{\stackrel{\mbox{\upshape\tiny def.}}{=}}}

\def\rank{\mathrm{rank}}

\def\1{1\!\!1}

\def\Lso3{L^2(SO(3))}
\def\so3{SO(3)}
\def\CC{\mathrm{C}}

\newcommand{\norm}[1]{\left\lVert#1\right\rVert}

\newcommand{\PlotCurveWithLast}[7]{%
  \DTLloaddb{tmpdb}{#1}%
  \def\lasty{0}%
  \DTLforeach{tmpdb}{\yy=y}{\xdef\lasty{\yy}}%

  \pgfmathparse{round(\lasty*100)/100}%
  \edef\lastyround{\pgfmathresult}%

  \pgfmathparse{max(0.7, min(2.0, 1.0 + 0.05*\lastyround))}%
  \edef\markscale{\pgfmathresult}%

 \addplot+[
  mark=*,
  forget plot,
  color=#2,
  #5,
  mark options={scale=\markscale}
]
table[col sep=comma, x=x, y=y] {#1}
node[pos=#7, left, #6] {\scriptsize #4};

  \DTLcleardb{tmpdb}\DTLdeletedb{tmpdb}%
}

\newcommand{\PlotCurveWithLabels}[9]{%
  \pgfplotstableread[col sep=comma]{#1}\PlotTable%
  \pgfplotstablegetrowsof{\PlotTable}%
  \pgfmathtruncatemacro{\LastRow}{\pgfplotsretval-1}%
  \pgfplotstablegetelem{0}{y}\of{\PlotTable}\edef\FirstYRaw{\pgfplotsretval}%
  \pgfplotstablegetelem{\LastRow}{y}\of{\PlotTable}\edef\LastYRaw{\pgfplotsretval}%
  \pgfmathprintnumberto[fixed,precision=2]{\FirstYRaw}{\FirstYText}%
  \pgfmathprintnumberto[fixed,precision=2]{\LastYRaw}{\LastYText}%

  \edef\PlotCurveTemp{%
    \noexpand\addplot+[
      mark=*,
      color=#2,
      #4,
      mark options={draw=#2, fill=#2},
      forget plot
    ]
    \noexpand table[col sep=comma, x=x, y=y] {#1}
    \noexpand node[pos=#6, left, #5]{\noexpand\textnormal{\noexpand\scriptsize \unexpanded{#3}}}
    \noexpand node[pos=0, #7]{\noexpand\scriptsize \FirstYText$^\circ$}
    \noexpand node[pos=1, #7, xshift=#8, yshift=#9]{\noexpand\scriptsize \LastYText$^\circ$};%
  }%
  \PlotCurveTemp%
}

\newcommand{\PlotCurveWithLabelsSecond}[6]{%
  \pgfplotstableread[col sep=comma]{#1}\PlotTable%
  \pgfplotstablegetrowsof{\PlotTable}%
  \pgfmathtruncatemacro{\LastRow}{\pgfplotsretval-1}%
  \pgfplotstablegetelem{0}{y}\of{\PlotTable}\edef\FirstYRaw{\pgfplotsretval}%
  \pgfplotstablegetelem{\LastRow}{y}\of{\PlotTable}\edef\LastYRaw{\pgfplotsretval}%
  \pgfmathprintnumberto[fixed,precision=2]{\FirstYRaw}{\FirstYText}%
  \pgfmathprintnumberto[fixed,precision=2]{\LastYRaw}{\LastYText}%

  \edef\PlotCurveTemp{%
    \noexpand\addplot+[
      mark=*,
      color=#2,
      #4,
      mark options={draw=#2, fill=#2},
      forget plot
    ]
    \noexpand table[col sep=comma, x=x, y=y] {#1}
    \noexpand node[pos=#6, left, #5]{\noexpand\textnormal{\noexpand\scriptsize \unexpanded{#3}}};
  }%
  \PlotCurveTemp%
}

\newcommand{\DrawSizedLabeledPoints}[3]{%
  \pgfplotstableread[col sep=comma]{#1}\datatable
  \pgfplotstableforeachrow{\datatable}{%
    \pgfplotstablegetelem{\pgfplotstablerow}{x}\of{\datatable}\let\x\pgfplotsretval
    \pgfplotstablegetelem{\pgfplotstablerow}{y}\of{\datatable}\let\y\pgfplotsretval
    \pgfplotstablegetelem{\pgfplotstablerow}{msize}\of{\datatable}\let\r\pgfplotsretval
    \pgfplotstablegetelem{\pgfplotstablerow}{ptlabel}\of{\datatable}\let\lbl\pgfplotsretval
    \filldraw[draw=#2, fill=#2] (axis cs:\x,\y) circle[radius=\r pt];
    \node[font=\small, #3] at (axis cs:\x,\y) {\lbl};
  }%
}

\definecolor{cOv1}{HTML}{9F8FFA}
\definecolor{cOv2}{HTML}{7963F8}
\definecolor{cOv3}{HTML}{5437F6}
\definecolor{cOv5}{HTML}{2E0BF4}
\definecolor{cOv8}{HTML}{2209B5}
\definecolor{cOv10}{HTML}{1E079C}
\definecolor{cOv18}{HTML}{0A021F}
\definecolor{cOurs}{HTML}{FFB066}

\definecolor{cOv20}{HTML}{9ECAE1}
\definecolor{cOv30}{HTML}{C6DBEF}
\definecolor{cOurs10}{HTML}{FFB066}
\definecolor{cOurs100}{HTML}{CC7A00}
\definecolor{cGD}{HTML}{3A7F6F}